\numberwithin{equation}{section}
\theoremstyle{plain}
\newtheorem{lemma}{Lemma}
\newtheorem{theorem}{Theorem}
\begin{document}

\begin{frontmatter}

\title{Weak Signal Identification and Inference in Penalized Model Selection}
\runtitle{Weak Signal Identification and Inference in Model Selection}


\begin{aug}
\author{\fnms{Peibei} \snm{Shi}\thanksref{t1,m1}\ead[label=e1]{pshi@umich.edu}}
\and
\author{\fnms{Annie} \snm{Qu}\thanksref{t1,m2}\ead[label=e2]{anniequ@illinois.edu}\corref{}}

\thankstext{t1}{Supported by NSF Grants DMS-13-08227 and DMS-14-15308.}
\runauthor{P. Shi and A. Qu}

\affiliation{University of Michigan\thanksmark{m1}, Univeristy of Illinois at Urbana-Champaign\thanksmark{m2}}
\address{P. Shi\\
Department of Biostatistics\\
University of Michigan\\
Ann Arbor, michigan, 48109\\
USA\\
\printead{e1}\\
\phantom{E-mail:\ }}

\address{A. Qu\\
Department of Statistics\\
University of Illinois at Urbana-Champaign\\
Champaign, illinois, 61820\\
USA\\
\printead{e2}\\
\phantom{E-mail:\ }}
\end{aug}

\begin{abstract}
Weak signal identification and inference are very important in the area of penalized model selection, yet they are under-developed and not well-studied.  Existing inference procedures for  penalized estimators are mainly focused on strong signals. In this paper, we propose an identification procedure for weak signals in finite samples,  and provide  a transition phase in-between noise and strong signal strengths. We also introduce a  new two-step inferential method  to construct  better confidence intervals  for  the identified weak signals. Our theory development assumes  that variables are   orthogonally designed.  
Both theory and numerical studies indicate that the proposed method  leads to better confidence coverage  for weak signals, compared with those using asymptotic inference. In addition, the proposed  method  outperforms the  perturbation  and bootstrap resampling approaches.  We  illustrate our method  for  HIV antiretroviral drug susceptibility data to identify genetic mutations associated with HIV drug resistance.
\end{abstract}


\begin{keyword}
\kwd{model selection}
\kwd{weak signal}
\kwd{finite sample inference}
\kwd{adaptive Lasso.}
\end{keyword}

\end{frontmatter}

\section{Introduction}

Penalized model selection methods are developed to select variables and estimate
coefficients simultaneously, which is extremely useful in variable selection if the dimension of predictors is large.
Some most popular model selection methods  include  Lasso (\cite{Tibshirani:1996}), SCAD (\cite{Fan:2001}), adaptive Lasso (\cite{Zou:2006}), MCP (\cite{Zhang:2010}) and the truncated-$L_1$ penalty method (\cite{Shen:2012}). Asymptotic properties have been established for desirable  penalized estimators  such as  unbiasedness, sparsity and the oracle property.
However, established  asymptotic theory mainly targets strong-signal  coefficient estimators. When signal strength is weak, existing penalized methods are more likely to shrink the coefficient estimator to be 0.   For finite samples, the inference of the weak signals is still lacking in the current literature.

In general, identification and inference for weak signal  coefficients play an important role in scientific discovery. A more extreme argument is that all useful signals are weak (\cite{Donoho:2008}), where  each individual weak signal  might not contribute significantly to a model's prediction,  but the weak signals combined together could  have significant  influence to predict a model.  In addition, even though some variables do not have strong signal strength, they might still need to be included in the model by design or by scientific importance.

 The estimation of the distribution for the penalized estimator in finite samples is quite challenging when the true coefficients are small. Standard bootstrap methods are not applicable when the  parameter is close to zero (\cite{Andrews:2000} and \cite{Knight:2000}). Recently, \cite{Potscher:2009} and \cite{Potscher:2010}
show that the  distribution of penalized estimators such as Lasso-type  estimators  are highly non-normal in finite samples.  They  also indicate that  the distribution of the penalized estimator relies on the true parameter, and therefore is hard to estimate if the true information is unknown.
Their findings  confirm that even if a weak signal is selected in the model selection procedure,  inference of weak-signal parameters in finite samples is not valid based on the asymptotic theory.

Studies  on  weak signal identification and inference  are quite limited. Among these few studies,  \cite{Jin:2014} propose a graphlet screening method in high-dimensional variable selection, where  all the useful features are assumed to be  rare and weak. Their work mainly focuses  on signal detection, but not on  parameter inference. \cite{Zhang:2014} develop  a projection approach to project a high-dimensional model to a low-dimensional problem and construct confidence intervals. However, their  inference method is not for the penalized estimator. The most recent related work  is by \cite{Minnier:2011},  where they propose  a perturbation resampling method to draw inference for regularized estimators. However,  their approach is more suitable for  relatively strong signal rather than  weak signal inference.

In this paper, we investigate finite sample behavior for weak signal inference. Mainly we  propose an identification
procedure for weak signals,  and  provide a weak signal interval in-between noise and
strong signal strengths, where the weak signal's range is defined based on the signal's detectability under the  penalized model selection framework.
In addition, we propose a new two-step inferential method to  construct better inference
for the weak signals. In theory, we show that our two-step procedure guarantees that the confidence interval reaches an accurate  coverage rate under regularity conditions.   Our numerical studies also confirm  that the proposed method leads
to better confidence coverage for weak signals, compared to existing methods  based on  asymptotic inference,
perturbation methods   and bootstrap resampling approaches (\cite{Efron:2014} and \cite{Efron:1994}). Note that our method and theory are developed under the  orthogonal design assumption.

Our paper is organized as follows. In Section \ref{sec:bg},  we introduce  the general framework for penalized model selection. In Section \ref{sec:wss},  we propose weak signal definition and identification. The two-step inference procedure and its theoretical property for  finite samples  are illustrated  in Section \ref{sec:method}.  In Section \ref{sec:example},  we evaluate finite sample performance of the proposed  method and compare it to other available approaches, and   apply these methods for  an HIV drug resistance data example. The last section provides a brief summary and discussion.


\section{Penalized least square method}
\label{sec:bg}
 We  consider a  linear regression model,
\[
\bm{y}=\bm{X}\bm{\theta}+\bm{\varepsilon},
\]
where $\bm{y}=(y_1,\cdots,y_n)^T$,  $\bm{X}=(\bm{X}_1,\cdots,\bm{X}_p)$ is a $n\times p$ design matrix with  $p<n$,
$\bm{\theta}=(\theta_1,\theta_2,\cdots,\theta_p)^T$, and  $\bm{\varepsilon} \sim N(\bm{0},\sigma^2 \bm{I_n}).$  Throughout the entire paper we assume that all covariates are standardized with $\bm{X_j^TX_j}=n$ for $j=1,\cdots,p$.

The penalized least square estimator   
is the minimizer of the penalized least square function: 
\begin{equation}
\label{obj}
L(\bm{\theta}) = \frac{1}{2n}\Vert\bm{y}-\bm{X}\bm{\theta}\Vert^2+\sum_{j=1}^{p}p_\lambda(\vert\theta_j\vert),
\end{equation}
 where  $\Vert \cdot \Vert$ is the Euclidean norm and $p_\lambda(\cdot)$ is a  penalty function controlled by a  tuning parameter $\lambda.$  For example, the  adaptive Lasso penalty  proposed by  \cite{Zou:2006} has the following form:
\begin{equation*}
p_{ALASSO,\lambda}(\theta)=\lambda \frac{\vert \theta \vert}{\vert \hat{\theta}^{LS}\vert},
\end{equation*}
where  $\theta$ is  any component of $\bm{\theta}$,  and $\hat{\theta}^{LS}$ is the least-square estimator of $\theta$.
The penalized least square estimator  $\bm{\widehat{\theta}}$ is obtained by minimizing  (\ref{obj})  given a   $\lambda$,  where the best $\lambda$ can  be selected through $k$-fold cross validation, generalized cross-validation (GCV) (\cite{Fan:2001}) or the Bayesian information criterion (BIC)  (\cite{Wang:2007a}).

In this paper, we mainly focus on  the adaptive Lasso estimator as an  illustration for  penalized estimators.  Our method, however, is also applicable for other appropriate penalized estimators.
Under the orthogonal designed matrix $\bm{X}$,  the adaptive Lasso estimator has an  explicit  expression:
\begin{equation}
\label{alasso}
\hat{\theta}_{ALASSO}=\left(\vert \hat{\theta}^{LS} \vert-\frac{\lambda}{\vert \hat{\theta}^{LS} \vert}\right)_{+}\mbox{sgn}(\hat{\theta}^{LS}).
\end{equation}

Assume $\mathcal{A}= \{j: \theta_j \neq 0\}$, $\mathcal{A}^c = \{j: \theta_j = 0\}$, $\mathcal{A}_n = \{j: \hat{\theta}_j \neq 0\}$, $\mathcal{A}^c_n = \{j: \hat{\theta}_j = 0\},$ where $\widehat{\bm{\theta}}$ denotes the penalized estimation. If  the tuning parameter $\lambda_n$ satisfies  conditions of $\sqrt{n}\lambda_n \rightarrow 0, n\lambda_n \rightarrow \infty$,  the  adaptive Lasso estimator 
 has  oracle properties such that   $\mathcal{A}_n = \mathcal{A}$ with probability tending to 1 as $n$ goes to infinity.  This indicates that the  adaptive Lasso  is able to  successfully classify model parameters into two groups, $\mathcal{A}$ and $\mathcal{A}^c$, if the  sample size is large enough.  An underlying sufficient condition for such perfect  separation asymptotically  is that all nonzero signals should be greater than a uniform signal strength, which is proportional to $\sigma/\sqrt{n}$ (\cite{Fan:2001}). In other words, signal strength within a  noise level $C\sigma/\sqrt{n}$ should not be detected through a  regularized procedure. However, due to an uncertain scale  for the  constant $C$, the absolute  boundary between noise and signal level  is unclear.

Therefore, it is important to define a  more informative  signal magnitude  which is  applicable in finite samples. This motivates us to define a transition phase in-between noise level and strong-signal level. In the following, we propose three  phases corresponding to noise, weak signal and strong signal, where  three different   levels  are defined based on  low, moderate and high detectability of signals, respectively.

\section{Weak Signal Definition and Identification}
\label{sec:wss}

\subsection{Weak Signal Definition}
\label{subsec:sigphase}

Suppose a model contains  both strong and weak signals. Without loss of generality, the parameter  vector $\bm{\theta}$  consists of  three components: $\bm{\theta}=(\bm{\Theta^{(S)}},\bm{\Theta^{(W)}},\bm{\Theta^{(N)}})^T$, where $\bm{\Theta^{(S)}},\bm{\Theta^{(W)}}$ and $\bm{\Theta^{(N)}}$ represent   strong-signal, weak-signal and noise coefficients. We introduce a degree of detectability to measure different  signal strength levels as follows.

For any given penalized model selection method, we define $P_d$ as a  probability of selecting an individual variable.  For example,  for the Lasso approach in (\ref{alasso}),  $P_d$ has an explicit  form of  $\theta$ function given $n, \sigma$  and $\lambda$:
\begin{equation}
P_d(\theta)=P(\hat{\theta}_{ALASSO} \neq 0\vert \theta)=\Phi(\frac{\theta-\sqrt{\lambda}}{\sigma/\sqrt{n}})+\Phi(\frac{-\theta-\sqrt{\lambda}}{\sigma/\sqrt{n}}).
\label{palasso}
\end{equation}
Clearly, $P_d(\theta)$ is a symmetric function, and $P_d(\theta) \rightarrow 0$ for $\theta=0$, $P_d(\theta) \rightarrow 1$ for any $\theta \neq 0$, as $n \rightarrow \infty$. For finite samples, $P_d(\theta)$ is an increasing function of $\vert \theta \vert$,  
and  measures  the detectability of  a signal coefficient, which serves as  a good  indicator of signal strength such that a  stronger signal leads to a larger $P_d$ and vice versa.

In the following,
we define a strong signal  if $P_d$ is close to 1,   a noise variable if $P_d$ is close to 0, and a weak signal if   a signal strength is in-between strong and noise levels.  
Specifically, suppose there are two threshold probabilities, $\gamma^s$ and $\gamma^w$ derived from  $P_d$,  the three signal-strength levels are defined as:
\begin{equation}
\label{siglevel1}
\begin{cases}
\theta \in \bm{\Theta^{(S)}} & \mbox { if } P_d > \gamma^s \\
\theta \in \bm{\Theta^{(W)}}  & \mbox { if } \gamma^w < P_d \leq \gamma^s \\
\theta \in \bm{\Theta^{(N)}} & \mbox { if } P_d \leq \gamma^w,
\end{cases}
\end{equation}
where $ \tau_0  \preceq \gamma^w < \gamma^s \preceq 1$, and $\tau_0= \min_{\theta} P_d(\theta)  = 2\Phi(-\frac{\sqrt{n\lambda}}{\sigma})$ can be  viewed as  a false-positive rate of model selection.
Theoretically $\tau_0 \rightarrow 0$ when $n \rightarrow \infty$ for consistent model  selection.  In finite samples,  $\tau_0$ does not need to be  0, but  close to 0.

To see the connection between signal detectability $P_d$ and signal strength,  we let $\nu^\gamma$ be  a  positive solution  of  $P_d=\gamma$ in (\ref{palasso}):
\begin{equation}
\label{eq:proot}
\gamma=\Phi(\frac{\nu^\gamma-\sqrt{\lambda}}{\sigma/\sqrt{n}})+\Phi(\frac{-\nu^\gamma-\sqrt{\lambda}}{\sigma/\sqrt{n}}).
\end{equation}
It can be shown that $\nu^\gamma$ is an increasing function of $\gamma$. In addition,   if the two positive threshold values  $\nu^s$ and  $\nu^w$  are solutions of equation (\ref{eq:proot}) corresponding to   $\gamma=\gamma^s$ and  $\gamma^w$,  then  the definition in (\ref{siglevel1})  is equivalent to
\begin{eqnarray}
\label{siglevel2}
\begin{cases}
\theta \in \bm{\Theta^{(S)}} & \mbox { if } \vert \theta \vert > \nu^s \\
\theta \in \bm{\Theta^{(W)}}  & \mbox { if } \nu^w  < \vert \theta \vert \leq  \nu^s \\
\theta \in \bm{\Theta^{(N)}}& \mbox { if } \vert \theta \vert \leq \nu^w.
\end{cases}
\end{eqnarray}
Figure \ref{fig:siglevel} also  illustrates a  connection between definition (\ref{siglevel1}) and definition  (\ref{siglevel2}).

The following lemma provides selections of $\gamma^s$ and $\gamma^w$,  which is  useful to differentiate  weak signals from noise variables. Lemma \ref{LEM:thweakscale}  also infers the order of weak signals, given both $\gamma^s$ and $\gamma^w$ are bounded away from 0 and 1.

\begin{lemma} (Selection of $\gamma^s$ and $\gamma^w$)
\label{LEM:thweakscale}
 If assumptions of  $\sqrt{n}\lambda_n \rightarrow 0, n\lambda_n \rightarrow \infty$ are satisfied, and  if the threshold values of  detectability  $\gamma^w$  and  $\gamma^s$ corresponding to the lower bounds of weak  and strong signals satisfy:
\begin{equation*}
\max\left\{\epsilon, 2\Phi(-\frac{\sqrt{n\lambda_n}}{\sigma}) \right\}  < \gamma^w < \gamma^s <  1-\epsilon,
\end{equation*}
where $\epsilon$ is a small positive value; then  for any $\gamma$ in  the  weak signal range $(\gamma^w, \gamma^s)$, we have  $\nu^\gamma/\sqrt{\lambda_n} \rightarrow 1$.
\end{lemma}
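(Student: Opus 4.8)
The plan is to reduce the defining equation \eqref{eq:proot} for $\nu^\gamma$ to a single normal‑quantile relation by showing that the second $\Phi$‑term is negligible, and then to play the boundedness of that quantile (since $\gamma$ stays away from $0$ and $1$) against the fact that the noise scale $\sigma/\sqrt n$ has strictly smaller order than the threshold $\sqrt{\lambda_n}$. First I would note that the root is well defined: by the already‑observed strict monotonicity of $P_d(\cdot)$ in $|\theta|$, on $[0,\infty)$ it increases continuously from $P_d(0)=\tau_0=2\Phi(-\sqrt{n\lambda_n}/\sigma)$ to $1$, so every $\gamma\in(\tau_0,1)$ has a unique positive root $\nu^\gamma$; the hypothesis $\gamma^w>\tau_0$ covers all $\gamma\in(\gamma^w,\gamma^s)$.

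Next, because $\nu^\gamma>0$ and $\sqrt{\lambda_n}>0$, the second summand in \eqref{eq:proot} satisfies $0<\Phi\bigl((-\nu^\gamma-\sqrt{\lambda_n})/(\sigma/\sqrt n)\bigr)<\Phi(-\sqrt{n\lambda_n}/\sigma)=\tau_0/2$, so \eqref{eq:proot} forces
\[
\gamma-\tfrac{\tau_0}{2} \;<\; \Phi\!\Bigl(\tfrac{\nu^\gamma-\sqrt{\lambda_n}}{\sigma/\sqrt n}\Bigr) \;<\; \gamma .
\]
Since $n\lambda_n\to\infty$ we have $\tau_0\to0$, so for all large $n$, $\tau_0/2<\epsilon/2$ and hence $\gamma-\tau_0/2>\gamma^w-\tau_0/2>\epsilon/2$, while $\gamma<\gamma^s<1-\epsilon$. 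Applying the increasing map $\Phi^{-1}$ gives, for large $n$ and uniformly over $\gamma\in(\gamma^w,\gamma^s)$,
\[
\Phi^{-1}\!\bigl(\tfrac{\epsilon}{2}\bigr) \;<\; \frac{\nu^\gamma-\sqrt{\lambda_n}}{\sigma/\sqrt n} \;<\; \Phi^{-1}(1-\epsilon),
\]
i.e.\ $\nu^\gamma-\sqrt{\lambda_n}=O(\sigma/\sqrt n)$ with constants depending only on $\epsilon$.

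Dividing by $\sqrt{\lambda_n}$ and using $n\lambda_n\to\infty$ then finishes the argument:
\[
\Bigl|\frac{\nu^\gamma}{\sqrt{\lambda_n}}-1\Bigr| \;=\; \frac{|\nu^\gamma-\sqrt{\lambda_n}|}{\sqrt{\lambda_n}} \;\le\; C_\epsilon\,\frac{\sigma}{\sqrt{n\lambda_n}} \;\longrightarrow\; 0, \qquad C_\epsilon:=\max\{|\Phi^{-1}(\tfrac{\epsilon}{2})|,\,|\Phi^{-1}(1-\epsilon)|\},
\]
so $\nu^\gamma/\sqrt{\lambda_n}\to1$, in fact uniformly over the weak‑signal band.

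The computation itself is routine; the step I expect to be the real obstacle is the inversion above, where the crude lower bound on $\Phi\bigl((\nu^\gamma-\sqrt{\lambda_n})/(\sigma/\sqrt n)\bigr)$ is only $\gamma-\tau_0/2$, whose quantile would diverge if $\gamma$ were allowed to slide toward $\tau_0$. This is exactly why the hypothesis pins $\gamma^w$ above the fixed constant $\epsilon$ in addition to $\tau_0$: once $\tau_0<\epsilon$ the lower endpoint is bounded below by $\epsilon/2$, keeping $\Phi^{-1}$ bounded and the $O(\sigma/\sqrt n)$ control uniform. I would also point out that the lemma uses only $n\lambda_n\to\infty$ — to drive both $\tau_0$ and $\sigma/\sqrt{n\lambda_n}$ to $0$ — whereas the companion condition $\sqrt n\,\lambda_n\to0$ is inherited from the oracle‑property setup and is not needed here.
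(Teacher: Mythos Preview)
Your argument is correct and in fact slightly sharper than the paper's. Both proofs hinge on the same observation: with $n\lambda_n\to\infty$, the second $\Phi$-term in \eqref{eq:proot} is negligible and the first term alone pins down $\nu^\gamma$ near $\sqrt{\lambda_n}$. The paper proceeds by a contradiction/limit argument: it rewrites $P_d$ in terms of the ratio $\nu^\gamma/\sqrt{\lambda_n}$ and the scale $\sqrt{n\lambda_n}/\sigma$, and observes that if $\nu^\gamma/\sqrt{\lambda_n}$ were to approach a limit strictly above (resp.\ below) $1$, then $P_d(\nu^\gamma)$ would tend to $1$ (resp.\ $0$), contradicting $\gamma\in(\epsilon,1-\epsilon)$. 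You instead invert $\Phi$ directly after sandwiching the first term between $\gamma-\tau_0/2$ and $\gamma$, which yields the explicit bound $|\nu^\gamma-\sqrt{\lambda_n}|\le C_\epsilon\,\sigma/\sqrt n$ and hence a rate $O((n\lambda_n)^{-1/2})$ together with uniformity over $\gamma\in(\gamma^w,\gamma^s)$---information the paper's qualitative argument does not extract. Your closing remark that only $n\lambda_n\to\infty$ is used is also accurate.
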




Although Lemma \ref{LEM:thweakscale} implies that $\nu$ within  the weak signal  range converges  to zero asymptotically, the weak signal and noise variables have different orders.  
Specifically,  Lemma \ref{LEM:thweakscale} indicates  that
if the regularity condition $n\lambda_n \rightarrow \infty$ is  satisfied,  then a weak signal  goes to zero more slowly than a noise variable. 
This is due to the fact that the weak signal has the same order as $\sqrt{\lambda_n}$, which goes to zero more slowly than the order of noise level $n^{-1/2}$.
 To simplify notation, the tuning parameter $\lambda_n$ is denoted as $\lambda$ throughout the rest of the paper.


The definitions in (\ref{siglevel1}) and (\ref{siglevel2}) are particularly meaningful in finite samples since  $\nu^\gamma$ depends on $n, \lambda, \sigma$ and $\gamma$. That is,   the weak signals  are relative and depend on the sample size,  the signal to noise ratio,  and the tuning parameter selection. In other words,  weak signals $\bm{\Theta^{(W)}}$ might be asymptotically trivial since  the three  levels  automatically degenerate into two levels: zero and nonzero coefficients.  However,  weak signals  should not be ignored  in finite samples  and serve as a transition phase  between  noise variables $\bm{\Theta^{(N)}}$ and strong signals $\bm{\Theta^{(S)}}$.

\subsection{Weak Signal Identification}
\label{subsec:wsi}

In this section we discuss how to identify weak signals more specifically. We propose a two-step procedure to recover possible weak signals which might be missed  in a standard model selection procedure, and distinguish weak signals from strong signals.

The key component of the proposed procedure is to utilize  the estimated probability of detection $\widehat{P}_d$. 
 Since the true information of parameter $\theta$ is unknown, $P_d$ cannot be calculated directly using (\ref{palasso}). We propose to  estimate $P_d$ by plugging in the least-square estimator $\hat{\theta}_{LS}$ in (\ref{palasso}). The expectation of the  estimator $\widehat{P}_d$ remains  an increasing function of $\vert \theta \vert$.  That is,
\begin{equation}
\label{eq:Pdhat}
\widehat{P}_d = \Phi(\frac{\hat{\theta}_{LS}-\sqrt{\lambda}}{\sigma/\sqrt{n}})+\Phi(\frac{-\hat{\theta}_{LS}-\sqrt{\lambda}}{\sigma/\sqrt{n}}),
\end{equation}
and
\begin{equation*}
E(\widehat{P}_d)=\Phi(\frac{\sqrt{n}}{\sqrt{2}\sigma}(\theta-\sqrt{\lambda}))-\Phi(-\frac{\sqrt{n}}{\sqrt{2}\sigma}(\theta+\sqrt{\lambda})).
\end{equation*}

In the following, the  weak signal is identified through replacing  $P_d, (\gamma^w, \nu^w)$ and $ (\gamma^s, \nu^s)$
in (\ref{siglevel1})
 by $\widehat{P}_d, (\gamma_1, \nu_1)$ and $(\gamma_2, \nu_2)$, where $(\gamma_1, \nu_1)$ and  $(\gamma_2, \nu_2)$  satisfy equation (\ref{eq:proot}).
We denote the identified noise, weak and strong signal set as
 $\left(\widehat{\bm{S}}^{\bm{(N)}},\bm{\widehat{S}^{(W)}}, \bm{\widehat{S}^{(S)}}\right)$, where  
\begin{eqnarray*}
\widehat{\bm{S}}^{\bm{(N)}} &=& \left\{i: \vert \hat{\theta}_{LS,i} \vert\leq \nu_1, i=1, \cdots , p \right\} = \left\{i: \widehat{P}_{d,i} \leq \gamma_1 \right\}, \\
\widehat{\bm{S}}^{\bm{(W)}} &=& \left\{ i: \nu_1 < \vert \hat{\theta}_{LS,i} \vert\leq \nu_2, i=1, \cdots , p \right\} = \left\{i: \gamma_1 < \widehat{P}_{d,i} \leq \gamma_2  \right\}, \mbox{ and}\\
\widehat{\bm{S}}^{\bm{(S)}} &=& \left\{ i:  \vert \hat{\theta}_{LS,i} \vert > \nu_2, i=1, \cdots , p \right\} = \left\{i: \widehat{P}_{d,i} > \gamma_2  \right\}.
\end{eqnarray*}  
 The details of selecting $\nu_1 $ and $\nu_2 $ are given below.

Note that in finite samples, there is no ideal threshold value $\nu_1$ which can separate signal variables and noise variables perfectly, as there is a trade-off  between recovering weak signals and including noise variables. Here  $\nu_1$ is selected  to control a signal's false-positive rate $\tau$. Specifically, $\nu_1 =z_{\tau/2}\frac{\sigma}{\sqrt{n}}$ for any given tolerant false-positive rate $\tau$ since it can be shown that  
$P(i \notin \widehat{\bm{S}}^{\bm{(N)}} \vert \theta_i=0)=\tau,$ see Lemma \ref{LEM:typeIerr} in the Appendix.
Here we  choose the false-positive rate $\tau$ to be larger than the $\tau_0$, since  we intend to  recover most of the weak signals. This is very different from standard model selection which mainly focuses on model selection consistency, but neglects  detection of weak signals.
                                               


The low threshold value $\nu_2$ for  strong signals  is selected to ensure that a   strong signal can be identified with high probability.  We choose $\nu_2=\sqrt{\lambda}+z_{\alpha/2}\frac{\sigma}{\sqrt{n}}$, and it can be  verified  that the estimated detection rate $\widehat{P}_d$ for any identified strong signal stays  above $1-\alpha$.  In fact, based on (\ref{eq:Pdhat}), $\widehat{P}_d$ satisfies the inequality  $P_d > E(\widehat{P}_d)$ when the true signal is strong. Figure \ref{fig:EP0} illustrates the relationship between $P_d$ and  $E(\widehat{P}_d)$. Therefore there is a high probability that the true detection rate $P_d$ is larger than $1-\alpha$  when $\widehat{P}_d > 1-\alpha$. 

\begin{figure}[h!]
\centering
\begin{minipage}{.5\textwidth}
  \centering
  \includegraphics[width=1\textwidth]{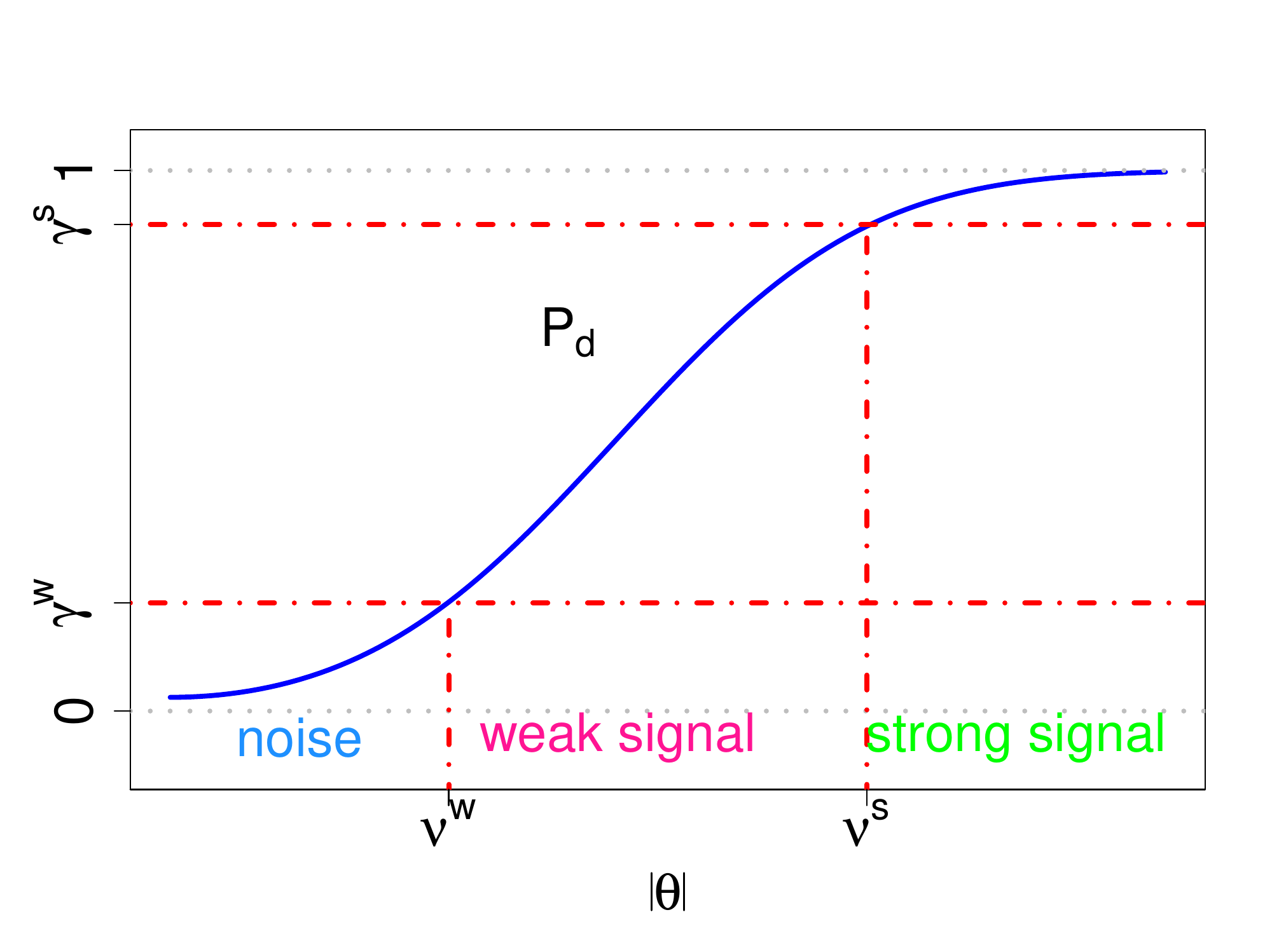}
  \caption{Define signal level  based on $P_d$}
  \label{fig:siglevel}
\end{minipage}%
\begin{minipage}{.5\textwidth}
\centering
\includegraphics[width=1\textwidth]{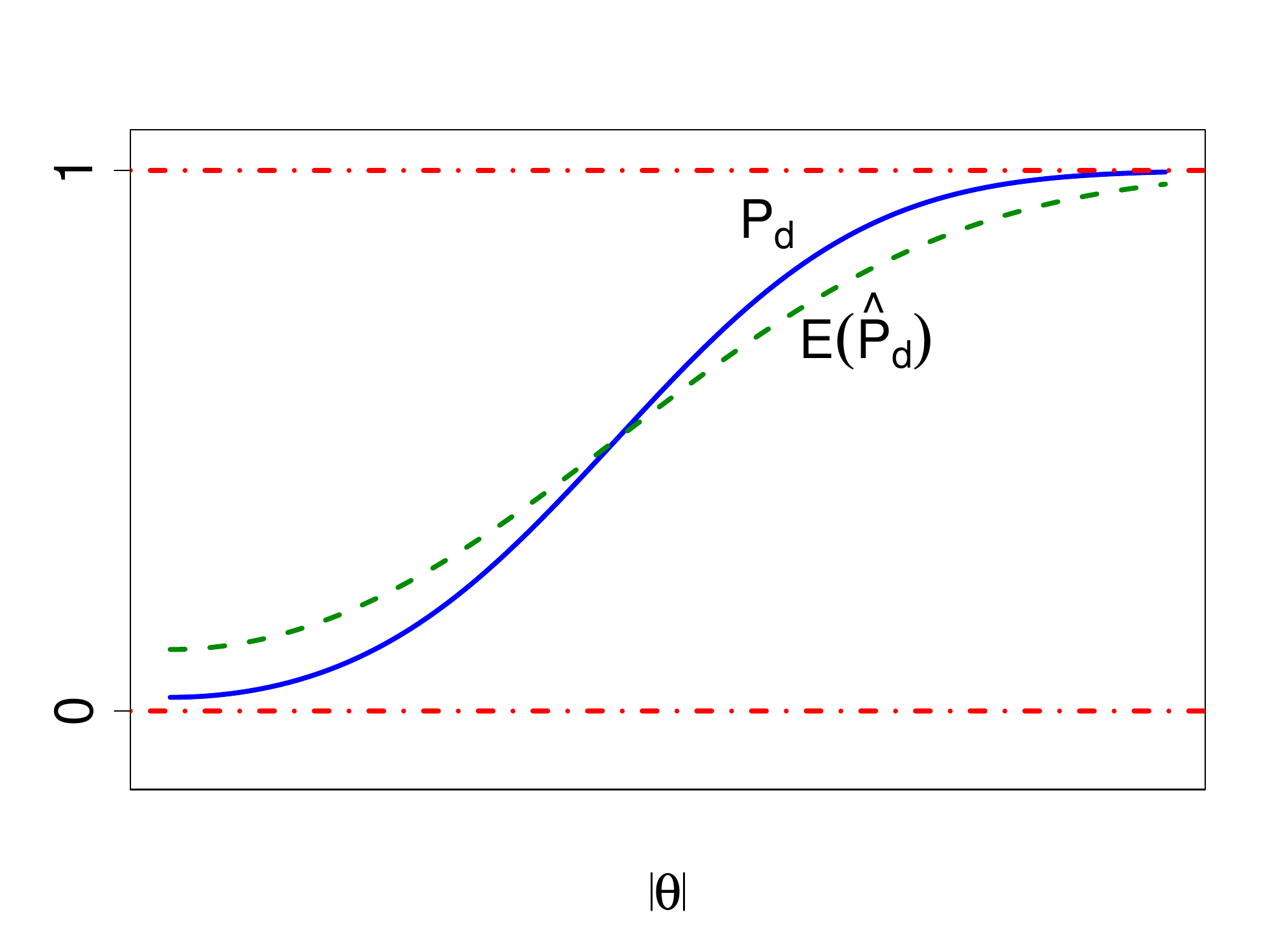}
\caption{$P_d$ and $E(\widehat{P}_d)$}
\label{fig:EP0}
\end{minipage}
\end{figure}

In summary, the main focus of weak signal  identification   is to recover weak signals as much as possible, at the cost of having a false-positive rate $\tau$ in finite samples. 
This is in contrast to standard model selection procedures which  emphasize consistent model selection with  a close-to-zero false-positive rate, 
but at the cost of  not selecting most weak signals.

To better understand the difference and connection between the proposed weak signal identification procedure and the standard model selection procedure, we provide Figure \ref{fig:pdtwostep} for illustration.
Let $P_{d,0}(\theta)$ (dashed line) and $P_{d,1}(\theta)$   (dotted line)  denote  the probabilities  of selecting  $\theta$ in the standard model selection and the proposed weak signal  identification, respectively,  where $P_{d,0}(\theta)=P(\vert \hat{\theta}_{LS} \vert > \sqrt{\lambda})$, and  $P_{d,1}(\theta)=P( \nu_1 < \vert \hat{\theta}_{LS} \vert < \sqrt{\lambda} )$.
 Then  the total selection  probability $P_{d,2}(\theta)$ (solid line)  for the proposed method is $P_{d,2}(\theta)=P_{d,0}(\theta)+P_{d,1}(\theta)=P(\vert \hat{\theta}_{LS} \vert > \nu_1 ).$  
 Figure  \ref{fig:pdtwostep} indicates that the proposed procedure recovers weak signals better than the standard model selection procedure, but at a cost of a small false-positive rate of including some noise variables. These two procedures have similar detection power for strong signals.
 
\begin{figure}[h!]
  \centering
  \includegraphics[width=0.7\textwidth]{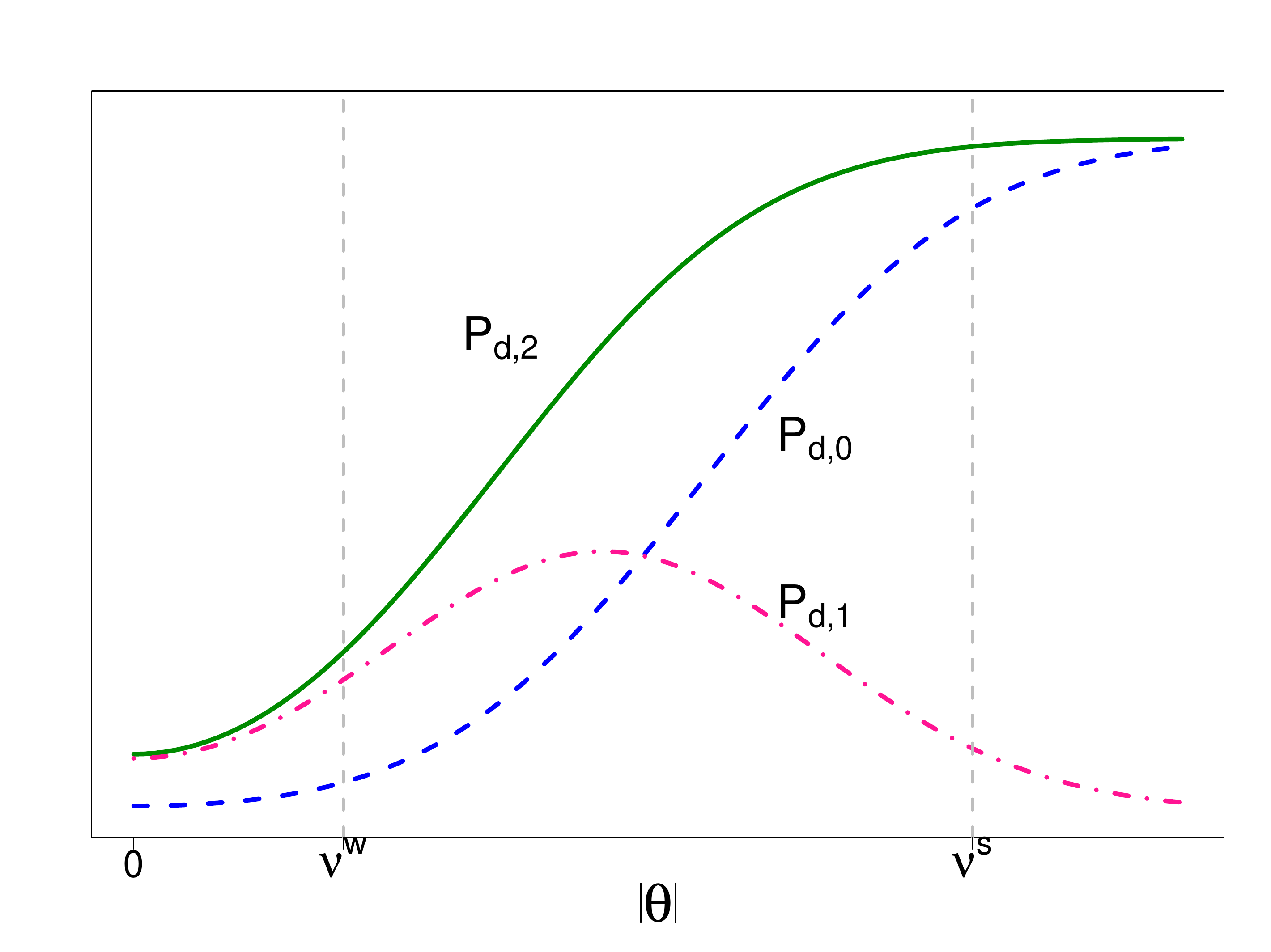}
  \caption[]{Signal's detectability in two-step procedure}
  \label{fig:pdtwostep}
\end{figure}

\section{Weak Signal Inference}
\label{sec:method}
\subsection{Two-Step Inference Method}
\label{sec:twostepinf}

In this section, we propose  a  two-step inference procedure which consists of
an asymptotic-based confidence interval for strong signals, and a least-square confidence interval for the identified  weak signals. In the following, the proposed procedure is  based on the orthogonal design assumption.


 The asymptotic-based inference method has been developed for the SCAD estimator (\cite{Fan:2001}). \cite{Zou:2006} also provides
the asymptotic distribution of the adaptive Lasso estimator  $\widehat{\bm{\theta}}_{\bm{\mathcal{A}_n}}$ for  nonzero parameters, where $\mathcal{A}_n=\{1,2,\cdots, q\}$. In finite samples,  the adaptive Lasso estimator $\widehat{\bm{\theta}}_{\bm{\mathcal{A}_n}}$ is biased due to  the shrinkage estimation.
The bias term of $\widehat{\bm{\theta}}_{\bm{\mathcal{A}_n}}$ and the covariance matrix estimator of $\widehat{\bm{\theta}}_{\bm{\mathcal{A}_n}}$ are given by
\begin{eqnarray}
\label{EQ:bias}
\bm{\hat{b}}(\widehat{\bm{\theta}}_{\mathcal{A}_n})
=(\frac{1}{n} \bm{X_{\mathcal{A}_n}}^T\bm{X_{\mathcal{A}_n}})^{-1} (p_{\lambda}^{'}(\vert \hat{\theta}_1 \vert)sgn( \hat{\theta}_1 ), \cdots,p_{\lambda}^{'}(\vert \hat{\theta}_q \vert)sgn( \hat{\theta}_q ) )^T,
\end{eqnarray}
and
\begin{eqnarray}
\label{EQ:Cov}
\widehat{Cov}(\bm{\widehat{\theta}_{\mathcal{A}_n}})= 
\left\{\bm{X_{\mathcal{A}_n}}^T\bm{X_{\mathcal{A}_n}}+n\lambda\bm{\Omega}\right\}^{-1}
\bm{X_{\mathcal{A}_n}}^T\bm{X_{\mathcal{A}_n}}\
\left\{\bm{X_{\mathcal{A}_n}}^T\bm{X_{\mathcal{A}_n}}+n\lambda\bm{\Omega}\right\}^{-1}\widehat{\sigma}^2,
\end{eqnarray}
where $\bm{\Omega}=\mbox{diag}\left\{\frac{\hat{w}_1}{\vert \hat{\theta}_1 \vert},\cdots, \frac{\hat{w}_q}{\vert \hat{\theta}_q \vert} \right\}$,  and $\hat{w}_i = 1/\vert\hat{\theta}_{LS,i}\vert$.
Although the bias term is asymptotically negligible, it  is important to correct  the biased  term to get more accurate confidence intervals in finite samples. 

Consequently,  if the $i$th variable is identified as a strong signal in $\widehat{\bm{S}}^{\bm{(S)}}$,  a $100(1-\alpha)\%$ confidence interval for $\theta_i$ can be  constructed as
\begin{equation}
\label{inf1}
\hat{\theta}_{i}+\hat{b}_{AL,i}\pm z_{\alpha/2}\hat{\sigma}_{AL,i},
\end{equation}
where $\hat{b}_{AL,i}$ and $\hat{\sigma}_{AL,i}$ are the corresponding biased  component in (\ref{EQ:bias}) and the square root of the diagonal variance component in  (\ref{EQ:Cov}), respectively.
Under the orthogonal design, they are equivalent to
\begin{eqnarray}
\label{alassobiasi}
\hat{b}_{AL,i} = \frac{\lambda}{\vert \hat{\theta}_{LS,i} \vert}  \cdot sgn( \hat{\theta}_i), \\
\label{alassosdi}
\mbox{and } \hat{\sigma}_{AL,i} = (1+\frac{\lambda}{\vert \hat{\theta}_{i} \vert \vert \hat{\theta}_{LS,i} \vert} )^{-1} \cdot \hat{\sigma}/n.
\end{eqnarray}

The above inference procedure performs well for strong signals (\cite{Fan:2001}, \cite{Zou:2006} and  \cite{Huang:2007}). However,  this procedure does not apply well  to  weak signals. This is because weak signals are often missed in standard model selection procedures, and therefore  there is no confidence interval constructed for any estimator shrunk to 0.
 Moreover, even if a weak signal is selected, the variance  estimator in  (\ref{EQ:Cov}) tends to underestimate its true standard error,  and consequently the confidence interval based on ($\ref{inf1}$) is  under-covered.   Here we propose an alternative confidence interval for a weak signal in  
$\widehat{\bm{S}}^{\bm{(W)}}$ by utilizing  the  least-square information as follows.


The proposed inference for weak signals is motivated in that the bias-corrected confidence interval in (\ref{inf1}) is close to the  least-square confidence interval when a signal is strong. Therefore it is natural to construct a  least-square confidence interval for a weak signal to solve the problem of excessive shrinkage for weak signal estimators.

If the $i$th variable is identified as a weak signal in $\widehat{\bm{S}}^{\bm{(W)}}$, we construct a $100(1-\alpha)\%$ least-square confidence interval for $\theta_i$ as
\begin{equation}
\label{inf2}
\hat{\theta}_{LS, i} \pm z_{\alpha/2}\hat{\sigma}_{LS, i},
\end{equation}
where $\hat{\theta}_{LS, i}$ and $\hat{\sigma}_{LS, i}$ are the components of the least-square estimator and  the square root of the diagonal component of the covariance  matrix estimator:
\begin{eqnarray*}
\widehat{\bm{\theta}}_{\bm{LS}}=(\bm{X}^T\bm{X})^{-1}\bm{X}^T\bm{y},\\
\widehat{Cov}(\bm{\widehat{\theta}_{LS}})=(\bm{X}^T\bm{X})^{-1}\widehat{\sigma}^2.
\end{eqnarray*}
Under the  orthogonal design, $\hat{\theta}_{LS, i}$ and $\hat{\sigma}_{LS, i}$ are 
\begin{eqnarray*}
\hat{\theta}_{LS, i} = \bm{X}_i^{T}\bm{y}/n, \\
\hat{\sigma}_{LS, i} = \hat{\sigma}/n. 
\end{eqnarray*}


In summary, if a non-zero signal is detected, combining (\ref{inf1}) and (\ref{inf2}), we provide a new two-step confidence interval for the $i$th variable as follows:  \begin{eqnarray*}
\label{inf}
\left\{\hat{\theta}_{LS,i}\pm z_{\alpha/2}\hat{\sigma}_{LS,i}\right\}\mathbf{1}_{\left\{ \mbox{$i \in \widehat{\bm{S}}^{\bm{(W)}}$}\right\}} + 
\left\{\hat{\theta}_{i}+\hat{b}_{AL,i}\pm z_{\alpha/2}\hat{\sigma}_{AL,i}\right\}\mathbf{1}_{\left\{\mbox{$i \in \widehat{\bm{S}}^{\bm{(S)}}$}\right\}}.
\end{eqnarray*}

Here we propose different confidence interval constructions for  weak and strong signals, and the proposed  inference  is a mixed procedure combining (\ref{inf1}) and (\ref{inf2}).   Our inference procedure performs similarly to the  asymptotic inference for strong signals, but outperforms the existing inference procedures in that the proposed confidence interval provides more accurate coverage  for weak signals. Note that  if a signal strength is too weak,  neither existing methods nor our method can provide reasonably good inferences.  Nevertheless, 
our method  still provides a better inference  than  the asymptotic-based method across all  signal levels.




\subsection{Finite Sample Theories}
\label{sec:theory}

In this section,
we establish  finite sample theory on coverage rate for the proposed  two-step inference method, and compare it with  the coverage rate of the asymptotic-based inference method. The asymptotic properties for penalized estimators have been investigated by \cite{Fan:2001}, \cite{Fan:2004}, \cite{Zou:2006}, \cite{Zou:2008} and many others. When the sample size is sufficiently large  and the  signal strength is strong, the asymptotic inference  is quite accurate in capturing the information of the penalized estimators. For instance, the covariance estimator of the penalized estimates in (\ref{EQ:Cov}) is a consistent estimator (\cite{Fan:2004}). However,  the sandwich  estimator of the covariance only performs well  for strong signals,  not for weak signals in finite samples. Therefore, it is  important to investigate the  finite sample property of  the penalized estimator, and especially the weak signal estimators for  the proposed method.

We construct the exact coverage rates of the $100(1-\alpha)\%$ confidence intervals for the proposed method and the asymptotic method when the sample size is finite. The derivation for finite sample theory is very different from the asymptotic theory.  In addition, since the coverage rate function is not monotonic, we need to compare the difference of the two coverage rates  piece-wisely.


Given a confidence level parameter $\alpha$, the following regularity conditions  are required for selecting the false-positive rate $\tau$:
\begin{enumerate}
\item[(C1)] $\tau \geq \alpha$,
\item[(C2)] $\frac{\alpha+\tau}{2} < \Phi(-\frac{1}{2}z_{\alpha/2})$,  which is equivalent to $\tau < 2\Phi(-\frac{1}{2}z_{\alpha/2}) -\alpha$.
\end{enumerate}

Condition (C1) is to ensure that the second step of the proposed method  is able to   identify  weak signals. Condition  (C2) provides a range of $\tau,$ so the false positive-rate is not too large.  In addition,  we also assume that $\lambda$ satisfies the criterion:
\begin{eqnarray}
\label{eqn:lambdacri}
\sqrt{\lambda} \geq z_{\alpha/2}\frac{\sigma}{\sqrt{n}}.
\end{eqnarray}
The criterion in (\ref{eqn:lambdacri}) implies  that our  focus    is  the case when $\sqrt{\lambda} \geq z_{\alpha/2}\frac{\sigma}{\sqrt{n}}$,  where excessive shrinkage might affect  weak signal selection.  It can be verified that $\alpha \geq \tau_0$ if $\lambda$ is in this range, and this guarantees that $\tau >\tau_0$.


In the following, for any parameter $\theta$ and  parameter $\nu$ associated with a different level of tuning, we introduce three probability functions, $P_s$, $CR_a$ and $CR_b$ as follows. Let $P_s$ be the detection power of $\theta$: 
\begin{eqnarray*}
P_s(\theta, \nu)=\Phi(\frac{\theta-\nu}{\sigma/\sqrt{n}})+\Phi(\frac{-\theta-\nu}{\sigma/\sqrt{n}}).
\end{eqnarray*}
We define $CR_a$ as the coverage probability based on the  asymptotic inference approach when $\vert \hat{\theta}_{LS}\vert$ is larger than $\nu$:  
\begin{eqnarray*}
CR_a(\theta,\nu) =\begin{cases}
\left\{P_s(\theta,\nu)-2\Phi(-z_{\alpha/2}\frac{\tilde{\sigma}(\theta)}{\sigma})\right\} \\ \cdot  I_{\left\{\nu \leq z_{\alpha/2}\frac{\tilde{\sigma}(\theta)}{\sqrt{n}}\right\}} & \mbox{if }  \vert \theta \vert\leq  \vert \nu - z_{\alpha/2}\frac{\tilde{\sigma}(\theta)}{\sqrt{n}} \vert\\
\Phi(z_{\alpha/2}\frac{\tilde{\sigma}(\theta)}{\sigma})-\Phi(\frac{\sqrt{n}(\nu-\theta)}{\sigma}) & \mbox{if }  \vert \nu- z_{\alpha/2}\frac{\tilde{\sigma}(\theta)}{\sqrt{n}} \vert \leq  \vert \theta \vert \leq \nu+ z_{\alpha/2}\frac{\tilde{\sigma}(\theta)}{\sqrt{n}}\\
1-2\Phi(-z_{\alpha/2}\frac{\tilde{\sigma}(\theta)}{\sigma}) & \mbox{if } \vert \theta \vert >  \nu+ z_{\alpha/2}\frac{\tilde{\sigma}(\theta)}{\sqrt{n}},
\end{cases}
\end{eqnarray*}
 where $\tilde{\sigma}(\theta)=(1+\frac{\lambda}{\theta^2})^{-1}\sigma$;   and $CR_b$ is  the coverage probability based on the least-square confidence interval  when $\vert \hat{\theta}_{LS}\vert$ is larger than $\nu$:
\begin{eqnarray*}
CR_b(\theta,\nu)
 =\begin{cases}
\left\{P_s(\theta,\nu)-\alpha\right\}\cdot I_{\left\{\nu \leq z_{\alpha/2}\frac{\sigma}{\sqrt{n}}\right\}}
& \mbox{if }  \vert \theta \vert\leq  \vert \nu - z_{\alpha/2}\frac{\sigma}{\sqrt{n}} \vert\\
1-\frac{\alpha}{2}-\Phi(\frac{\sqrt{n}(\nu-\theta)}{\sigma}) & \mbox{if }  \vert \nu - z_{\alpha/2}\frac{\sigma}{\sqrt{n}} \vert \leq  \vert \theta \vert \leq  \nu+ z_{\alpha/2}\frac{\sigma}{\sqrt{n}}\\
1-\alpha & \mbox{if } \vert \theta \vert >  \nu+ z_{\alpha/2}\frac{\sigma}{\sqrt{n}}.
\end{cases}
\end{eqnarray*}

The  explicit expressions of coverage rates based on the asymptotic and the proposed two-step methods are provided in the following lemma.
\begin{lemma}
\label{LEM:CR}
Suppose $n,\sigma$ and tuning parameter $\lambda$ are given, the coverage rate $CR_1(\theta)$ of the $100(1-\alpha)\%$ confidence interval for any coefficient  $\theta$ based on  the asymptotic inference is
\begin{eqnarray}
\label{eq:cr1v1}
CR_1(\theta)=\frac{CR_a(\theta,\nu_0)}{P_s(\theta,\nu_0)},
\end{eqnarray}
where $\nu_0=\sqrt{\lambda}$.
Given any $\tau$, the coverage rate $CR(\theta)$ of the $100(1-\alpha)\%$ confidence interval for any coefficient $\theta$ using the proposed two-step inference method is given by:
\begin{eqnarray}
\label{eq:crv1}
CR(\theta)=\frac{CR_b(\theta,\nu_1)+CR_a(\theta,\nu_2)-CR_b(\theta,\nu_2)}{P_s(\theta,\nu_1)},
\end{eqnarray}
where $\nu_0=\sqrt{\lambda}, \nu_1=z_{\tau/2}\frac{\sigma}{\sqrt{n}}$, and $\nu_2=\sqrt{\lambda}+z_{\alpha/2}\frac{\sigma}{\sqrt{n}}.$
\end{lemma}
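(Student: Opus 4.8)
The plan is to compute each coverage rate by conditioning on the event that a signal is detected, i.e. on $\{|\hat\theta_{LS}| > \nu_1\}$ for the two-step method and on $\{|\hat\theta_{LS}| > \nu_0\}$ for the asymptotic method, since a confidence interval is only constructed when the variable survives the (relevant) selection step. Under the orthogonal design $\hat\theta_{LS,i} = \bm X_i^T\bm y/n \sim N(\theta, \sigma^2/n)$, so the denominators are immediate: $P(|\hat\theta_{LS}| > \nu) = \Phi\big(\tfrac{\theta-\nu}{\sigma/\sqrt n}\big) + \Phi\big(\tfrac{-\theta-\nu}{\sigma/\sqrt n}\big) = P_s(\theta,\nu)$, giving $P_s(\theta,\nu_0)$ and $P_s(\theta,\nu_1)$ respectively. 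The coverage rate is then $CR(\theta) = P(\theta \in \text{CI}, \text{detected})/P(\text{detected})$, which explains the ratio structure of both \eqref{eq:cr1v1} and \eqref{eq:crv1}.

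For the asymptotic interval I would first establish that, on the event $|\hat\theta_{LS}| > \nu$, the interval \eqref{inf1} has a clean closed form under orthogonality. Plugging \eqref{alasso}, \eqref{alassobiasi} and \eqref{alassosdi} in, the bias-corrected center $\hat\theta_i + \hat b_{AL,i}$ collapses to $\hat\theta_{LS,i}$ itself (the $\lambda/|\hat\theta_{LS,i}|$ terms cancel), and the half-width becomes $z_{\alpha/2}\,\tilde\sigma(\hat\theta_{LS,i})/\sqrt n$ with $\tilde\sigma$ as defined before the lemma — but one must be careful that this half-width depends on the random $\hat\theta_{LS,i}$. The next step is to write $P(|\hat\theta_{LS} - \theta| \le z_{\alpha/2}\tilde\sigma(\hat\theta_{LS})/\sqrt n,\ |\hat\theta_{LS}| > \nu)$ and split it according to the location of $\theta$ relative to $\nu \pm z_{\alpha/2}\tilde\sigma/\sqrt n$: this is exactly the three-case split in the definition of $CR_a(\theta,\nu)$, and the $CR_1$ formula follows with $\nu = \nu_0 = \sqrt\lambda$. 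The least-square interval \eqref{inf2} is simpler because its half-width $z_{\alpha/2}\sigma/\sqrt n$ is deterministic, so $P(|\hat\theta_{LS}-\theta| \le z_{\alpha/2}\sigma/\sqrt n,\ |\hat\theta_{LS}| > \nu)$ again splits into the same three regions and reproduces $CR_b(\theta,\nu)$.

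For the two-step method the key observation is that the numerator event decomposes by which set the variable falls into: on $\{\nu_1 < |\hat\theta_{LS}| \le \nu_2\}$ (identified weak) the least-square interval is used, and on $\{|\hat\theta_{LS}| > \nu_2\}$ (identified strong) the asymptotic interval is used. Hence the numerator equals $P(\text{LS-CI covers},\ \nu_1 < |\hat\theta_{LS}| \le \nu_2) + P(\text{asy-CI covers},\ |\hat\theta_{LS}| > \nu_2)$. The first term is $CR_b(\theta,\nu_1) - CR_b(\theta,\nu_2)$ by the additivity of the measure over the nested detection events $\{|\hat\theta_{LS}|>\nu_1\} \supset \{|\hat\theta_{LS}|>\nu_2\}$ (using that the same LS interval formula governs coverage on both), and the second is $CR_a(\theta,\nu_2)$ directly. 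Dividing by $P_s(\theta,\nu_1)$ yields \eqref{eq:crv1}. The main obstacle I anticipate is the bookkeeping in the asymptotic-interval case: because the half-width $z_{\alpha/2}\tilde\sigma(\hat\theta_{LS})/\sqrt n$ is itself a function of $\hat\theta_{LS}$, verifying that the coverage event intersected with $\{|\hat\theta_{LS}|>\nu\}$ reduces cleanly to the stated three-branch formula — and in particular getting the boundary cases and the indicator $I_{\{\nu \le z_{\alpha/2}\tilde\sigma(\theta)/\sqrt n\}}$ right — requires care; the rest is routine Gaussian-tail arithmetic.
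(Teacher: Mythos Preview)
Your proposal is essentially the same as the paper's proof: condition on detection, write the numerator as a joint probability, split the two-step numerator into the weak-signal and strong-signal pieces, and use the nesting $\{|\hat\theta_{LS}|>\nu_1\}\supset\{|\hat\theta_{LS}|>\nu_2\}$ to obtain $CR_b(\theta,\nu_1)-CR_b(\theta,\nu_2)+CR_a(\theta,\nu_2)$.

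The one place where you and the paper diverge is exactly the obstacle you flagged. The paper does \emph{not} carry the random half-width $z_{\alpha/2}\tilde\sigma(\hat\theta_{LS})/\sqrt n$ through the computation; instead it defines the asymptotic-type interval as $CI_a=\{\theta:|\hat\theta_{LS}-\theta|<z_{\alpha/2}\tilde\sigma(\theta)/\sqrt n\}$ with $\tilde\sigma$ evaluated at the \emph{true} $\theta$, so the half-width is deterministic and the three-case split for $CR_a(\theta,\nu)$ follows by exactly the same Gaussian-tail arithmetic as for $CR_b$. In other words, the ``bookkeeping'' difficulty you anticipate is bypassed by this substitution rather than resolved, and the indicator $I_{\{\nu\le z_{\alpha/2}\tilde\sigma(\theta)/\sqrt n\}}$ in the first branch then appears naturally. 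If you attempt the computation with the genuinely random half-width you will not recover the stated piecewise formula for $CR_a$; you should adopt the paper's convention here.
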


 The derivations of $CR_1(\theta)$ and $CR(\theta)$ are provided in the proof of Lemma \ref{LEM:CR} in the Appendix. In fact, $CR_1(\theta)$ is the conditional coverage probability based on the asymptotic confidence interval, given that $\theta$ is selected  using tuning parameter $\lambda$. 
Similarly, $CR(\theta)$ is the conditional coverage probability of the proposed confidence interval  in (\ref{inf}), given that $\theta$ is selected based on the two-step procedure. The expression of $CR(\theta)$ in (\ref{eq:crv1}) can be  interpreted as the summation of two sub-components, where the first component  $\frac{CR_b(\theta,\nu_1)-CR_b(\theta,\nu_2)}{P_s(\theta,\nu_1)}$, corresponds  to the conditional coverage probability of the least-square confidence interval when $\nu_1< \vert \hat{\theta}_{LS}\vert <\nu_2$, and the second component  $\frac{CR_a(\theta,\nu_2)}{P_s(\theta,\nu_1)}$, is  the conditional coverage probability of the asymptotic-based confidence interval when  $\vert \hat{\theta}_{LS}\vert > \nu_2$.

In addition, we show in the supplement that  both $CR_1(\theta)$ and $CR(\theta)$ are piece-wise smooth functions, and  require one to compare  two coverage rates at each interval separately.  We introduce the boundary points associated with $CR_1(\theta)$ and $CR(\theta)$ as follows. Let $c_1$, $c_2$, $c_3$ and $c_4$ be the solutions of $\theta=\sqrt{\lambda}-z_{\alpha/2}\frac{\tilde{\sigma}(\theta)}{\sqrt{n}}$,  $\theta=\sqrt{\lambda}+z_{\alpha/2}\frac{\tilde{\sigma}(\theta)}{\sqrt{n}}$,  $\theta=\sqrt{\lambda}+z_{\alpha/2}\frac{\sigma}{\sqrt{n}}-z_{\alpha/2}\frac{\tilde{\sigma}(\theta)}{\sqrt{n}}$, and
$\theta = \sqrt{\lambda} + z_{\alpha/2}\frac{\sigma}{\sqrt{n}} +z_{\alpha/2}\frac{\tilde{\sigma}(\theta)}{\sqrt{n}}$, respectively.  Here  $c_1$ and $c_2$ are the boundary points of piece-wise intervals for  $CR_1(\theta)$ in (\ref{eq:cr1v1}),  and $c_3$ and $c_4$ are the boundary points of piece-wise intervals for $CR(\theta)$ in (\ref{eq:crv1}). It can be shown that the orders of $c_1, c_2, c_3$ and $c_4$ satisfy $c_1< c_3< c_2<c_4.$  More specific ranges for $c_1,$ $c_2$, $c_3$ and $c_4$ are provided in Lemma \ref{LEM:boundarypoints} of the Appendix.  Since  there are no explicit solutions for  these boundary points, we rely on the orders of these boundary points to examine the difference between  $CR_1(\theta)$ and $CR(\theta)$.

In the following, we define $\Delta(\theta)=CR(\theta) - CR_1(\theta)$ as  a difference function between $CR(\theta)$ and $CR_1(\theta)$.    Theorem 1 and Theorem 2 provide  the uniform low bounds of $\Delta(\theta)$ for different ranges of $\lambda$ when $z_{\alpha/2}\frac{\sigma}{\sqrt{n}} \leq \sqrt{\lambda} < (z_{\alpha/2}+z_{\tau/2})\frac{\sigma}{\sqrt{n}}$ and $\sqrt{\lambda} \geq (z_{\alpha/2}+z_{\tau/2})\frac{\sigma}{\sqrt{n}}$.  The mathematical  details of the proofs are provided in the Appendix and supplement materials.

\begin{theorem}
\label{THM:crdiff}
Under conditions (C1)-(C2), if $\lambda$ satisfies $z_{\alpha/2}\frac{\sigma}{\sqrt{n}} < \sqrt{\lambda} < (z_{\alpha/2}+z_{\tau/2})\frac{\sigma}{\sqrt{n}}$,
the piece-wise lower bounds for $\Delta(\theta)$ are provided as follows:
\begin{enumerate}[$(a)$]
\item when $\theta \in \left[0, c_1 \right],$ $\Delta(\theta) \geq 1-\frac{\alpha}{\tau} >0; $
\item when $ \theta \in \left[c_1, \nu_0 \right],$ $\Delta(\theta) \geq \frac{2}{1+\alpha}-2\Phi(\frac{1}{2}z_{\alpha/2})>0;$
\item when $\theta \in \left[\nu_0, +\infty \right),$
$\Delta(\theta)$ satisfies either
$\Delta(\theta) \geq 0$ or $-\frac{\alpha}{2} < \Delta(\theta)  < 0$.
\end{enumerate}
In addition, a more specific lower bound for $\Delta(\theta)$ on $\left[\nu_0, +\infty \right)$ is given by:
\begin{eqnarray*}
\Delta(\theta)
\geq 
\begin{cases}
-4(1-\frac{\alpha}{2})\Phi(-\frac{3}{2}z_{\alpha/2})  &\mbox{if } \theta \in \left[\nu_0, \min\left\{\nu_3, c_3 \right\} \right] \\
\mbox{ See Table \ref{tab:threecase}} & \mbox{if } \theta \in \left[\min\left\{\nu_3, c_3 \right\}, \max\left\{\nu_3, c_2\right\} \right]\\
-\frac{4(1-\alpha)}{(2-\alpha)^2}\Phi(-2z_{\alpha/2})-\frac{\alpha(1-\alpha)}{2-\alpha} & \mbox{if } \theta \in \left[\max\left\{\nu_3, c_2\right\}, c_4 \right]\\
-(1-\alpha)\frac{\Phi(-\frac{3}{2}z_{\alpha/2})}{\Phi(\frac{3}{2}z_{\alpha/2})^2}  & \mbox{if } \theta \in \left[c_4,\nu_4 \right]\\
-(1-\alpha)\frac{\Phi(-2z_{\alpha/2})}{\Phi(2z_{\alpha/2})^2}  & \mbox{if } \theta \in \left[\nu_4, \infty \right),\\
\end{cases}
\end{eqnarray*}
where $\nu_3=(z_{\alpha/2}+z_{\tau/2})\frac{\sigma}{\sqrt{n}}$, and $\nu_4 = \sqrt{\lambda} + 2z_{\alpha/2}\frac{\sigma}{\sqrt{n}}.$
Table \ref{tab:threecase} provides the lower bounds for $\Delta(\theta) $ on interval $\left[ \min\left\{\nu_3, c_3 \right\}, \max\left\{\nu_3, c_2\right\} \right]$ under three cases.
\begin{table}[h!]
\caption{Specific bounds of $\Delta(\theta)$ on interval $\left[ \min\left\{\nu_3, c_3 \right\}, \max\left\{\nu_3, c_2\right\} \right]$}
\label{tab:threecase}
\centering
\begin{tabular}{l |c | c c}
\hline
\hline
case 1: &  $\theta \in \left[c_3,  \nu_3 \right] $ &  $\theta \in \left[\nu_3, c_2 \right]$ \\
$c_3 < \nu_3 <c_2$ & $-2\Phi(-\frac{3}{2}z_{\alpha/2})$ & $-\frac{4(1-\alpha)}{(2-\alpha)^2}\Phi(-2z_{\alpha/2})-\frac{\alpha(1-\alpha)}{2-\alpha}$\\
\hline
case 2: & $\theta \in \left[c_3,  c_2 \right] $ &  $\theta \in \left[c_2, \nu_3 \right]$ \\
$c_3 < c_2 < \nu_3 $ & $-2(1-\alpha)\Phi(-\frac{3}{2}z_{\alpha/2})$ &
$-\frac{1-\alpha}{\left[\Phi(\frac{1}{2}z_{\alpha/2})\right]^2}\Phi(-2z_{\alpha/2})$\\
\hline
case 3: &  \multicolumn{2}{c}{$\theta \in \left[\nu_3, c_2\right] $}\\
$\nu_3< c_3 < c_2 $  & \multicolumn{2}{c}{$-\frac{\alpha}{2}$}\\
\hline
\hline
\end{tabular}
\end{table}
\end{theorem}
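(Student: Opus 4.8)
The plan is to carry out the piece-wise comparison of $CR(\theta)$ and $CR_1(\theta)$ suggested by the structure of Lemma \ref{LEM:CR}, splitting the real half-line $[0,\infty)$ at the boundary points $c_1, c_2, c_3, c_4$ and the auxiliary points $\nu_0=\sqrt\lambda$, $\nu_3 = (z_{\alpha/2}+z_{\tau/2})\sigma/\sqrt n$, $\nu_4 = \sqrt\lambda + 2z_{\alpha/2}\sigma/\sqrt n$. First I would record, from Lemma \ref{LEM:CR}, the explicit formulas $CR_1(\theta)=CR_a(\theta,\nu_0)/P_s(\theta,\nu_0)$ and $CR(\theta)=\{CR_b(\theta,\nu_1)+CR_a(\theta,\nu_2)-CR_b(\theta,\nu_2)\}/P_s(\theta,\nu_1)$, and on each sub-interval substitute the correct branch of each of $CR_a$, $CR_b$, $P_s$ according to where $\theta$ sits relative to the case boundaries in the definitions of $CR_a$ and $CR_b$. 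The ordering $c_1 < c_3 < c_2 < c_4$ (from Lemma \ref{LEM:boundarypoints}) together with the assumed regime $z_{\alpha/2}\sigma/\sqrt n < \sqrt\lambda < (z_{\alpha/2}+z_{\tau/2})\sigma/\sqrt n$ tells me which branch is active where, and in particular guarantees $\nu_1 = z_{\tau/2}\sigma/\sqrt n \le \sqrt\lambda = \nu_0$ so that on $[0,c_1]$ the least-square interval governs $CR$.

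For parts $(a)$ and $(b)$ the denominators $P_s(\theta,\nu_0)$ and $P_s(\theta,\nu_1)$ differ, so the comparison is not a pure numerator comparison; the idea is to bound $CR_1(\theta)$ from above by replacing $P_s(\theta,\nu_0)$ with something tractable and $CR(\theta)$ from below, using that on $[0,c_1]$ we have $CR_a(\theta,\nu_2)=CR_b(\theta,\nu_2)=0$ (both fall in the first branch with the indicator vanishing once $\nu_2 > z_{\alpha/2}\sigma/\sqrt n$ fails, which holds in this $\lambda$-regime), so $CR(\theta)=CR_b(\theta,\nu_1)/P_s(\theta,\nu_1)$. On $[0,c_1]$ the first branch of $CR_b$ gives $CR_b(\theta,\nu_1)=P_s(\theta,\nu_1)-\alpha$ when the indicator is active, hence $CR(\theta)=1-\alpha/P_s(\theta,\nu_1)\ge 1-\alpha/\tau$ since $P_s(0,\nu_1)=\tau$ and $P_s$ is increasing in $|\theta|$; meanwhile $CR_1(\theta)=0$ on $[0,c_1]$ because the first branch of $CR_a$ has a vanishing indicator in this regime, giving $(a)$. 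For $(b)$, on $[c_1,\nu_0]$ one is in the middle branch of $CR_a$ (with $\nu_0$) and still the first branch of $CR_b$ (with $\nu_1$), so I would write $\Delta(\theta)$ explicitly, use monotonicity of the relevant $\Phi$-terms and the bound $P_s(\theta,\nu_1)\le 1$ and $P_s(\theta,\nu_0)\ge$ its value at $c_1$, and minimize the resulting expression over the interval; the uniform constant $\frac{2}{1+\alpha}-2\Phi(\tfrac12 z_{\alpha/2})$ should drop out by evaluating at the worst endpoint and bounding $\tilde\sigma(\theta)\le\sigma$.

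Part $(c)$, on $[\nu_0,\infty)$, is where the work concentrates and is the main obstacle: here both $CR_a(\theta,\nu_2)$ and $CR_b(\theta,\nu_2)$ become active, $CR_b(\theta,\nu_1)$ moves through its three branches, and the common denominator is $P_s(\theta,\nu_1)$ which is close to $1$ but not equal to it, so the sign of $\Delta(\theta)$ genuinely can be negative and one only gets the small negative floor $-\alpha/2$. The plan is to further subdivide $[\nu_0,\infty)$ at $\min\{\nu_3,c_3\}$, $\max\{\nu_3,c_2\}$, $c_4$, $\nu_4$ exactly as in the displayed case split, and on each piece substitute branches, form $\Delta(\theta)=CR(\theta)-CR_1(\theta)$, and bound each $\Phi$-difference using $\tilde\sigma(\theta)\le\sigma$, $1\le 1+\lambda/\theta^2\le 1+\lambda/\nu_0^2 = 2$ (so $\tilde\sigma(\theta)/\sigma\in[1/2,1]$ on $[\nu_0,\infty)$), and the monotonicity of $\Phi$; the ratio $\Phi(-kz_{\alpha/2})/\Phi(kz_{\alpha/2})^2$ bounds come from replacing $P_s(\theta,\nu_1)$ in the denominator by $\Phi(kz_{\alpha/2})^2$-type lower bounds once $\theta$ is far enough right. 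The three cases of Table \ref{tab:threecase} correspond to the three possible orderings of $c_3$, $c_2$, $\nu_3$, and each is handled by the same substitution-and-bound routine. Throughout, I would lean on the previously established facts $c_1<c_3<c_2<c_4$ and the explicit ranges in Lemma \ref{LEM:boundarypoints} to certify which branch is active, and on $\alpha\ge\tau_0$ (guaranteed by \eqref{eqn:lambdacri}) together with (C1)--(C2) to keep every constant in $(-\alpha/2, 1)$.
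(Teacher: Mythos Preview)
Your overall partition strategy matches the paper's, and part (c) is essentially on the right track, but there are two concrete gaps in parts (a) and (b).

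First, a branch mis-identification for $CR_b(\cdot,\nu_1)$. On $[0,c_1]$ you invoke the first branch $CR_b(\theta,\nu_1)=P_s(\theta,\nu_1)-\alpha$, but that branch only applies for $|\theta|\le|\nu_1-z_{\alpha/2}\sigma/\sqrt n|=(z_{\alpha/2}-z_{\tau/2})\sigma/\sqrt n$, and Lemma \ref{LEM:boundarypoints}(a) gives $c_1>(z_{\alpha/2}-z_{\tau/2})\sigma/\sqrt n$. So on the upper part of $[0,c_1]$, and on all of $[c_1,\nu_0]$ (where you again claim ``still the first branch of $CR_b$''), you are actually in the \emph{second} branch, $CR_b(\theta,\nu_1)=1-\alpha/2-\Phi(\sqrt n(\nu_1-\theta)/\sigma)$. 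The conclusion of (a) survives, but the clean route is the monotonicity lemma for $CR_b/P_s$ (Lemma \ref{LEM:monoJ2}): $CR(\theta)=CR_b(\theta,\nu_1)/P_s(\theta,\nu_1)$ is non-decreasing on $[0,\nu_3]\supset[0,c_1]$, hence $\ge CR(0)=1-\alpha/\tau$, while $CR_1(\theta)=0$ on $[0,c_1]$ as you noted.

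Second, and more substantively, part (b) is missing its key step. Once both $CR_1$ and $CR$ are placed in their middle branches on $[c_1,\nu_0]$, bounding each piece separately (via $P_s(\theta,\nu_1)\le 1$, $\tilde\sigma\le\sigma$, etc.) does not by itself locate the minimum of $\Delta$ at an endpoint or produce the specific constant $\frac{2}{1+\alpha}-2\Phi(\tfrac12 z_{\alpha/2})$. The paper instead carries out a derivative comparison: one shows $\partial CR_1/\partial\theta>\partial CR/\partial\theta$ on $[c_1,\nu_0]$ by comparing the two summands in each derivative separately, using that $\Phi(x)/\phi(x)$ is increasing in $x$ and that $\Phi(-x-y)/\phi(x-y)$ is decreasing in $y>0$. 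This makes $\Delta$ strictly decreasing on $[c_1,\nu_0]$, so its minimum is $\Delta(\nu_0)=CR(\nu_0)-CR_1(\nu_0)$; evaluating at $\nu_0$ with $\tilde\sigma(\nu_0)=\sigma/2$ then yields the stated bound. Without this monotonicity-of-$\Delta$ argument your plan for (b) is incomplete.
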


\begin{theorem}
\label{THM:crdiff2}
Under conditions (C1)-(C2), if $\lambda$ satisfies $ \sqrt{\lambda} \geq (z_{\alpha/2}+z_{\tau/2})\frac{\sigma}{\sqrt{n}}$,
the  lower bounds for $\Delta(\theta)$ are provided as follows:
\begin{enumerate}[$(a)$]
\item when $\theta \in \left[0, \min\left\{\nu_3,c_1\right\} \right],$ $\Delta(\theta) \geq 1-\frac{\alpha}{\tau} >0; $
\item when $\theta \in \left[\min\left\{\nu_3,c_1\right\}, \nu_0 \right],$ see Table \ref{tab:twocase};
\item when $\theta \in \left[\nu_0, +\infty \right),$
$\Delta(\theta) \geq 0$ or $-\frac{\alpha}{2} < \Delta(\theta)  < 0$.
\end{enumerate}

\begin{table}[h!]
\caption{Specific bounds of $\Delta(\theta)$ on interval $\left[ \min\left\{\nu_3, c_1 \right\}, \nu_0 \right]$}
\label{tab:twocase}
\centering
\begin{tabular}{l |c | c c}
\hline
\hline
case 4: &  \multicolumn{2}{c} {$\theta \in \left[\nu_3, \nu_0 \right]$ }\\
$\nu_3 <c_1$ & \multicolumn{2}{c} {$2-\alpha -2\Phi(\frac{1}{2}z_{\alpha/2})$}\\
\hline
case 5: & $\theta \in \left[c_1,  \nu_3 \right] $ &  $\theta \in \left[\nu_3, \nu_0 \right]$ \\
$c_1< \nu_3 $ & $\Phi(-\frac{1}{2}z_{\alpha/2})-\frac{\alpha}{2}$ &
$2-\alpha-2\Phi(\frac{1}{2}z_{\alpha/2})$\\
\hline
\hline
\end{tabular}
\end{table}
\end{theorem}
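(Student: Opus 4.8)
The plan is to prove Theorem \ref{THM:crdiff2} by a piecewise analysis of $\Delta(\theta)=CR(\theta)-CR_1(\theta)$ on $[0,\infty)$, using the closed forms of $CR_1$ and $CR$ from Lemma \ref{LEM:CR} together with the branch structure of $P_s(\cdot,\nu)$, $CR_a(\cdot,\nu)$ and $CR_b(\cdot,\nu)$. The first step is to fix the ordering of the thresholds. Condition (C1) gives $\tau\ge\alpha$, hence $z_{\tau/2}\le z_{\alpha/2}$ and $\nu_1=z_{\tau/2}\sigma/\sqrt n\le z_{\alpha/2}\sigma/\sqrt n$; the standing hypothesis $\sqrt\lambda\ge\nu_3=(z_{\alpha/2}+z_{\tau/2})\sigma/\sqrt n$ (which in particular implies \eqref{eqn:lambdacri}) together with $0<\tilde\sigma(\theta)<\sigma$ then gives $\nu_1<c_1<\nu_0=\sqrt\lambda<\nu_2<\nu_4$ and $\nu_3\le\nu_0<c_3$, while Lemma \ref{LEM:boundarypoints} supplies $c_1<c_3<c_2<c_4$. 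The only remaining freedom is the position of $\nu_3$ relative to $c_1$, which is exactly the split into the two cases of Table \ref{tab:twocase}. On each subinterval cut out by these points every occurrence of $P_s$, $CR_a$, $CR_b$ lies on a single analytic branch, so $\Delta$ is an explicit smooth function there; throughout we use that $P_s(\cdot,\nu)$ is increasing on $[0,\infty)$ with $P_s(0,\nu_1)=2\Phi(-z_{\tau/2})=\tau$ and $P_s\le1$, and that $\tilde\sigma$ is increasing in $|\theta|$ with $\tilde\sigma(\sqrt\lambda)=\sigma/2$, so that $\tilde\sigma(\theta)\le\sigma/2$ for $\theta\le\nu_0$ and $\tilde\sigma(\theta)\in[\sigma/2,\sigma)$ for $\theta\ge\sqrt\lambda$.

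For part $(a)$, the key observation is that on $[0,\min\{\nu_3,c_1\}]$ we have $CR_1(\theta)=0$ and $CR(\theta)=CR_b(\theta,\nu_1)/P_s(\theta,\nu_1)$: indeed $\sqrt\lambda\ge z_{\alpha/2}\sigma/\sqrt n>z_{\alpha/2}\tilde\sigma(\theta)/\sqrt n$ kills the indicator in the first branch of $CR_a(\cdot,\nu_0)$, and it likewise forces $CR_a(\theta,\nu_2)=CR_b(\theta,\nu_2)=0$ for all $\theta\le\nu_0$. On the sub-piece where $CR_b(\theta,\nu_1)=P_s(\theta,\nu_1)-\alpha$ one gets $\Delta(\theta)=CR(\theta)=1-\alpha/P_s(\theta,\nu_1)\ge1-\alpha/\tau$; on the next sub-piece, where $CR_b(\theta,\nu_1)=1-\tfrac\alpha2-\Phi(\sqrt n(\nu_1-\theta)/\sigma)$, one checks that $CR_b(\theta,\nu_1)/P_s(\theta,\nu_1)$ is increasing in $\theta$ (its logarithmic derivative is positive, using $CR_b<P_s$ and $CR_b'>P_s'>0$) and matches the previous expression at the junction, so the bound $1-\alpha/\tau$ persists up to $\min\{\nu_3,c_1\}$.

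For part $(b)$, on $[\min\{\nu_3,c_1\},\nu_0]$, the plan is to bound $CR_1$ above and $CR$ below separately. On $[c_1,\nu_0]$, $CR_1$ is on its middle branch, and a short manipulation — dropping the nonnegative tail $\Phi(-\sqrt n(\theta+\nu_0)/\sigma)$ from $P_s(\theta,\nu_0)$, writing $\Phi(\sqrt n(\nu_0-\theta)/\sigma)=1-\Phi(\sqrt n(\theta-\nu_0)/\sigma)$, and using $\Phi(\sqrt n(\theta-\nu_0)/\sigma)\le\tfrac12$ (since $\theta\le\nu_0$) and $\tilde\sigma(\theta)\le\tilde\sigma(\nu_0)=\sigma/2$ — yields $CR_1(\theta)\le2\Phi(z_{\alpha/2}\tilde\sigma(\theta)/\sigma)-1\le2\Phi(\tfrac12z_{\alpha/2})-1$, while on $[\min\{\nu_3,c_1\},c_1]$ one has $CR_1\equiv0$. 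On the part with $\theta>\nu_3$ one has $CR_b(\theta,\nu_1)=1-\alpha$ and $CR(\theta)=(1-\alpha)/P_s(\theta,\nu_1)\ge1-\alpha$, hence $\Delta(\theta)\ge2-\alpha-2\Phi(\tfrac12z_{\alpha/2})=2\Phi(-\tfrac12z_{\alpha/2})-\alpha>0$ by (C2); this is case~4 and the $[\nu_3,\nu_0]$ column of case~5. On the remaining part $\theta\in[c_1,\nu_3]$ of case~5, $CR(\theta)=CR_b(\theta,\nu_1)/P_s(\theta,\nu_1)$ with $CR_b$ on its middle branch, and combining with the above bound on $CR_1$ reduces the claim to elementary inequalities in $\Phi(-\tfrac12z_{\alpha/2})$ and $\alpha$ that again follow from (C2).

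The heart of the proof is part $(c)$, $\theta\ge\nu_0$, where both procedures report the same bias-corrected interval but with different selection events, so $\Delta$ may be negative and we must show $\Delta(\theta)\ge0$ or $-\tfrac\alpha2<\Delta(\theta)<0$. Here $\theta\ge\nu_0\ge\nu_3$ gives $CR_b(\theta,\nu_1)=1-\alpha$, so $CR(\theta)=\frac{(1-\alpha)+CR_a(\theta,\nu_2)-CR_b(\theta,\nu_2)}{P_s(\theta,\nu_1)}$ and $CR_1(\theta)=\frac{CR_a(\theta,\nu_0)}{P_s(\theta,\nu_0)}$. Because $\nu_0<c_3$, the term $CR_a(\theta,\nu_2)$ vanishes on $[\nu_0,c_3]$, where $CR(\theta)=\frac{\Phi(\sqrt n(\nu_2-\theta)/\sigma)-\alpha/2}{P_s(\theta,\nu_1)}$; for $\theta\ge c_3$ the common $\Phi(\sqrt n(\nu_2-\theta)/\sigma)$ cancels between the middle branches of $CR_a(\cdot,\nu_2)$ and $CR_b(\cdot,\nu_2)$, leaving $CR(\theta)=\frac{\Phi(z_{\alpha/2}\tilde\sigma(\theta)/\sigma)-\alpha/2}{P_s(\theta,\nu_1)}$, and past $c_4$ (resp.\ $\nu_4$) $CR_a(\cdot,\nu_2)$ (resp.\ $CR_b(\cdot,\nu_2)$) saturates at $1-2\Phi(-z_{\alpha/2}\tilde\sigma(\theta)/\sigma)$ (resp.\ $1-\alpha$); correspondingly $CR_1$ is on its middle branch for $\theta\in[\nu_0,c_2]$ and equals $\frac{1-2\Phi(-z_{\alpha/2}\tilde\sigma(\theta)/\sigma)}{P_s(\theta,\nu_0)}$ for $\theta>c_2$. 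On each resulting piece one clears denominators and bounds $\Delta$ below using $\Phi(\sqrt n(\nu_0-\theta)/\sigma)\le\tfrac12$, $\tilde\sigma(\theta)\in[\sigma/2,\sigma)$ and monotonicity of $P_s(\cdot,\nu)$, so the minimum is attained at an endpoint and reduces to an explicit expression in $\Phi(-\tfrac12z_{\alpha/2}),\Phi(-\tfrac32z_{\alpha/2}),\Phi(-2z_{\alpha/2})$ exactly as in Theorem \ref{THM:crdiff}$(c)$; checking that each such expression exceeds $-\tfrac\alpha2$ then gives the dichotomy. This is expected to be the main obstacle: because $CR$ and $CR_1$ are conditional coverage probabilities with different conditioning events ($|\hat\theta_{LS}|>\nu_1$ versus $|\hat\theta_{LS}|>\nu_0$), the denominators satisfy $P_s(\theta,\nu_1)>P_s(\theta,\nu_0)$ and a naive comparison of numerators fails, so both ratios must be retained and the $\Phi$-estimates tracked carefully across the several sub-intervals; once the ordering of $\nu_3,c_2,c_3,c_4,\nu_4$ is pinned down the remaining work is elementary but long.
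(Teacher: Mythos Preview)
Your piecewise strategy is exactly the one the paper follows (cf.\ the proof of Theorem~\ref{THM:crdiff} and Lemma~\ref{LEM:J1-J4}): fix the branch of each of $P_s,CR_a,CR_b$ on each subinterval, reduce $\Delta$ to an explicit ratio, and bound by elementary $\Phi$--inequalities. Parts~(a) and~(c) are correctly outlined and match the paper's argument (in particular, for $\theta\ge\nu_0$ you correctly note that $\nu_3\le\nu_0$ collapses the three cases of Theorem~\ref{THM:crdiff} into a single ordering $\nu_3\le\nu_0<c_3<c_2<c_4<\nu_4$, and the telescoping $J_2-J_3+J_4-J_1$ then proceeds interval by interval just as in the displayed proof of Theorem~\ref{THM:crdiff}(c)).

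There is, however, a genuine gap in your treatment of part~(b), case~5, on the sub-interval $[c_1,\nu_3]$. You propose to bound $CR_1$ and $CR$ \emph{separately}: $CR_1(\theta)\le 2\Phi(\tfrac12 z_{\alpha/2})-1$ (correct, by the manipulation you describe) and then ``elementary inequalities from (C2)'' for $CR$. But on $[c_1,\nu_3]$ the only generic lower bound your argument supplies for $CR(\theta)=J_2(\theta)$ on its middle branch is $J_2(\theta)\ge \tfrac12-\tfrac\alpha2$ (from $\theta>\nu_1$ and $P_s\le1$), and subtracting gives $\Delta(\theta)\ge \tfrac32-\tfrac\alpha2-2\Phi(\tfrac12 z_{\alpha/2})$, which is \emph{negative} for the usual $\alpha$ (e.g.\ $\approx-0.17$ at $\alpha=0.05$) and does not deliver the table entry $\Phi(-\tfrac12 z_{\alpha/2})-\tfrac\alpha2>0$. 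Knowing in addition that both $CR$ and $CR_1$ are increasing on this piece (your part~(a) argument and Lemma~\ref{LEM:monocr1}) still does not bound their difference.

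What the paper actually uses here (by direct analogy with the displayed proof of Theorem~\ref{THM:crdiff}(b)) is the \emph{derivative comparison} $\partial CR_1/\partial\theta>\partial CR/\partial\theta$ on the common middle-branch region. On $[c_1,\nu_3]$ both $CR_1$ and $CR$ sit on their respective middle branches (of $CR_a(\cdot,\nu_0)$ and $CR_b(\cdot,\nu_1)$), and the exact computation in \eqref{eq:thm1eq1}--\eqref{eq:thm1eq4} together with the two monotonicity facts for $\Phi(x)/\phi(x)$ and $\Phi(-x-y)/\phi(x-y)$ shows $\Delta$ is strictly decreasing there; hence $\Delta(\theta)\ge\Delta(\nu_3)$, and only then do the separate bounds $CR(\nu_3)=(1-\alpha)/P_s(\nu_3,\nu_1)\ge1-\alpha$ and $CR_1(\nu_3)\le2\Phi(\tfrac12 z_{\alpha/2})-1$ yield the stated entry. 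You should insert this derivative step; without it the $[c_1,\nu_3]$ bound does not follow from your separate estimates.
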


Theorem \ref{THM:crdiff} and Theorem \ref{THM:crdiff2}  indicate  that the  proposed method outperforms the asymptotic-based method, with a uniform lower bound for $\Delta(\theta)$ when  $\theta \in \left[0, \nu_0\right]$. More specifically, the lower bound of $\Delta(\theta)$ depends  on $\alpha$ and $\tau$  for case (i) $(\theta \in \left[0, c_1 \right])$ in Theorem \ref{THM:crdiff} and case (i) $(\theta \in \left[0, \min\left\{\nu_3,c_1\right\} \right])$ in Theorem \ref{THM:crdiff2}. Since we select $\tau$ to be larger than $\alpha,$ it is clear that $\Delta(\theta)$ is bounded above zero.  For case (ii) $(\theta \in \left[c_1, \nu_0 \right])$ in Theorem \ref{THM:crdiff} and case (ii) $(\theta \in \left[\min\left\{\nu_3,c_1\right\}, \nu_0 \right])$ in Theorem \ref{THM:crdiff2}, the lower bound of $\Delta(\theta)$ only depends on $\alpha$. In fact,  the minimum value of $\frac{2}{1+\alpha}-2\Phi(\frac{1}{2}z_{\alpha/2})$ is larger than $0.22$ if $\alpha \in \left[0.05, 0.1\right]$, based on Theorem \ref{THM:crdiff}. This  also confirms  that the proposed method provides a confidence region with at least 22\%  improvement in  coverage rate than the one based on the  asymptotic method. The lower bounds of case (ii) in Theorem \ref{THM:crdiff2} can be interpreted in a similar way.

In addition, both Theorem \ref{THM:crdiff} and Theorem \ref{THM:crdiff2} imply that when $\theta \in \left(\nu_0, +\infty \right)$ with a moderately large coefficient, the proposed method performs better than, or close to, the asymptotic method. 
In summary, the two-step inference method provides  more accurate coverage than the one based on the asymptotic inference,  and is also more  effective   for the weak signal region.

In Theorem \ref{THM:crdiff}, since the  order relationships among $c_2, c_3$ and $\nu_3$ change for different ranges of tuning parameters and  choices of false positive rate $\tau$,  it leads to the three cases in Table \ref{tab:threecase}. Similarly, the order relationships among $\nu_3$ and $c_1$ also change for different choices of $\lambda$ and $\tau$ in Theorem \ref{THM:crdiff2}, leading to the two cases in Table \ref{tab:twocase}.  Figure \ref{fig:CRvsCR1_case1} illustrates an example for case 1. Figures for the other four cases are provided in the supplemental material. 

\begin{figure}[h!]
\begin{center}
\includegraphics[width=0.8\textwidth]{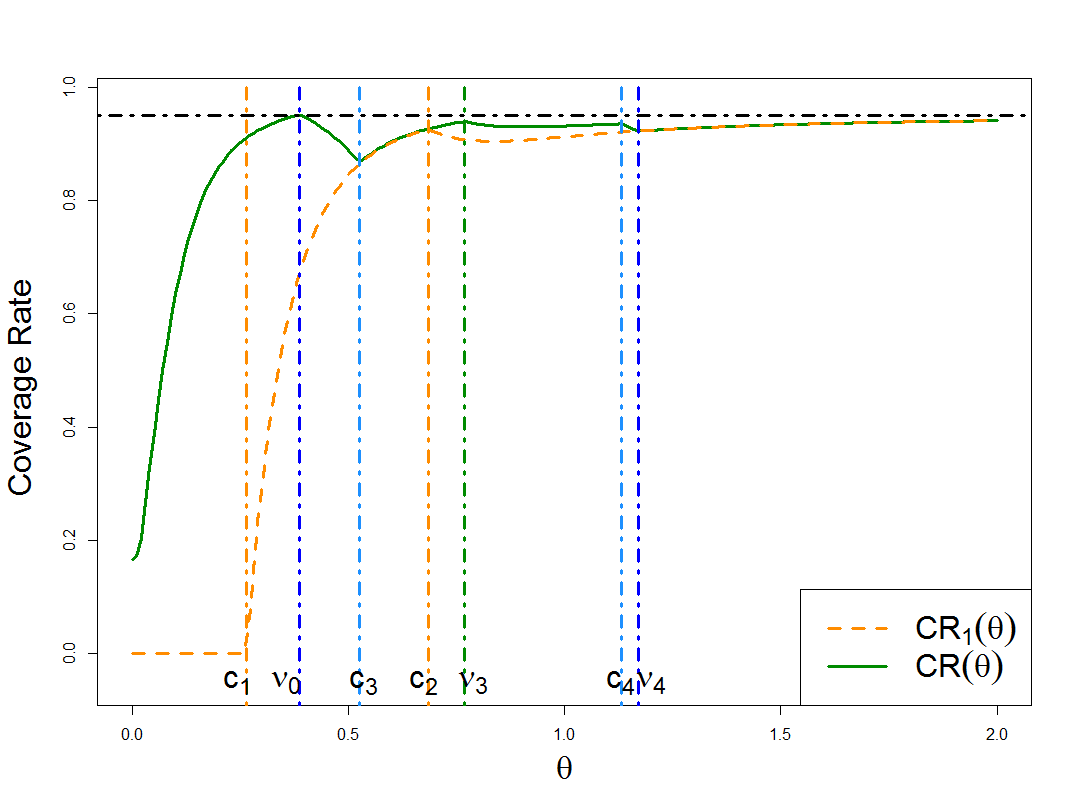}
\caption{$CR(\theta)$ versus $CR_1(\theta)$ (An example: Case 1)}
\label{fig:CRvsCR1_case1}
\end{center}
\end{figure}

\section{Finite Sample Performance}
\label{sec:example}
\subsection{Simulation Studies}
To examine the empirical performance of the proposed inference procedure, we conduct  simulation studies  to evaluate the accuracy of the confidence intervals described in section \ref{sec:twostepinf}. We generate $400$ simulated data with  a sample size of $n$ under the linear model $y=\bm{X}\bm{\theta}+\mathcal{N}(0,\sigma^2),$  where $\bm{X}=(\bm{X}_1,\cdots,\bm{X}_p)$ and $\bm{X}_j \sim \mathcal{N}(\bm{0},\bm{I_n}).$  We allow  covariates $\bm{X}$ to be correlated with an AR(1) correlation structure, and the pairwise correlation $cor(\bm{X_i,X_j})=\rho^{\vert i-j \vert}.$ We  choose  $(n,p,\sigma)=(100,20,2)$ and $(400,50,2)$,  and $\rho= 0, 0.2$ or $0.5$ for  each  setting.  In addition,  the $p$-dimensional coefficient vector $\bm{\theta}= (1,1, 0.5, \theta, 0, \cdots ,0)$, which  consists of two strong signals of 1's, one moderate strong signal of 0.5, one varying-signal $\theta$, and $(p-4)$ null  variables. We let the coefficient $\theta$ vary between 0 (null) to 1 (strong signal) to examine the confidence coverages across different signal strength levels.

We construct $95\%$ confidence intervals for an identified signal based on (\ref{inf}). 
We implement the glmnet package in R (\cite{Friedman:2010}) to obtain the adaptive Lasso estimator. 
We choose the tuning parameter $\lambda$ based on  the  Bayesian information criterion (BIC), because of its consistency property to select the true model (\cite{Wang:2007b}).  Here we follow a strategy by  \cite{Wang:2007a} to select the  BIC tuning parameter for the adaptive Lasso penalty (details are provided in Appendix A.2). The standard deviation $\hat{\sigma}$ is estimated  based on the scaled Lasso method (\cite{Sun:2012}), using the `scalreg' package in R.  We replace $\hat{\theta}_{i}$ by its bias-corrected form $\hat{\theta}_{i}+\hat{b}_{AL,i}$ in (\ref{alassosdi})  when estimating $\hat{\sigma}_{AL}$, which achieves better estimation of the true standard deviation.
For comparison, we also construct standard confidence intervals based on the asymptotic formula in (\ref{inf1}), along with the bootstrap method (\cite{Efron:1994}), the smoothed bootstrap method (\cite{Efron:2014}), the perturbation method (\cite{Minnier:2011}), and the de-biased Lasso method (\cite{Javanmard:2014}). The de-biased method is implemented using the R codes provided by Montanari's website.  For both  regular bootstrap and  smoothed bootstrap methods, the number of bootstrap sampling is set to be $4000$ (\cite{Efron:2014}).  For the perturbation method, the resampling time is set to be $500$ according to \cite{Minnier:2011}.

 In addition, the coverage rate for the OLS estimator is included as a benchmark since there is no shrinkage in  OLS estimation and the confidence interval is the most accurate.  Here the OLS estimator  $\hat{\theta}_{LS}$ given  in (\ref{inf2}) is estimated from the full model. 
 We used the estimator from the full model  because  our method assumes that the covariates are  orthogonally  designed. Under this  assumption,  the least square estimator under a submodel is  the  same estimator  as that   under  the full model.  If covariates are correlated, the estimator under the correctly specified  submodel is more efficient than the one  under the full model. 
However, we cannot guarantee that the  selected submodel is correctly  specified. If the submodel is  misspecified, then the $\hat{\theta}_{LS}$ could be biased, which could lead to inaccurate inference for the coefficients of the selected variables. Note that selection of the wrong model is likely, especially when  weak signals exist.

 
Figure \ref{fig:typeIerr} illustrates  the relationship between $\tau_0$ and $\tau$ when $\rho=0.2$ for  two model settings $(n,p,\sigma)=(100,20,2)$ and $(400,50,2)$, where  $\tau_0 = 2\Phi(-\frac{\sqrt{n\lambda}}{\sigma})$  based on Section \ref{subsec:sigphase}. We choose $\tau$ larger than $\tau_0$ according to Section 3.2, that is, the false-positive rate in the weak signal recovery procedure is slightly larger  than the false-positive rate in the model selection procedure. In  these  two model settings, $\tau_0$ are around 0.1 and 0.03 respectively;  here we select $\tau$ as 0.2.   In practice,  the  selection of $\tau$ is flexible,  and can be determined by  a  tolerance level for  including noise variables.

\begin{figure}[h!]
\subfigure{\includegraphics[width=0.48\textwidth]{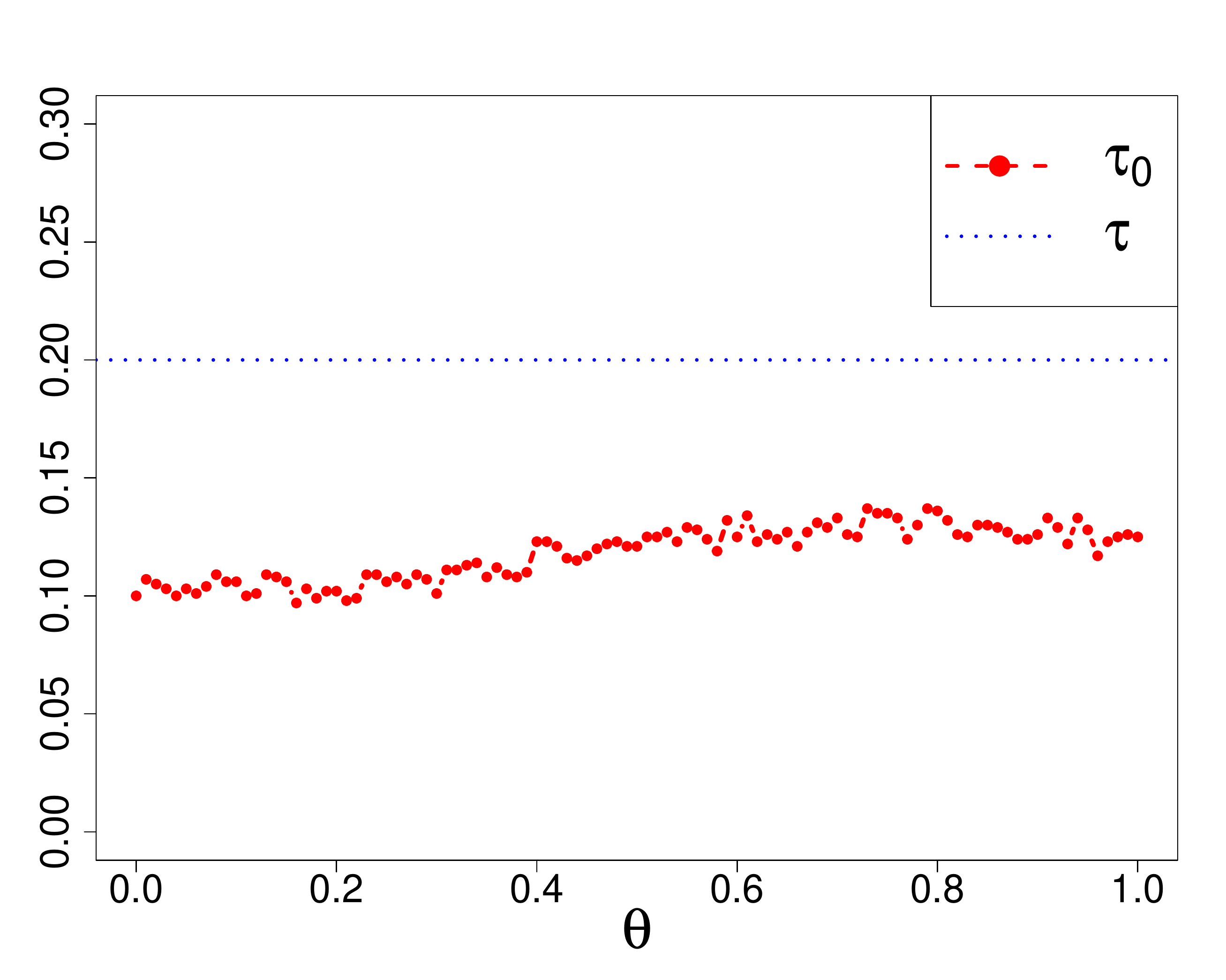}}
\subfigure{\includegraphics[width=0.48\textwidth]{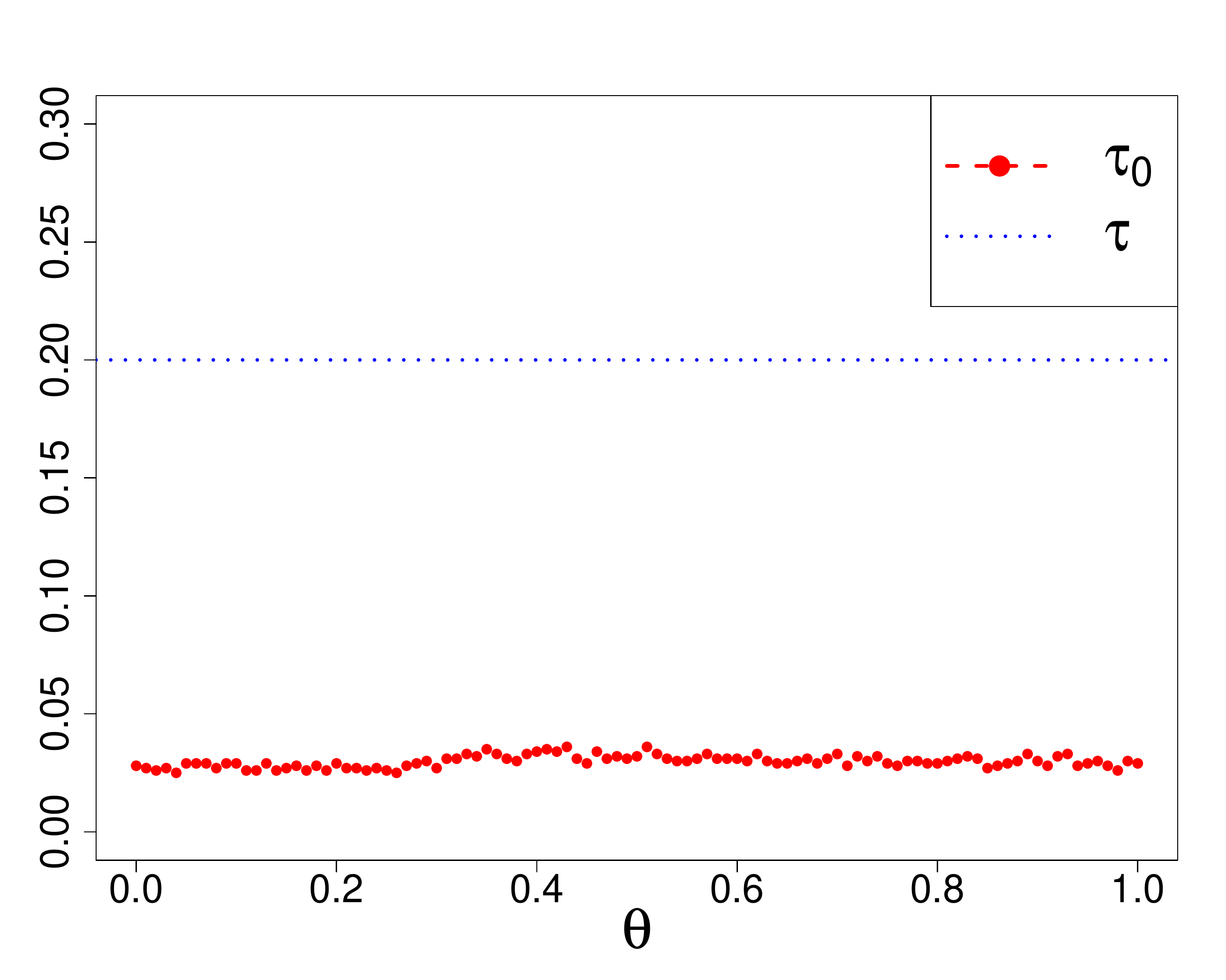}}
\caption[]{False positive rate for simulation setting 1 (left) \& 2 (right)}
\label{fig:typeIerr}
\end{figure}
Figure \ref{fig:cr1} and Figure \ref{fig:cr2} provide the coverage probabilities for $\theta$ varying between 0 and 1 when $\rho=0.2$ in two model settings. In each figure, $\nu^s$ and $\nu^w$ are the average threshold coefficients  corresponding to the detection powers $P_d =  0.9$ and $0.1$, respectively.  When the signal strength is close to zero,  neither of the coverage rates using our method and  the asymptotic method are accurate. However,  the proposed method is  still better than the asymptotic one,  since the asymptotic coverage rate is close to  zero; while the bootstrap and perturbation methods tend to provide  over-coverage confidence intervals.  The proposed method becomes more accurate as the magnitude of signal $\theta$ increases, and also outperforms all the other methods especially in the weak signal region.  For example,  in setting 1,  the coverage rate of the proposed method  is quite close to 95\% when the signal strength is larger than 0.4.  On the other hand, the  resampling methods and  asymptotic inference provide low coverage rates for signal strength below 0.8. When signal strength  is above 0.8, the coverages from all methods are accurate and  close to 95\%.
\begin{figure}[h!]
\centering
\includegraphics[width=0.8\textwidth]{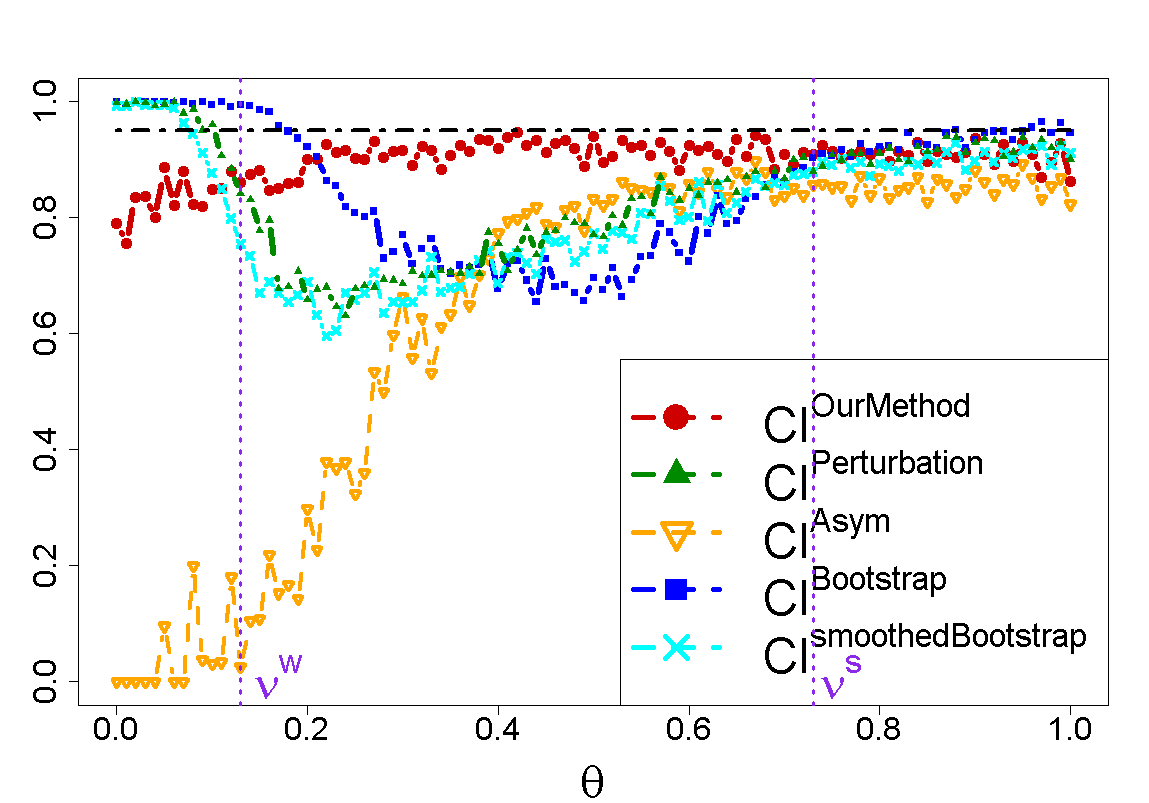}
\caption{95\% confidence interval's coverage rates for simulation setting 1  when $\rho=0.2$}
\label{fig:cr1}
\end{figure}

\begin{figure}[h!]
\centering
\includegraphics[width=0.8\textwidth]{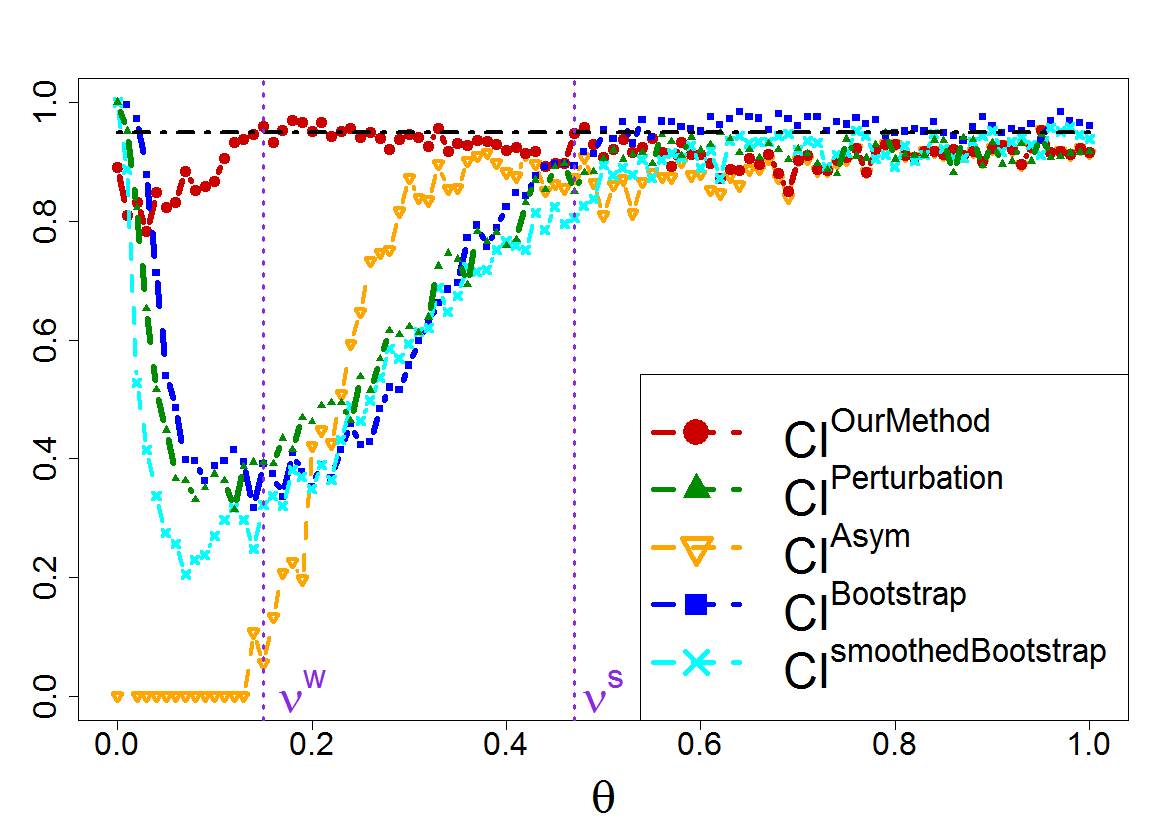}
\caption{95\% confidence interval's coverage rates for simulation setting 2 when  $\rho=0.2$}
\label{fig:cr2}
\end{figure}

\begin{table}[h!]
\small
\caption{Coverage probabilities  of confidence regions when $\sigma=2$}
\label{tab:CR1}
\begin{center}
\begin{tabular}{l l l r r r r r r}
\toprule
\toprule
 &  &  & \multicolumn{3}{c}{p=20} & \multicolumn{3}{c}{p=50} \\
\cmidrule(l){4-6} \cmidrule(l){7-9}
n & $\theta$ &  &$\rho=0$ & $\rho=0.2$ & $\rho=0.5$ &$\rho=0$ & $\rho=0.2$ & $\rho=0.5$ \\
\midrule
100 & 0.3 & $CR^{Our}$ & 94.4 & 92.6 & 91.1 & 92.1 & 91.1 & 95.3\\
& & $CR^{Asym}$ & 61.5 & 61.2 & 38.3 & 33.3 & 18.5 & 21.4\\
& & $CR^{Ptb}$ & 67.3 & 68.6 & 74.2 & 68.6 & 64.8 & 58.6\\
& & $CR^{Bs}$ & 74.5 & 77.0 & 88.7 & 100.0* & 100.0* & 100.0* \\
& & $CR^{smBS}$ & 68.1 & 65.4 & 74.3 & 95.1* & 93.5* & 92.3* \\
& & $CR^{OLS}$ & 93.2 & 93.2 & 94.0 & 94.5 & 93.0 & 95.0\\
& & $CR^{Lasso-debiased}$ & 94.5 & 95.5 & 97.0 & 94.8 & 94.0 & 96.5\\
& 0.75 & $CR^{Our}$ & 94.4 & 92.9 & 91.9 & 93.8 & 92.5 & 93.6\\
& & $CR^{Asym}$ & 89.6 & 87.4 & 75.1 & 85.3 & 77.5 & 63.9\\
& & $CR^{Ptb}$ & 87.6 & 90.9 & 86.4 & 90.0 & 93.8 & 78.9\\
& & $CR^{Bs}$ & 91.4 & 90.7 & 88.0 & 98.9* & 98.8* & 100.0*\\
& & $CR^{smBS}$ & 89.3 & 89.1 & 89.2 & 91.3* & 95.3* & 91.1*\\
& & $CR^{OLS}$ & 92.8 & 96.0 & 94.0 & 95.5 & 95.5 & 94.0\\
& & $CR^{Lasso-debiased}$ & 93.8 & 96.5 & 96.8 & 96.3 & 96.0 & 96.3\\
\midrule
200 & 0.2 & $CR^{Our}$ & 94.6 & 94.7 & 93.3 & 95.3 & 93.7 & 91.3 \\
& & $CR^{Asym}$ & 52.0 & 51.6 & 38.1 & 15.9 & 22.4 & 18.0\\
& & $CR^{Ptb}$ & 61.4 & 65.3 & 69.4 & 48.1 & 44.2 & 49.3\\
& & $CR^{Bs}$ & 58.5 & 58.1 & 72.8  &56.1  & 61.0 & 63.5\\
& & $CR^{smBS}$ & 54.7 & 50.5 & 62.6 & 46.0 &  48.8 & 46.4\\
& & $CR^{OLS}$ & 95.2 & 94.2 & 95.8 & 95.2 & 95.5 & 95.8\\
& & $CR^{Lasso-debiased}$ & 93.5 & 95.0 & 96.5 & 95.5 & 96.0 & 94.8\\
& 0.6 & $CR^{Our}$ & 95.5 & 93.0 & 91.5 & 93.7 & 91.6 & 90.3\\
& & $CR^{Asym}$ & 88.8 & 86.2 & 76.6 & 88.5 & 82.7 & 65.0\\
& & $CR^{Ptb}$ & 90.2 & 92.6 & 86.1 & 84.2 & 88.0 & 89.6 \\
& & $CR^{Bs}$ & 90.7 & 91.7 & 88.4 & 86.9  & 89.4 & 82.7 \\
& & $CR^{smBS}$ & 88.4 & 89.5 & 91.2 & 80.7 & 84.2 & 81.4\\
& & $CR^{OLS}$ & 96.2 & 96.0 & 96.5 & 95.2 & 93.8 & 94.2\\
& & $CR^{Lasso-debiased}$ & 95.5 & 95.3 & 96.3 & 95.0 & 95.0 & 94.0\\
\midrule
400 & 0.15 & $CR^{Our}$ & 93.6 & 94.6 & 93.4 & 97.0 & 96.3 & 90.5\\
& & $CR^{Asym}$ & 31.7 & 33.8 & 44.8 & 9.1 & 11.6 & 10.7\\
& & $CR^{Ptb}$ & 33.6 & 51.0 & 60.3 & 33.6 & 39.3 & 44.7 \\
& & $CR^{Bs}$ & 35.3  & 54.2 & 57.9  & 35.3 & 39.1 & 38.6 \\
& & $CR^{smBS}$ & 27.4 & 49.9 & 52.2 & 27.4 & 32.2 & 31.3\\
& & $CR^{OLS}$ & 92.5 & 96.8 & 96.0 & 94.8 & 95.5 & 94.2\\
& & $CR^{Lasso-debiased}$ & 90.8 & 93.3 & 91.5 & 94.0 & 96.5 & 93.0 \\
 & 0.4 & $CR^{Our}$ & 94.8 & 92.2 & 92.2 & 92.7 & 92.1 & 92.8\\
& & $CR^{Asym}$ & 94.0 & 91.2 & 85.5 & 91.7 & 88.7 & 72.6\\
& & $CR^{Ptb}$ & 79.5 & 89.3 & 89.2 & 79.5 & 75.8 & 70.7\\
& & $CR^{Bs}$ & 87.5 & 89.3 & 80.3 & 87.5 & 82.4 & 70.3\\
& & $CR^{smBS}$ & 79.8 & 87.2 & 84.3 & 79.8 & 76.6 & 71.0\\
& & $CR^{OLS}$ & 95.8 & 93.2 & 93.5 & 94.5 & 94.8 & 93.0\\
& & $CR^{Lasso-debiased}$ & 90.0 & 92.5 & 93.5 & 95.8 & 94.3 & 93.8 \\
\midrule
\bottomrule
\end{tabular}
\end{center}
{\small  Note: The values are multiplied by 100. $*$ indicates that the bootstrap and smooth bootstrap methods  encounter  a singular-designed matrix problem (7-10\% times), and only partial simulation results   are used for calculation.}
\end{table}

\begin{table}[h!]
\small
\caption{Average widths of confidence intervals when $\sigma=2$}
\label{tab:CIwidth1}
\begin{center}
\begin{tabular}{l l l r r r r r r}
\toprule
\toprule
 &  &  & \multicolumn{3}{c}{p=20} & \multicolumn{3}{c}{p=50} \\
\cmidrule(l){4-6} \cmidrule(l){7-9}
n & $\theta$ &  &$\rho=0$ & $\rho=0.2$ & $\rho=0.5$ &$\rho=0$ & $\rho=0.2$ & $\rho=0.5$ \\
\midrule
100 & 0.3 & len$CR^{Our}$ & 0.862 & 0.889 & 1.098 & 1.083 & 1.131 & 1.403\\
& & len$CR^{Asym}$ & 0.594 & 0.593 & 0.582 & 0.542 & 0.541 & 0.519\\
& & len$CR^{Ptb}$ & 0.652 & 0.691 & 0.803 & 0.679 & 0.672 & 0.829\\
& & len$CR^{Bs}$ & 0.786 & 0.818 & 0.954 & 1.717{*} & 1.756{*} & 2.206{*}\\
& & len$CR^{smBS}$ & 0.626 & 0.654 & 0.746 & 0.998{*} & 1.014{*} & 1.296{*} \\
& & len$CR^{OLS}$ & 0.864& 0.900 & 1.094 & 1.106 & 1.135 & 1.401\\
& & len$CR^{Lasso-debiased}$ & 1.071 & 1.081 & 1.304 & 1.179 & 1.220 & 1.487\\
& 0.75 & len$CR^{Our}$ & 0.770 & 0.794 & 1.031 & 0.956 & 1.045 & 1.306\\
& & len$CR^{Asym}$ &  0.659 & 0.659 & 0.648 & 0.612 & 0.592 & 0.579\\
& & len$CR^{Ptb}$ & 0.911 & 0.929 & 1.129 & 0.973 & 0.971 & 1.154 \\
& & len$CR^{Bs}$ & 1.020 & 1.054 & 1.262 & 1.786{*} & 1.886{*} & 2.276{*}\\
& & len$CR^{smBS}$ & 0.902 & 0.944 & 1.114& 1.073{*} & 1.134{*} & 1.354{*}\\
& & len$CR^{OLS}$ & 0.866 & 0.899 & 1.094 & 1.118 & 1.151 & 1.399\\
& & len$CR^{Lasso-debiased}$ & 1.101 & 1.126 & 1.414 & 1.193 & 1.269 & 1.529\\
\midrule
200 & 0.2 & len$CR^{Our}$ & 0.580 & 0.603 & 0.743 & 0.628 & 0.659 & 0.812\\
& & len$CR^{Asym}$ & 0.412 & 0.420 & 0.427 & 0.391 & 0.370 & 0.379 \\
& & len$CR^{Ptb}$ & 0.436 & 0.444 & 0.526 & 0.381 & 0.356 & 0.445 \\
& & len$CR^{Bs}$ & 0.458 & 0.504 & 0.586 & 0.470 & 0.504 & 0.600\\
& & len$CR^{smBS}$ & 0.384 & 0.434& 0.498 & 0.356 & 0.388 & 0.454\\
& & len$CR^{OLS}$ & 0.581 & 0.603 & 0.745 & 0.635 & 0.658 & 0.815\\
& & len$CR^{Lasso-debiased}$ & 0.635 & 0.681 & 0.817 & 0.838 & 0.832 & 0.845\\
& 0.6 & len$CR^{Our}$ & 0.517 & 0.567 & 0.700 & 0.556 & 0.614 & 0.765\\
& & len$CR^{Asym}$ & 0.473 & 0.467 & 0.471 & 0.437 & 0.433 & 0.421\\
& & len$CR^{Ptb}$ & 0.673 & 0.699 & 0.864 &  0.715 & 0.727 & 0.859  \\
& & len$CR^{Bs}$ & 0.714 & 0.734 & 0.902 & 0.814 & 0.820 & 0.992\\
& & len$CR^{smBS}$ &  0.708 & 0.738 & 0.894 & 0.732 & 0.748 & 0.892\\
& & len$CR^{OLS}$ & 0.584 & 0.604 & 0.743 & 0.641 &  0.661 & 0.818\\
& & len$CR^{Lasso-debiased}$ & 0.689 & 0.728 & 0.933 & 0.865 & 0.862 & 0.866\\
\midrule
400 & 0.15 & len$CR^{Our}$ & 0.397 & 0.416 & 0.516 & 0.418 & 0.434 & 0.537\\
& & len$CR^{Asym}$ & 0.293 & 0.296 & 0.311 & 0.283 & 0.283 & 0.278 \\
& & len$CR^{Ptb}$ & 0.309 & 0.312 & 0.391 & 0.226 & 0.247 & 0.261\\
& & len$CR^{Bs}$ & 0.322 & 0.314 & 0.372 & 0.244 & 0.272 & 0.290 \\
& & len$CR^{smBS}$ &  0.308 & 0.298 & 0.352 & 0.214 & 0.242 & 0.254\\
& & len$CR^{OLS}$ & 0.401 & 0.416 & 0.515 & 0.419 & 0.434 & 0.537\\
& & len$CR^{Lasso-debiased}$ & 0.412 & 0.434 & 0.438 & 0.436 & 0.432 & 0.430 \\
 & 0.4 & len$CR^{Our}$ & 0.368 & 0.401 & 0.492 & 0.381 & 0.419 & 0.516\\
& & len$CR^{Asym}$ &  0.327 & 0.329 & 0.337 & 0.306 & 0.303 & 0.306\\
& & len$CR^{Ptb}$ & 0.507 & 0.531 & 0.638 & 0.545 & 0.552 & 0.602\\
& & len$CR^{Bs}$ & 0.530 & 0.542 & 0.654 & 0.560 & 0.574 & 0.670\\
& & len$CR^{smBS}$ & 0.578 & 0.596 & 0.712 & 0.578 & 0.600 & 0.698\\
& & len$CR^{OLS}$ & 0.401 & 0.418 &  0.515 & 0.419 & 0.434 & 0.536\\
& & len$CR^{Lasso-debiased}$ & 0.432 & 0.464 & 0.477 & 0.442 & 0.440 & 0.435\\
\midrule
\bottomrule
\end{tabular}
\end{center}
{\small Note: $*$ indicates that the bootstrap and smooth bootstrap methods encounter  a singular-designed matrix problem (7-10\% times), and only partial simulation results   are used for calculation.}
\end{table}

The results for correlated covariates settings  are provided  in Table \ref{tab:CR1}.  For each setting, we select two different values of $\theta$ whose detection probabilities $P_d$  are between $0.1$ and $0.9$.
Here the first $\theta$ is relatively weaker, and the second one is at the boundary of strong signal.
For all these settings, the asymptotic inference, bootstrap and perturbation methods provide confidence intervals far below 95\% when signals are weak. In general, our method provides a stable inference even when the  correlation coefficient increases, and the coverage rate  for weak signals is between 90-96\% when $\rho=0.5$.  The asymptotic-based  inference has the lowest coverage rates among all, and performs extremely poorly when $\rho$ is larger. The coverage rates based on the perturbation method are all below 75\% for weak signals.
Note that the coverage rate improvement using the smoothed bootstrap method is not significant compared to the standard bootstrap method. In addition, for $n=100, p=50$, the bootstrap and smooth bootstrap methods face a singular-designed matrix problem  due to small sample size,  which does not produce any simulation results 7-10\% of the time. The average coverage rates provided in Table  \ref{tab:CR1} might not be valid  and are marked with $*$.  

Table \ref{tab:CIwidth1} provides the CI lengths of all methods for both weak and strong signals.  In general, the proposed  method provides narrower confidence intervals  and better coverage rates than the perturbation and bootstrap methods, and 
shorter confidence intervals with comparable coverage rates as the de-biased Lasso method for strong signals. 
For example, when $(n, p, \rho)= (100, 20, 0)$ and $\theta=0.75$, the coverage rate of our method  is  $94.4\%$, compared to $87.6\%$ based on the  perturbation method, $91.4\%$ based on the bootstrap method, and 93.8\% based on the de-biased method. The corresponding CI length of  our method equals $0.770$, which is smaller than the $0.911$ from   the perturbation method, $1.020$ from the bootstrap method, and 1.101 from the de-biased method.  Furthermore, our CI is also shorter  compared to the least square CI  for strong signals in general.

For weak signals,  our CI  is wider than the perturbation and bootstrap methods. This is because both the perturbation and bootstrap methods provide inaccurate coverage rates which tend to be smaller than 95\%. For example, when $(n, p, \rho)= (100, 20, 0)$ and $\theta=0.3$, the coverage rate of our method  is  $94.4\%$, compared to $67.3\%$ based on the  perturbation method, $74.5\%$ based on the bootstrap method, and  $94.5\%$ based on the de-biased Lasso method. The corresponding CI length of  our method is $0.862$, which is wider than the $0.652$ from  the perturbation method and the $0.786$ from the bootstrap method,  but is  still shorter than the $1.071$ from the de-biased Lasso method.

Figure \ref{fig:pswn} also presents the probabilities  of  assigning  each signal category for a given $\theta$ value, where the  probabilities for  identified 
strong signal ($P(i \in \widehat{\bm{S}}^{\bm{(S)}})$), weak signal ($P(i \in \widehat{\bm{S}}^{\bm{(W)}})$) and null variable ($P(i \in \widehat{\bm{S}}^{\bm{(N)}})$)
 are denoted as solid,  dotted and dashed lines, respectively. Here  $i$ corresponds to the index of coefficient $\theta$. The probability of each identified signal category relies on signal strength. Specifically, when a signal is close to zero, it is likely to be  identified as zero most of the time, with  the highest 
$P(i \in \widehat{\bm{S}}^{\bm{(N)}})$;  when a signal  falls into the weak signal region,  
$P(i \in \widehat{\bm{S}}^{\bm{(W)}})$  becomes dominant;  and  when $\theta$ increases  to be  a strong signal, 
  $P(i \in \widehat{\bm{S}}^{\bm{(S)}})$  also gradually  increases  and reaches to 1.

\begin{figure}[h!]
\hfill
\subfigure{\includegraphics[width=0.48\textwidth]{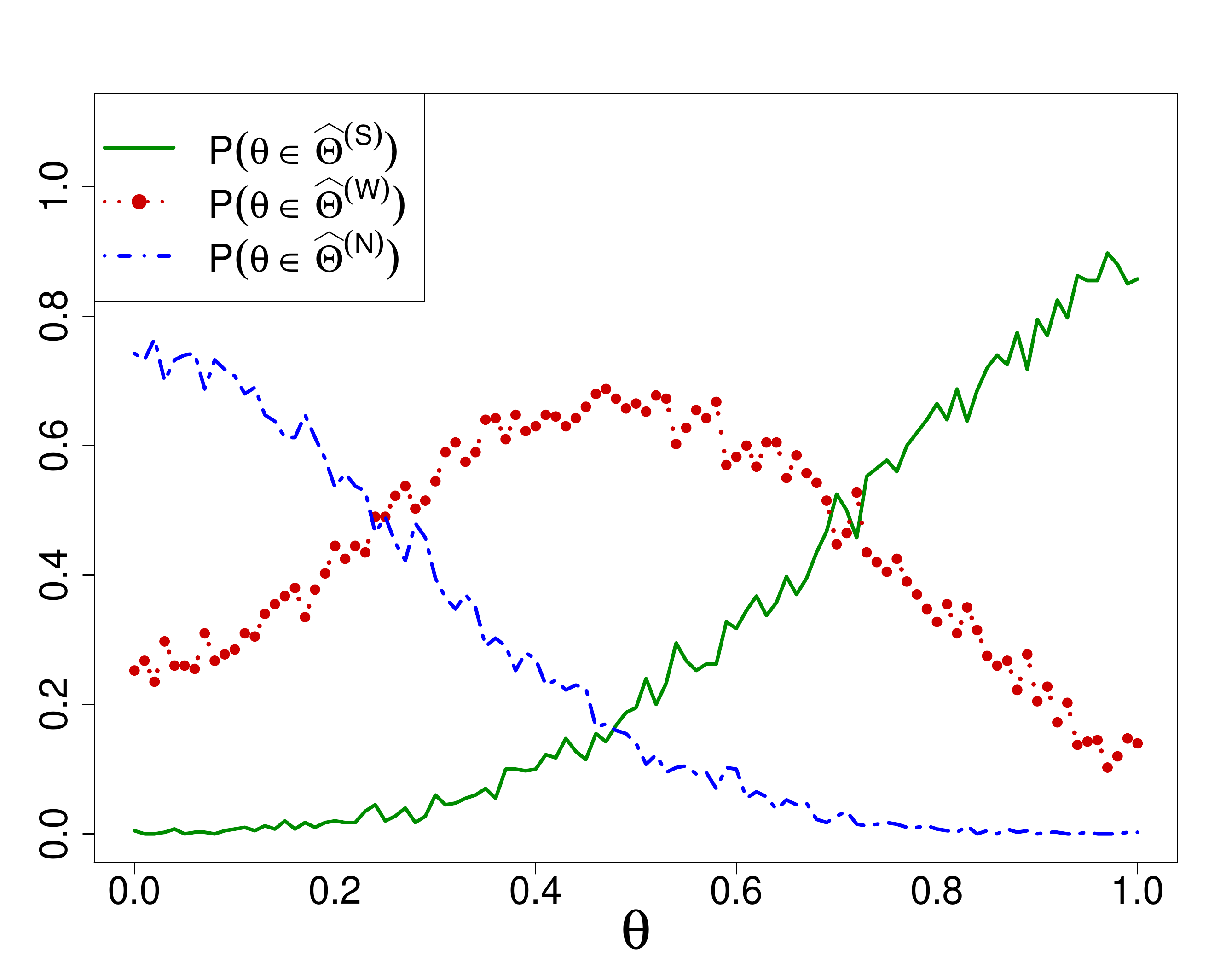}}
\hfill
\subfigure{\includegraphics[width=0.48\textwidth]{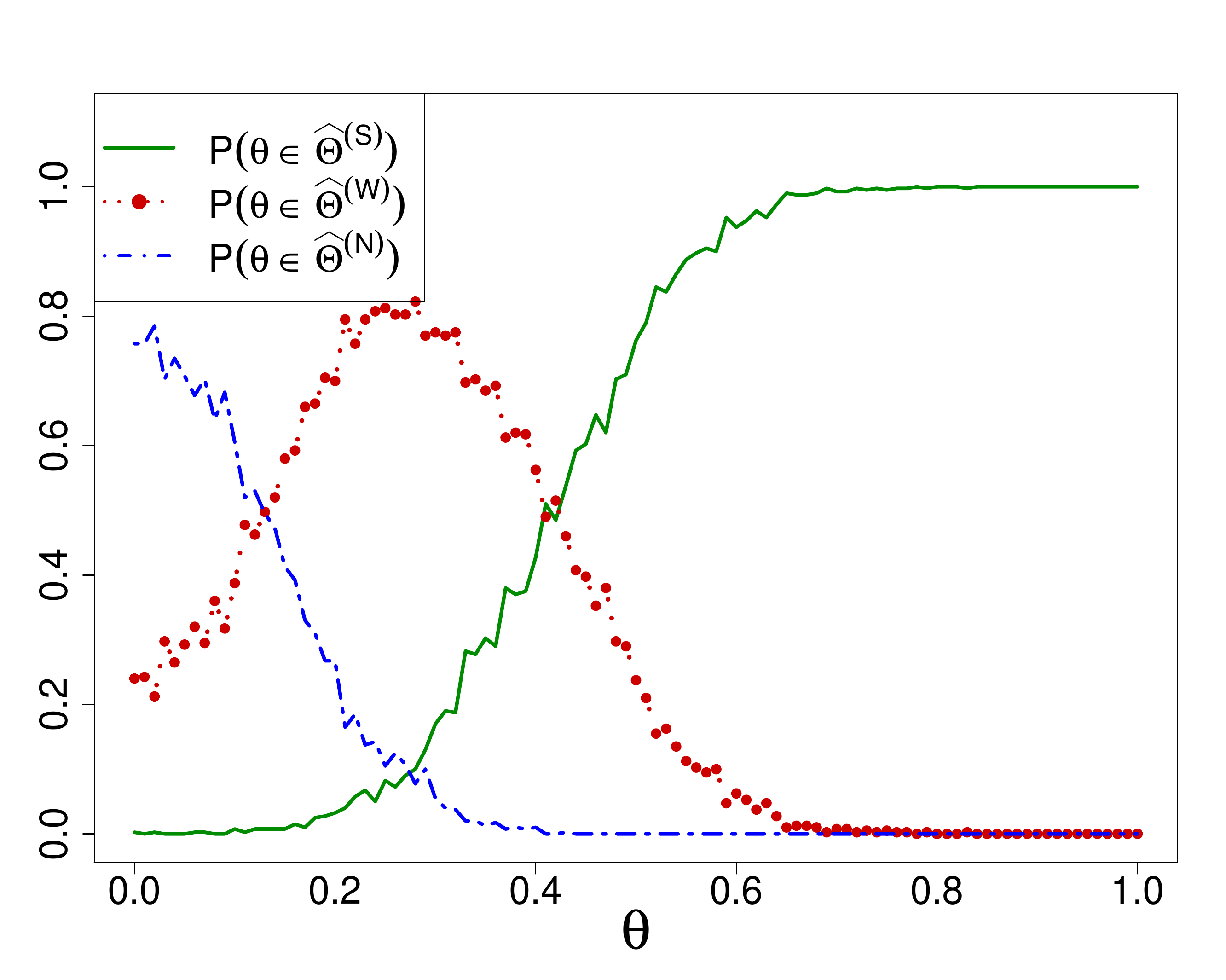}}
\hfill
\caption[]{Empirical probabilities of identifying  each signal level. Left: setting 1. Right: setting 2.}
\label{fig:pswn}
\end{figure} 

\subsection{HIV Data Example}
\label{sec:realdata}
In this section, we  apply  HIV drug resistance data \\(http://hivdb.stanford.edu/) to illustrate the proposed method. The HIV drug resistance study  aims to identify the association of protease mutations with susceptibility to the antiretroviral drug. Since antiretroviral drug resistance is a major obstacle to the successful treatment of HIV-1 infection, studying the generic basis of HIV-1 drug resistance is crucial  for developing new drugs and designing an optimal therapy for patients.
The study was conducted on 702 HIV-infected patients,  where  79 out of 99  protease codons in the viral genome  have mutations.
 Here  the drug resistance is  measured in units of IC\textsubscript{50}.

We consider a linear model:
\begin{equation}
\label{eq:realdata}
\bm{y}=\sum_{i=1}^{p}  \bm{X_i} \theta_i + \bm{\varepsilon},
\end{equation}
where the response variable $\bm{y}$  is the  log-transformation  of  nonnegative IC\textsubscript{50}, and the model predictors $\bm{X_i}$ are binary  variables  indicating the mutation presence  for each codon.  For each predictor, $1$ represents mutation and $0$ represents no mutation. The total number of candidate codons $p=79$. We are interested in examining which codon mutations have effect on drug resistance.

 We apply the proposed  two-step inference method to identify  codons' mutation presence  which have  strong or mild effects on  HIV drug resistance.  We use the  GLMNET in R to obtain the adaptive Lasso estimator for the linear model in (\ref{eq:realdata}),  where the initial weight of each coefficient is based on the OLS estimator.  The tuning parameter $\lambda$ is selected by the  Bayesian information criterion, 
and $\sigma$ is  estimated similarly as in  \cite{Zou:2006}.  To control the noise  variable selection,  we choose $\tau=0.05$.  According to the proposed identification procedure in Section \ref{subsec:wsi},  we calculate two threshold values $\nu_1$ and $\nu_2$  as $0.061$ and $ 0.136$,  which correspond to two threshold  probabilities, $\gamma_1=0.327$ and $\gamma_2=0.975$, for  identifying   weak and strong signals, respectively.

We constructed 95\% confidence intervals using the proposed method and the perturbation approach (\cite{Minnier:2011}) for the  chosen variables.
Both the standard bootstrap and smoothed bootstrap methods are not applicable to the HIV data. Since mutation is rather rare and  only a few subjects present  mutations for most  codons, it is  highly  probable that  a predicator is   sampled with all  0 indicators  from  the Bootstrap resamples. Consequently, the gram matrix  from the  Bootstrap resampling procedure is  singular, and we cannot obtain Bootstrap estimators. 


In the first step, we apply the adaptive Lasso procedure  which selects 17 codons; in the second step, our method identifies  additional 11 codons  associated  with drug resistance.  Among 28 codons we identified, 13 of them are identified  as  strong signals and 15 of them as weak signals.
Approach in \cite{Minnier:2011} identified 18 codons,  where the 13   signals (codon 10, 30, 32, 33, 46, 47, 48, 50, 54, 76, 84, 88 and 90) are the same as our strong-signal codons, and their  remaining  5 signals (codon 37, 64, 71, 89 and 93) are among our 15 identified weak signals.   In  previous studies, \cite{Wu:2009} identifies  all 13  strong signals using a permutation test for the  regression coefficients obtained from  Lasso; while  \cite{Johnson:2006} collect drug resistance mutation information based on multiple research studies,  and discover 9 strong-signal codons (10, 32, 46, 47, 50, 54, 76, 84, 90) which are relevant to drug resistance.  Neither of these approaches distinguish between strong-signal  and weak-signal codons.

Figure \ref{fig:hiv} presents a graphical summary showing the half-width of the constructed confidence intervals  based on our method and the perturbation approach, where strong signals are labeled in blue, and weak signals are labeled in red.  To make  full comparisons for both strong and weak signals, Figure  \ref{fig:hiv}  includes confidence intervals for all selected variables  based on our method, even if some of them  are not selected by  approach in \cite{Minnier:2011}. Table \ref{tab:hwidth} also provides  the average half-widths of  confidence intervals in each signal category. In general, our method provides shorter lengths of confidence intervals for all strong signals, and longer lengths of confidence intervals  for weak signals compared to the perturbation approach.  This is not surprising, since the variables with weak coefficients   associated  with the response variable  are  relatively weaker, and likely result in  wider  confidence intervals to ensure a more accurate coverage. These findings  are  consistent with our simulation studies.
 
\begin{figure}[h!]
\vskip-1ex
\begin{center}
\vskip1ex
\includegraphics[width=0.6\paperwidth]{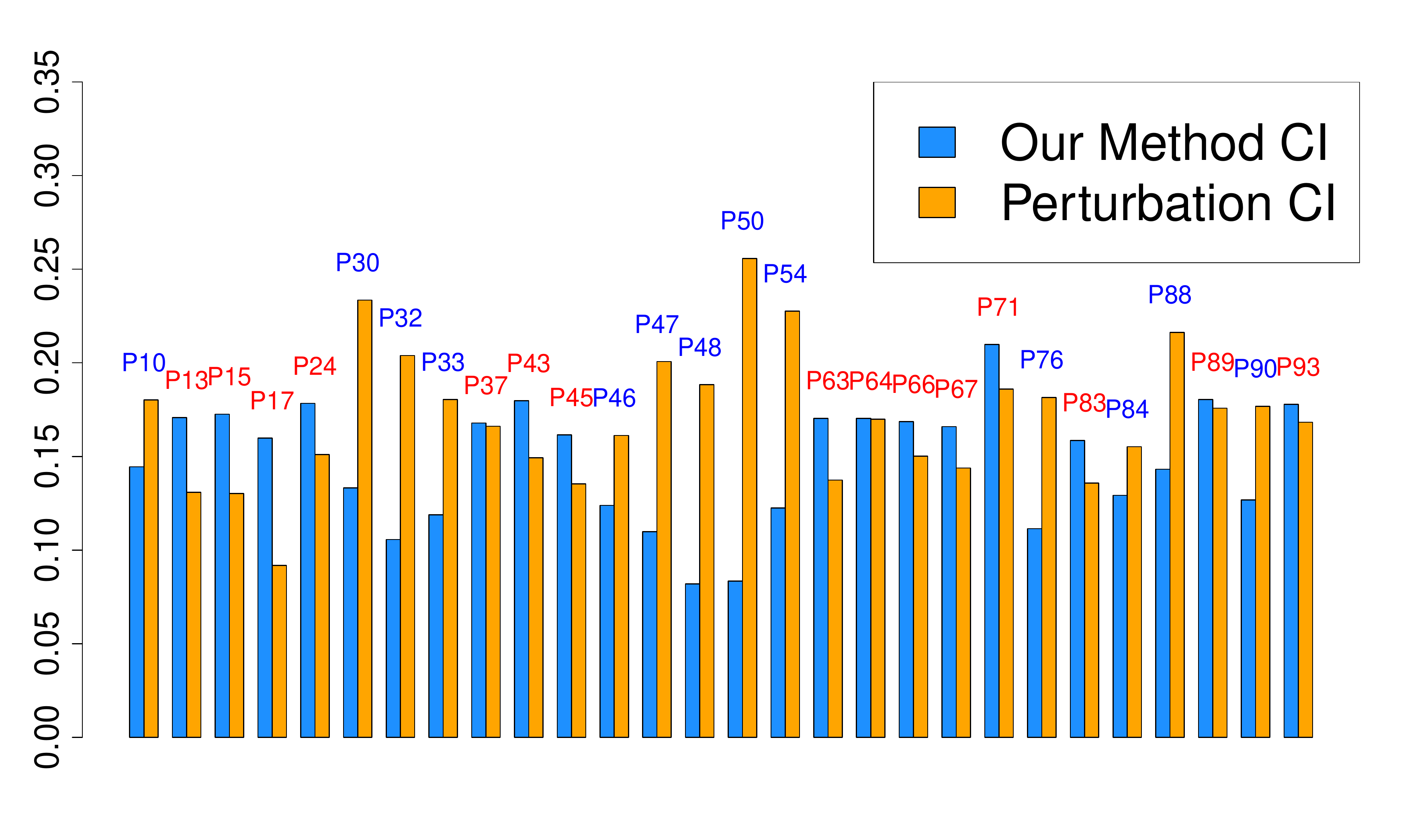}
\caption[]{Half width of confidence intervals of selected signals  for HIV data}
\label{fig:hiv}
\end{center}
\end{figure}

\begin{table}[h!]
\centering
\caption{Average half width of the CIs}
\label{tab:hwidth}
\begin{tabular}{l c c c}
\toprule
\toprule
 & All Selected Variables & Strong Signals & Weak Signals \\
\cmidrule(l){2-4}
$CI^{Our}$ & 0.147 & 0.118 & 0.173 \\
$CI^{Ptb}$ & 0.171 & 0.197 & 0.148 \\
\bottomrule
\end{tabular}
\end{table}
In summary,  our approach recovers more codons than other existing approaches. One significance of our method lies in its capability of identifying  a pool of strong signals which have strong evidence association with HIV drug resistance, and a pool of possible weak signals  which might be  mildly  associated with drug resistance.  In many medical studies, it is important not to miss statistically weak signals,  which  could be clinically valuable  predictors.

\section{Summary and Discussion}

In this paper, we propose  weak signal identification  under the penalized model selection framework,   and develop a new two-step inferential method which is more  accurate  in providing confidence coverage for weak signal parameters in finite samples.  The proposed method  is applicable  for  true  models involving both strong and weak signals.  The primary concern regarding the existing  model selection procedure is that it applies excessive shrinkage  in order to achieve  model selection consistency. However, this  results in low detection power for weak signals in finite samples. The essence of the proposed method  is to apply a mild tuning in identifying weak signals.  Therefore, there is  always a trade-off between a signal's detection power and the false-discovery rate. In our approach, we intend to recover weak signals as much as possible,  without sacrificing too much model selection consistency by including too many noise variables.

The  two-step inference procedure imposes different selection criteria and confidence interval construction for  strong and weak signals. Both theory and numerical studies indicate that the combined approach  is more effective compared to the asymptotic inference approach, and bootstrap sampling and  other resampling methods.  
In our numerical studies, we notice that the resampling methods do not provide good inference for weak signals. Specifically, the coverage probability of bootstrap confidence interval is over-covered and exceeds the $(1-\alpha)100\%$ confidence level when the true parameter is close to 0.  This is not surprising, as  \cite{Andrews:2000} shows that the bootstrap procedure is  inconsistent for boundary problems, such as in our case where  the boundary  parameters are in the order of $1/\sqrt{n}$.

Our method is related to post-selection inference in that we  select variables if the corresponding estimated  coefficients are not shrunk to zero, and  then construct confidence intervals for those non-zero coefficients. This is quite different from the  hypothesis testing approach
which  constructs valid confidence intervals for all variables first, then selects variables based on  $p$-values or confidence intervals. 
One  important work among the hypothesis testing approaches is the de-biased Lasso approach  (\cite{Javanmard:2014}), which corrects the bias introduced by the Lasso procedure. The de-biased  approach  constructs valid confidence intervals and $p$-values for all variables, which is quite  powerful when $p > n$  or  the gram matrix is singular.   
However, this approach   selects variables through the $p$-value. In contrast, we select variables if the corresponding estimated  coefficients are not shrunk to zero, and  construct confidence intervals for those non-zero coefficients. These two approaches are fundamentally different since  some coefficients  might not be statistically significant; however,  the corresponding variables  can still contribute to model prediction and should be included in the model.  This difference is also reflected  in our  simulation studies in that the de-biased method has much lower detection rates for true signals in general, and especially when the signals are relatively weak.

In the proposed two-step inference procedure, although we utilize information from the least-square estimators, our approach is very different from applying  the least-square inference directly   without  a model selection step.  The non-penalization method is not feasible  when the dimension of  covariates is very large, e.g., to examine or visualize thousands of conference intervals without model selection. Therefore it is essential to make a sound  statistical inference in conjunction with the  variable selection, simultaneously. 
Our approach has several advantages:  (1) It is able to recover possible weak signals which are missed due to  excessive shrinkage in model selection, in addition to distinguishing weak signals from strong signals. (2) Our inferences are  constructed for selected variable  coefficients only. We 
eliminate  noise variables first, and this is different from \cite{Minnier:2011}, \cite{Javanmard:2014}, \cite{Van:2014}, and \cite{Zhang:2014},   which construct CIs for all variables. 
Consequently,  the CI widths we construct  for strong signals are much narrower compared to the least squared approach or de-biased method,  given  that the coverage rates are all accurate. 
This indicates that our procedure is more effective compared to the approaches which do not perform model selection first. This finding is not surprising since the full model including all the noise variables  likely leads to less efficient inference in general. (3) For the weak signal CI's, our numerical studies  show  that the proposed two-step approach provides  CIs comparable to  the least square's, but has a much better coverage rate compared to the asymptotic, perturbation and resampling approaches.

In this paper, we develop  our method and theory  under the orthogonal design assumption.  However,  our numerical studies indicate  that the proposed method is still valid when correlations among covariates are weak or moderate.  It would be interesting to extend  the  current method to non-orthogonal designed covariates problems. 

In addition,  it is important to explore weak-signal  inference for  high-dimensional model settings when the dimension of covariates exceeds the sample size. 
Note that when the dimension of covariates exceeds the sample size,   the least square estimator   is no longer  feasible,  and  cannot be used  as the initial weights for the adaptive Lasso.  
One possible solution is to  replace the full model $\hat{\theta}_{LS}$ by  the marginal regression estimator. In order to do this, we assume that the true model satisfies the partial orthogonality condition, such that the covariates with zero coefficients and those with nonzero coefficients are weakly correlated, and the non-zero coefficients are bounded away from zero at certain rates.
Under these assumptions, the estimator of the marginal regression coefficient satisfies the following property,  
such that  the corresponding estimator is not too large for  the zero coefficient, and not too small for the nonzero coefficient (\cite{Huang:2008a}). This  allows us to obtain a reasonable    estimator  to assign weights in the adaptive Lasso.
The same  idea has been adopted in \cite{Huang:2008a}, where marginal regression  estimators are  used to assign weights  in the adaptive Lasso for sparse high-dimensional data. \cite{Huang:2008a} and \cite{Huang:2008b} show  that the adaptive Lasso using the  marginal estimator as initial weights yields model  selection consistency under the partial orthogonality condition. 
Alternatively, we can  first reduce the model size using  the marginal screening approach (\cite{Fan:2008} and \cite{Li:2012}), and then apply our method to  the reduced size model. The marginal screening method ensures that we can reduce the model size to be smaller than the sample size, and thus the least square estimator $\hat{\theta}_{LS}$ can be obtained  from a much smaller model.



Finally, the variance estimation of the penalized estimator for weak signal is still very challenging, and worthy  of future research. In the proposed method, we use the least-square estimator to provide inference for weak signals. However,  the variance  of the least-square estimator $\hat{\sigma}_{LS}$ is inflated when $p$ is close to $n$. 
This is likely due to the gram matrix  being  close to singular when $p$ gets close to $n$ for the  least-square estimation. We believe that the de-biased method (\cite{Javanmard:2014}, \cite{Van:2014}) could be very useful when the gram matrix is singular or  close to singular, and it would be interesting to explore  a future direction approximating  a singular  gram matrix  to  obtain parameter estimation and variance estimation as good   as  the de-biased method, and  therefore to  improve  the precision of the confidence intervals for  the proposed method. 
 

\appendix

\section{Notations and Proofs}\label{app}

\subsection{Notations}
$\nu_0=\sqrt{\lambda}, \nu_1=z_{\tau/2}\frac{\sigma}{\sqrt{n}}, \nu_2 = \sqrt{\lambda}+z_{\alpha/2}\frac{\sigma}{\sqrt{n}},$
$\nu_3=(z_{\alpha/2}+z_{\tau/2})\frac{\sigma}{\sqrt{n}},$ $\nu_4 = \sqrt{\lambda}+2z_{\alpha/2}\frac{\sigma}{\sqrt{n}}.$

\subsection{Tuning Parameter Selection}
BIC criteria function is:
\begin{equation*}
BIC(\lambda)=log(\hat{\sigma}_{\lambda}^2) + \hat{q}_{\lambda}\frac{log(n)}{n},
\end{equation*}
where $\hat{\sigma}_{\lambda}$ is the estimated standard deviation based on $\lambda$, and $\hat{q}_{\lambda}$ is the number of covariates in the model.

We choose the tuning parameter $\lambda$ based on  the  BIC because of its consistency property to select the true model (\cite{Wang:2007b}).  Here we follow the strategy in \cite{Wang:2007a} to select the  BIC tuning parameter for the adaptive Lasso penalty. Specifically, 
 for a given $\lambda$, 
\begin{eqnarray*}
BIC(\lambda) = (\bm{\widehat{\theta}}_{\lambda}-\bm{\widehat{\theta}}_{LS})^{T} \bm{\widehat{\Sigma}}^{-1}_{\lambda} (\bm{\widehat{\theta}}_{\lambda}-\bm{\widehat{\theta}}_{LS}) + \hat{q}_\lambda log(n)/n,
\end{eqnarray*}
where $\bm{\widehat{\theta}}_{\lambda}$ is the adaptive Lasso estimator with a  tuning parameter $\lambda$ provided in  (2.1); $\bm{\widehat{\Sigma}}^{-1}_{\lambda} = (n\hat{\sigma}^2)^{-1} \left\{\bm{X}^{T}\bm{X} + n\lambda \mbox{diag}\{I(\hat{\theta}_{\lambda,j} \neq 0)/ \vert \hat{\theta}_{\lambda,j} \hat{\theta}_{LS,j}  \vert \}_{j=1}^{p} \right\}$; $\hat{\sigma}$ is a consistent estimator of $\sigma$ based on the scaled Lasso procedure (\cite{Sun:2012}); and $\hat{q}_{\lambda}$ is the number of nonzero coefficients of $\bm{\widehat{\theta}}_{\lambda}$, a simple estimator for the degree of freedom (\cite{Zou:2007}).

\subsection{Proof of Lemma \ref{LEM:thweakscale}}
\begin{proof}
For any $\gamma$ satisfies $\epsilon < \gamma < 1-\epsilon$, we show that $\nu^\gamma$ that solves $P_d=\gamma$ follows $\frac{\nu^\gamma}{\sqrt{\lambda_n}} \rightarrow 1$, as $n \rightarrow \infty$.

$P_d$ can be rewritten as $P_d= \Phi(\frac{\sqrt{n\lambda_n}}{\sigma}(\frac{\nu^\gamma}{\sqrt{\lambda_n}}-1)) - \Phi(-\frac{\sqrt{n\lambda_n}}{\sigma}(1+\frac{\nu^\gamma}{\sqrt{\lambda_n}}))$. Given $n\lambda_n \rightarrow \infty$, if $\lim_{n \to +\infty} \frac{\nu^\gamma}{\sqrt{\lambda_n}} >1,$ then $P_d(\nu^\gamma) \rightarrow 1$,  as $n \rightarrow \infty$; else if $\lim_{n \to +\infty} \frac{\nu^\gamma}{\sqrt{\lambda_n}} <1,$ then $P_d(\nu^\gamma) \rightarrow 0, $ as $n \rightarrow \infty.$ Since $P_d(\nu^\gamma) =\gamma \in \left(\epsilon, 1-\epsilon \right),$ we have $\lim_{n \to +\infty} \frac{\nu^\gamma}{\sqrt{\lambda_n}} = 1,$ as $n \rightarrow \infty.$

Therefore, both $\nu^s$ and $\nu^w$ satisfies
$ \frac{\nu^s}{\sqrt{\lambda_n}} \rightarrow 1$ and $ \frac{\nu^w}{\sqrt{\lambda_n}} \rightarrow 1$.
\end{proof}

\subsection{Proof of Lemma \ref{LEM:CR}}
\begin{proof}
Define $CI_a: \left\{ \theta: \vert \hat{\theta}_{LS}-\theta \vert < z_{\alpha/2}\tilde{\sigma}(\theta)/\sqrt{n} \right\}$, and $CI_b: \left\{\theta: \vert \hat{\theta}_{LS}-\theta \vert < z_{\alpha/2}\sigma/\sqrt{n} \right\}$.
The confidence interval in (\ref{inf}) can be rewritten as:
\begin{eqnarray*}
CI_a \cdot I_{\left\{ \vert \hat{\theta}_{LS} \vert \geq \nu_2 \right\}} + CI_b \cdot I_{\left\{ \nu_1 < \vert \hat{\theta}_{LS} \vert < \nu_2 \right\}}.
\end{eqnarray*}
Based on $CI_a, CI_b$, we define functions $CR_a(\theta,\nu), CR_b(\theta,\nu)$ in the following manners:
\begin{eqnarray}
\label{eqn:crab1}
CR_a(\theta, \nu)=P(\theta \in CI_a, \vert \hat{\theta}_{LS} \vert > \nu),\\
\label{eqn:crab2}
CR_b(\theta, \nu)=P(\theta \in CI_b, \vert \hat{\theta}_{LS} \vert > \nu)
{\sigma/\sqrt{n}}).
\end{eqnarray}
Besides, define $P_s(\theta,\nu)$ as $P(\vert \hat{\theta}_{LS} \vert > \nu)$, which equals to
\begin{eqnarray}
\label{eqn:crab3}
P_s(\theta, \nu)=\Phi(\frac{\theta-\nu}{\sigma/\sqrt{n}})+\Phi(\frac{-\theta-\nu}{\sigma/\sqrt{n}}).
\end{eqnarray}
The explicit expression of $CR_a(\theta, \nu)$ is derived based on three cases:\\
(i). If $\nu < \theta -z_{\alpha/2}\frac{\tilde{\sigma}(\theta)}{\sqrt{n}}$, then
\begin{eqnarray*}
CR_a(\theta, \nu) = P\left(\vert \hat{\theta}^{LS} -\theta \vert \leq z_{\alpha/2}\frac{\tilde{\sigma}(\theta)}{\sqrt{n}}\right) = 1-2\Phi(-z_{\alpha/2}\frac{\tilde{\sigma}(\theta)}{\sigma}).
\end{eqnarray*}
(ii). If $\vert \theta -z_{\alpha/2}\frac{\tilde{\sigma}(\theta)}{\sqrt{n}} \vert < \nu  < \theta + z_{\alpha/2}\frac{\tilde{\sigma}(\theta)}{\sqrt{n}} $, then
\begin{eqnarray*}
CR_a(\theta, \nu)=P\left(\nu < \hat{\theta}^{LS} < \theta + z_{\alpha/2}\frac{\tilde{\sigma}(\theta)}{\sqrt{n}}\right) = \Phi(z_{\alpha/2}\frac{\tilde{\sigma}(\theta)}{\sigma})-\Phi(\frac{\nu-\theta}{\sigma/\sqrt{n}}).
\end{eqnarray*}
(iii). If $\nu < - \theta + z_{\alpha/2}\frac{\tilde{\sigma}(\theta)}{\sqrt{n}}$, then
\begin{eqnarray*}
CR_a(\theta, \nu) &=& P\left(\theta - z_{\alpha/2}\frac{\tilde{\sigma}(\theta)}{\sqrt{n}} < \hat{\theta}^{LS} < -\nu\right) + P\left(\nu < \hat{\theta}^{LS} < \theta + z_{\alpha/2}\frac{\tilde{\sigma}(\theta)}{\sqrt{n}}  \right)\\
 &=& P_s(\theta, \nu) - 2\Phi(-z_{\alpha/2}\frac{\tilde{\sigma}(\theta)}{\sigma}).
\end{eqnarray*}
The expression of $CR_b(\theta, \nu)$ can be derived in a similar way. Therefore,
$CR_a(\theta, \nu)$ and $CR_b(\theta, \nu)$ have the explicit expressions as:
\begin{eqnarray*}
CR_a(\theta, \nu)=\begin{cases}
\left(P_s(\theta, \nu)-2\Phi(-z_{\alpha/2}\frac{\tilde{\sigma}(\theta)}{\sigma})\right) \cdot \\
I_{\left\{\nu<z_{\alpha/2}\frac{\tilde{\sigma}(\theta)}{\sqrt{n}}\right\}}
 & \mbox{if } \vert \theta \vert < \vert \nu - z_{\alpha/2}\frac{\tilde{\sigma}(\theta)}{\sqrt{n}} \vert,\\
 \Phi(z_{\alpha/2}\frac{\tilde{\sigma}(\theta)}{\sigma})-\Phi(\frac{\nu-\theta}{\sigma/\sqrt{n}})
 & \mbox{if } \vert \nu - z_{\alpha/2}\frac{\tilde{\sigma}(\theta)}{\sqrt{n}} \vert \leq \vert \theta \vert \leq \nu + z_{\alpha/2}\frac{\tilde{\sigma}(\theta)}{\sqrt{n}}, \\
 1-2\Phi(-z_{\alpha/2}\frac{\tilde{\sigma}(\theta)}{\sigma}) & \mbox{if } \vert \theta \vert > \nu + z_{\alpha/2}\frac{\tilde{\sigma}(\theta)}{\sqrt{n}},
\end{cases}
\end{eqnarray*}
and
\begin{eqnarray*}
CR_b(\theta, \nu)=\begin{cases}
\left(P_s(\theta, \nu)-\alpha\right)1_{\left\{\nu<z_{\alpha/2}\frac{\sigma}{\sqrt{n}}\right\}}
 & \mbox{if } \vert \theta \vert < \vert \nu - z_{\alpha/2}\frac{\sigma}{\sqrt{n}} \vert,\\
1-\frac{\alpha}{2}-\Phi(\frac{\nu-\theta}{\sigma/\sqrt{n}})
 & \mbox{if }\vert \nu - z_{\alpha/2}\frac{\sigma}{\sqrt{n}} \vert <\vert \theta \vert <  \nu + z_{\alpha/2}\frac{\sigma}{\sqrt{n}}, \\
 1-\alpha & \mbox{if } \vert \theta \vert >  \nu + z_{\alpha/2}\frac{\sigma}{\sqrt{n}}.
\end{cases}
\end{eqnarray*}
The equations  in (\ref{eqn:crab1})-(\ref{eqn:crab3}) are used to provide explicit expressions for $CR_1(\theta)$ and $CR(\theta)$ in Lemma \ref{LEM:CR}. More specifically,
\begin{eqnarray*}
CR_1(\theta) &=& P(\theta \mbox{ in asymptotic-based CI} \vert \theta \mbox{ is selected in model selection})\\
 &= & \left. P(\vert \hat{\theta}_{LS} - \theta \vert < z_{\alpha/2}\frac{\tilde{\sigma}(\theta)}{\sqrt{n}} \right\vert \vert \hat{\theta}_{LS} \vert > \sqrt{\lambda} )\\
&=& \frac{CR_a(\theta,\nu_0)}{P_s(\theta,\nu_0)},
\end{eqnarray*}
where $\nu_0 =\sqrt{\lambda}$.
Similarly,
\begin{eqnarray*}
CR(\theta)&=&P(\theta \in \mbox{ CI as in (\ref{inf}) }  \vert \theta \mbox{ is selected by the two-step procedure})\\
&=& \frac{P(\theta \in CI_a, \vert \hat{\theta}_{LS} \vert \geq \nu_2) + P(\theta \in CI_b, \nu_1 < \vert \hat{\theta}_{LS} \vert < \nu_2)}{P(\vert \hat{\theta}_{LS} \vert > \nu_1)}\\
&=& \frac{P(\theta \in CI_a, \vert \hat{\theta}_{LS} \vert \geq \nu_2) + P(\theta \in CI_b, \vert \hat{\theta}_{LS} \vert > \nu_2) - P(\theta \in CI_b, \vert \hat{\theta}_{LS} \vert > \nu_1)}{P(\vert \hat{\theta}_{LS} \vert > \nu_1)}\\
&=&\frac{CR_a(\theta,\nu_2)+CR_b(\theta,\nu_1)-CR_b(\theta,\nu_2)}{P_s(\theta,\nu_1)}.
\end{eqnarray*}
\end{proof}

\subsection{Lemmas}

\begin{lemma}
\label{LEM:typeIerr}
If we select $\nu_1=z_{\tau/2}\frac{\sigma}{\sqrt{n}}$, then the false positive rate of weak signal's identification procedure equals $\tau$.
\end{lemma}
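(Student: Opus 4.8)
The plan is to reduce the statement to a one-line Gaussian tail computation, using the explicit threshold form of the identified noise set together with the orthogonal design. First I would recall from Section~\ref{subsec:wsi} that $\widehat{\bm{S}}^{\bm{(N)}} = \{ i : |\hat\theta_{LS,i}| \le \nu_1 \}$, so that variable $i$ is a false positive precisely on the event $\{ |\hat\theta_{LS,i}| > \nu_1 \}$, and the false-positive rate of the identification procedure is $P\big( |\hat\theta_{LS,i}| > \nu_1 \,\big|\, \theta_i = 0 \big)$.

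Next I would pin down the null distribution of $\hat\theta_{LS,i}$. Under the orthogonal design with $\bm{X}_j^{T}\bm{X}_j = n$ and $\bm{X}_i^{T}\bm{X}_j = 0$ for $i\neq j$, we have $\hat\theta_{LS,i} = \bm{X}_i^{T}\bm{y}/n = \theta_i + \bm{X}_i^{T}\bm{\varepsilon}/n$, which is $N(\theta_i,\sigma^2/n)$ since $\bm{\varepsilon}\sim N(\bm{0},\sigma^2\bm{I}_n)$. Setting $\theta_i = 0$ gives $\hat\theta_{LS,i}\sim N(0,\sigma^2/n)$, hence $\sqrt{n}\,\hat\theta_{LS,i}/\sigma \sim N(0,1)$. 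Standardizing and substituting the chosen threshold then yields
\[
P\big( |\hat\theta_{LS,i}| > \nu_1 \,\big|\, \theta_i = 0 \big)
= 2\,\Phi\!\left(-\frac{\sqrt{n}\,\nu_1}{\sigma}\right)
= 2\,\Phi(-z_{\tau/2}) = \tau,
\]
where the last equality uses the definition of $z_{\tau/2}$ as the upper $\tau/2$ quantile of the standard normal, i.e.\ $\Phi(z_{\tau/2}) = 1-\tau/2$.

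There is essentially no obstacle here; the argument is a direct calculation. The only points deserving (minimal) care are invoking orthogonality to obtain the exact marginal $N(0,\sigma^2/n)$ law of $\hat\theta_{LS,i}$ without reference to the other coordinates, and using the threshold-in-$\hat\theta_{LS}$ characterization of $\widehat{\bm{S}}^{\bm{(N)}}$ rather than its $\widehat{P}_d$ form (the two being equivalent by construction of $\nu_1$ and $\gamma_1$). If one wishes to account for an estimated $\hat\sigma$ in place of $\sigma$, the identity holds conditionally on $\hat\sigma$; throughout this lemma $\sigma$ is treated as the quantity used to define $\nu_1$, so the statement is exact.
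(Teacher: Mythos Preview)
Your proof is correct and follows essentially the same route as the paper's own one-line argument: identify the false-positive event as $\{|\hat\theta_{LS,i}|>\nu_1\}$ under $\theta_i=0$, use the $N(0,\sigma^2/n)$ law of $\hat\theta_{LS,i}$ from the orthogonal design, and evaluate the resulting Gaussian tail as $2\Phi(-\sqrt{n}\nu_1/\sigma)=\tau$. You supply more justification (the derivation of the null distribution and the quantile identity) than the paper does, but the approach is identical.
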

\begin{proof}
By definition, the false positive rate equals $P(i \in  \widehat{\bm{S}}^{\bm{(W)}} \cup \bm{\widehat{S}^{(S)}} \vert \theta_i =0) = P(\vert \hat{\theta}_{LS,i} \vert > \nu_1 \vert \theta_i=0) =2 \Phi(-\frac{\sqrt{n}}{\sigma}\nu_1) = \tau.$
\end{proof}

\begin{lemma}
\label{LEM:boundarypoints}
Under conditions (C1)-(C2), when $\lambda$ satisfies conditions $\sqrt{\lambda} > z_{\alpha/2}\frac{\sigma}{\sqrt{n}}$,
\begin{enumerate}[a.]
\item  if $c_1$ is the solution to $\theta=\sqrt{\lambda}-z_{\alpha/2}\frac{\tilde{\sigma}(\theta)}{\sqrt{n}}$, then $c_1 \in \left( (z_{\alpha/2}-z_{\tau/2})\frac{\sigma}{\sqrt{n}}, \sqrt{\lambda} \right);$
\item if $c_2$ is the solution to $\theta=\sqrt{\lambda}+z_{\alpha/2}\frac{\tilde{\sigma}(\theta)}{\sqrt{n}}$, then $c_2 \in \left(\sqrt{\lambda}+\frac{1}{2}z_{\alpha/2}\frac{\sigma}{\sqrt{n}},\sqrt{\lambda}+z_{\alpha/2}\frac{\sigma}{\sqrt{n}} \right);$
\item  if $c_3$ is the solution to $\theta=\sqrt{\lambda}+z_{\alpha/2}\frac{\sigma}{\sqrt{n}}-z_{\alpha/2}\frac{\tilde{\sigma}(\theta)}{\sqrt{n}}$, then $c_3 \in \left( \sqrt{\lambda},\sqrt{\lambda} + \frac{1}{2}z_{\alpha/2}\frac{\sigma}{\sqrt{n}} \right);$
\item if $c_4$ is the solution to
$\theta = \sqrt{\lambda} + z_{\alpha/2}\frac{\sigma}{\sqrt{n}} +z_{\alpha/2}\frac{\tilde{\sigma}(\theta)}{\sqrt{n}}$, then \\
$ c_4 \in \left(\sqrt{\lambda} + \frac{3}{2}z_{\alpha/2}\frac{\sigma}{\sqrt{n}}, \sqrt{\lambda} + 2z_{\alpha/2}\frac{\sigma}{\sqrt{n}}\right).$
\end{enumerate}
In addition, the order relationships of $c_1, c_2, c_3$ and $c_4$ follow: $c_1< c_3< c_2<c_4.$
\end{lemma}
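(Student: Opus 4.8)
The plan is to build everything on the function $\tilde\sigma(\theta)=(1+\lambda/\theta^2)^{-1}\sigma$. First I would record its elementary properties: it is strictly increasing in $|\theta|$, takes values in $(0,\sigma)$, and satisfies $\tilde\sigma(\sqrt\lambda)=\sigma/2$ (since $\lambda/\theta^2=1$ there). Each of the four defining equations has the common shape $\theta=A\mp z_{\alpha/2}\tilde\sigma(\theta)/\sqrt n$ with $A=\nu_0=\sqrt\lambda$ for $c_1,c_2$ and $A=\nu_2$ for $c_3,c_4$. For the ``$-$'' equations ($c_1$ and $c_3$) the right-hand side is strictly decreasing while the left-hand side is strictly increasing, so the intermediate value theorem gives a unique positive root; for the ``$+$'' equations ($c_2$ and $c_4$) the right-hand side is increasing but with slope below one on the relevant range (because $\tilde\sigma'(\theta)\le\sigma/(2\theta)$ by AM--GM and $\theta\ge\sqrt\lambda>z_{\alpha/2}\sigma/\sqrt n$ there), so again there is a unique root. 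Substituting the two extreme values $\tilde\sigma\in(0,\sigma)$ into each equation already yields a crude two-sided enclosure.

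The key refinement is a one-step bootstrap: once the crude enclosure tells us on which side of $\sqrt\lambda$ the root lies, monotonicity of $\tilde\sigma$ places $\tilde\sigma(c_i)$ on the corresponding side of $\sigma/2$, and re-substituting into the defining equation halves the relevant $z_{\alpha/2}\sigma/\sqrt n$ term. Concretely, $c_1<\sqrt\lambda$ gives $\tilde\sigma(c_1)<\sigma/2$ and hence $c_1>\sqrt\lambda-\frac12 z_{\alpha/2}\sigma/\sqrt n$; $c_2>\sqrt\lambda$ gives $\tilde\sigma(c_2)>\sigma/2$ and hence $c_2>\sqrt\lambda+\frac12 z_{\alpha/2}\sigma/\sqrt n$; $c_3>\sqrt\lambda$ gives $c_3<\nu_2-\frac12 z_{\alpha/2}\sigma/\sqrt n=\sqrt\lambda+\frac12 z_{\alpha/2}\sigma/\sqrt n$; and $c_4>\nu_2>\sqrt\lambda$ gives $c_4>\nu_2+\frac12 z_{\alpha/2}\sigma/\sqrt n=\sqrt\lambda+\frac32 z_{\alpha/2}\sigma/\sqrt n$. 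The remaining one-sided bounds ($c_1<\sqrt\lambda$, $c_2<\nu_2$, $c_3<\nu_2$, $c_4<\nu_4$) are exactly the crude ones. To finish part (a) I would then upgrade $c_1>\sqrt\lambda-\frac12 z_{\alpha/2}\sigma/\sqrt n$ to the stated $c_1>(z_{\alpha/2}-z_{\tau/2})\sigma/\sqrt n$: the hypothesis $\sqrt\lambda>z_{\alpha/2}\sigma/\sqrt n$ gives $c_1>\frac12 z_{\alpha/2}\sigma/\sqrt n$, and condition (C2), which rearranges to $\tau<2\Phi(-\tfrac12 z_{\alpha/2})$, i.e. $z_{\tau/2}>\tfrac12 z_{\alpha/2}$, gives $\tfrac12 z_{\alpha/2}\ge z_{\alpha/2}-z_{\tau/2}$; condition (C1) ensures $z_{\alpha/2}-z_{\tau/2}\ge0$ so the interval is nonempty.

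For the ordering, no new work is needed: the enclosures already chain, since $c_1<\sqrt\lambda<c_3<\sqrt\lambda+\frac12 z_{\alpha/2}\sigma/\sqrt n<c_2<\nu_2=\sqrt\lambda+z_{\alpha/2}\sigma/\sqrt n<c_4$, which is precisely $c_1<c_3<c_2<c_4$.

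I expect the only genuinely delicate point to be the lower bound on $c_1$ in part (a): it is the one place where pure monotonicity is not enough and one must combine both regularity conditions with the lower bound on $\lambda$ — in particular recognizing that (C2) is equivalent to $z_{\tau/2}>\frac12 z_{\alpha/2}$. Everything else — existence and uniqueness of each root, and the sharpened enclosures — reduces to a single intermediate-value argument followed by one re-substitution, and the ordering is then immediate.
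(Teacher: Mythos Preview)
Your proposal is correct and complete. The key mechanism---the observation $\tilde\sigma(\sqrt\lambda)=\sigma/2$ together with the monotonicity of $\tilde\sigma$, followed by one re-substitution to sharpen each crude enclosure---is exactly the natural argument, and the paper's proof (relegated to the supplement) proceeds along the same lines. One minor imprecision: condition (C2) actually gives $\tau<2\Phi(-\tfrac12 z_{\alpha/2})-\alpha$, not $\tau<2\Phi(-\tfrac12 z_{\alpha/2})$; but since $\alpha>0$ your weaker consequence still holds, and the deduction $z_{\tau/2}>\tfrac12 z_{\alpha/2}$ goes through unchanged.
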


\begin{lemma}
\label{LEM:boundarypoints2}
Given $\theta \in \left( \sqrt{\lambda}+  \frac{1}{2}z_{\alpha/2}\frac{\sigma}{\sqrt{n}}, \sqrt{\lambda}+  2z_{\alpha/2}\frac{\sigma}{\sqrt{n}}\right)$,  then $\theta > c_2$ if and only if $\theta > \sqrt{\lambda}+z_{\alpha/2}\frac{\tilde{\sigma}(\theta)}{\sqrt{n}}$, and $\theta > c_4$ if and only if $\theta > \sqrt{\lambda}+z_{\alpha/2}\frac{\sigma}{\sqrt{n}}+z_{\alpha/2}\frac{\tilde{\sigma}(\theta)}{\sqrt{n}}$.
\end{lemma}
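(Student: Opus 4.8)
The plan is to reduce both equivalences to a single monotonicity statement. I would define
\[
g(\theta)=\theta-\sqrt{\lambda}-z_{\alpha/2}\frac{\tilde{\sigma}(\theta)}{\sqrt{n}},\qquad
h(\theta)=\theta-\sqrt{\lambda}-z_{\alpha/2}\frac{\sigma}{\sqrt{n}}-z_{\alpha/2}\frac{\tilde{\sigma}(\theta)}{\sqrt{n}},
\]
so that, by construction, $c_2$ is a zero of $g$ and $c_4$ is a zero of $h$, while the two right-hand conditions in the lemma read exactly $g(\theta)>0$ and $h(\theta)>0$. Hence it suffices to show that $g$ and $h$ are strictly increasing on the interval $I=\bigl(\sqrt{\lambda}+\tfrac12 z_{\alpha/2}\tfrac{\sigma}{\sqrt{n}},\,\sqrt{\lambda}+2 z_{\alpha/2}\tfrac{\sigma}{\sqrt{n}}\bigr)$ and that $c_2,c_4\in I$: strict monotonicity on $I$ together with $g(c_2)=0$ and $h(c_4)=0$ then immediately yields, for every $\theta\in I$, the equivalences $\theta>c_2\iff g(\theta)>0$ and $\theta>c_4\iff h(\theta)>0$, which is the claim.

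For the monotonicity I would differentiate. Since $\tilde{\sigma}(\theta)=\sigma\theta^2/(\theta^2+\lambda)$ for $\theta>0$, one gets $\tilde{\sigma}'(\theta)=2\sigma\lambda\theta/(\theta^2+\lambda)^2$, and the AM--GM bound $(\theta^2+\lambda)^2\ge 4\theta^2\lambda$ gives $\tilde{\sigma}'(\theta)\le \sigma/(2\theta)$. Therefore $g'(\theta)=h'(\theta)=1-\frac{z_{\alpha/2}}{\sqrt{n}}\tilde{\sigma}'(\theta)\ge 1-\frac{z_{\alpha/2}\sigma}{2\sqrt{n}\,\theta}$. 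On $I$ we have $\theta>\sqrt{\lambda}\ge z_{\alpha/2}\sigma/\sqrt{n}$ by criterion (\ref{eqn:lambdacri}), hence $\frac{z_{\alpha/2}\sigma}{2\sqrt{n}\,\theta}<\tfrac12$ and so $g'=h'>\tfrac12>0$ throughout $I$.

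The only remaining ingredient is the containment $c_2,c_4\in I$, which is exactly parts (b) and (d) of Lemma \ref{LEM:boundarypoints}, namely $c_2\in(\sqrt{\lambda}+\tfrac12 z_{\alpha/2}\tfrac{\sigma}{\sqrt{n}},\sqrt{\lambda}+z_{\alpha/2}\tfrac{\sigma}{\sqrt{n}})\subset I$ and $c_4\in(\sqrt{\lambda}+\tfrac32 z_{\alpha/2}\tfrac{\sigma}{\sqrt{n}},\sqrt{\lambda}+2 z_{\alpha/2}\tfrac{\sigma}{\sqrt{n}})\subset I$. I do not expect a genuine obstacle: the derivative estimate and AM--GM step are routine, and the localization of $c_2,c_4$ is quoted from the preceding lemma. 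The one point that requires care is to keep the equivalences restricted to $\theta\in I$, since $g$ (and $h$) need not be monotone for smaller $|\theta|$ — indeed $g$ can change sign again near the origin — so the argument must genuinely invoke the hypothesis $\theta\in I$ rather than prove a global statement.
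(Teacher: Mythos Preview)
Your argument is correct: reducing both equivalences to strict monotonicity of $g(\theta)=\theta-\sqrt{\lambda}-z_{\alpha/2}\tilde{\sigma}(\theta)/\sqrt{n}$ (and its shift $h$) on $I$, bounding $\tilde{\sigma}'(\theta)\le\sigma/(2\theta)$ via AM--GM, and then invoking Lemma~\ref{LEM:boundarypoints}(b),(d) to place $c_2,c_4\in I$ is exactly the natural route, and your caveat that the equivalence must stay restricted to $I$ is well taken. The paper defers the proof of this lemma to the supplement, so a line-by-line comparison is not possible from the main text, but given how the lemma is invoked downstream (e.g.\ in the proof of Theorem~\ref{THM:crdiff} to convert $\theta>c_2$ into $\theta>\nu_0+z_{\alpha/2}\tilde{\sigma}(\theta)/\sqrt{n}$), the monotonicity argument you propose is the expected one.
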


\begin{lemma}(Monotonicity of $CR_1(\theta)$)\\
\label{LEM:monocr1}
Suppose $\theta >0,$
$CR_1(\theta)$ is a piece-wise monotonic function on $\left[0, c_2\right]$. More specifically, $CR_1(\theta)$ is a non-decreasing function on $\left[0,c_1 \right]$, an increasing function on $\left[ c_1, c_2 \right]$.
\end{lemma}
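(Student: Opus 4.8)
The plan is to push $CR_1(\theta)$ through the closed form of Lemma \ref{LEM:CR}, namely $CR_1(\theta)=CR_a(\theta,\nu_0)/P_s(\theta,\nu_0)$ with $\nu_0=\sqrt{\lambda}$, and to analyze $[0,c_1]$ and $[c_1,c_2]$ separately, since $c_1$ and $c_2$ are exactly the points where the relevant branch of the piecewise formula for $CR_a(\cdot,\nu_0)$ changes. First I would record that the maps $g_{\pm}(\theta)=\theta\pm z_{\alpha/2}\tilde\sigma(\theta)/\sqrt{n}$ are strictly increasing on $[0,\infty)$: this is the only place a genuine inequality is used, since it needs $\frac{z_{\alpha/2}}{\sqrt{n}}\tilde\sigma'(\theta)<1$, which follows by lower-bounding $\theta^2+\lambda$ via AM--GM (splitting $\lambda=\frac{\lambda}{3}+\frac{\lambda}{3}+\frac{\lambda}{3}$) and then invoking $\sqrt{\lambda}\geq z_{\alpha/2}\sigma/\sqrt{n}$ from (\ref{eqn:lambdacri}). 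Combined with the defining relations $g_+(c_1)=\nu_0$ and $g_-(c_2)=\nu_0$, monotonicity of $g_\pm$ shows that branch (i) of $CR_a(\cdot,\nu_0)$ governs $[0,c_1]$ and branch (ii) governs $[c_1,c_2]$; this step is essentially the content already behind Lemma \ref{LEM:boundarypoints}.

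On $[0,c_1]$ the conclusion is immediate: branch (i) carries the factor $I_{\{\nu_0\leq z_{\alpha/2}\tilde\sigma(\theta)/\sqrt{n}\}}$, and since $\tilde\sigma(\theta)=\sigma\theta^2/(\theta^2+\lambda)<\sigma$ we have $z_{\alpha/2}\tilde\sigma(\theta)/\sqrt{n}<z_{\alpha/2}\sigma/\sqrt{n}\leq\sqrt{\lambda}=\nu_0$ by (\ref{eqn:lambdacri}), so the indicator is $0$; hence $CR_a(\theta,\nu_0)\equiv0$ and $CR_1(\theta)\equiv0$ on $[0,c_1]$, which is in particular non-decreasing.

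On $[c_1,c_2]$ I would write $CR_1(\theta)=N(\theta)/D(\theta)$, where, after applying $\Phi(\sqrt{n}(\nu_0-\theta)/\sigma)=1-\Phi(\sqrt{n}(\theta-\sqrt{\lambda})/\sigma)$ to branch (ii),
\begin{equation*}
N(\theta)=w(\theta)+u(\theta)-1,\qquad D(\theta)=u(\theta)+v(\theta),
\end{equation*}
with $w=\Phi(z_{\alpha/2}\tilde\sigma(\theta)/\sigma)$, $u=\Phi(\sqrt{n}(\theta-\sqrt{\lambda})/\sigma)$ and $v=\Phi(-\sqrt{n}(\theta+\sqrt{\lambda})/\sigma)$. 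On the open interval one has $w'>0$, $u'>0$, $v'<0$ and $0<w,u,v<1$; evaluating the defining equation of $c_1$ gives $u(c_1)=1-w(c_1)$, hence $N(c_1)=0$, and since $N'=w'+u'>0$ we get $N(\theta)>0$ on $(c_1,c_2]$. Then a direct differentiation yields
\begin{equation*}
D(\theta)^2\,CR_1'(\theta)=w'(u+v)+u'(v+1-w)-v'(w+u-1),
\end{equation*}
and each summand is strictly positive on $(c_1,c_2)$: $w'(u+v)>0$; $v+1-w>0$ because $v>0$ and $w<1$; and $-v'(w+u-1)=-v'\,N(\theta)>0$ because $v'<0$ and $N>0$. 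Therefore $CR_1$ is strictly increasing on $[c_1,c_2]$, completing the proof.

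The main obstacle is the very first step — establishing monotonicity of $g_\pm$, i.e. the bound $\frac{z_{\alpha/2}}{\sqrt{n}}\tilde\sigma'(\theta)<1$ — because everything afterwards is sign bookkeeping: the branch-(i) indicator vanishes under (\ref{eqn:lambdacri}), the identity $N(c_1)=0$ is a one-line substitution, and after the $\Phi(x)=1-\Phi(-x)$ rewrite the derivative numerator is manifestly a sum of positive terms. A little care at the endpoints is also needed — continuity of $CR_1$ where the two branches meet at $c_1$, and the fact (from Lemma \ref{LEM:boundarypoints} together with (C1)) that $c_1>0$, so that the strict monotonicities $w',u'>0$ hold throughout $(c_1,c_2)$.
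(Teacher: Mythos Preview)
Your proof is correct. The paper relegates the proof of this lemma to the supplement, so a line-by-line comparison with the main text is not possible; however, your argument is the natural one and is fully consistent with how the paper handles $CR_1$ elsewhere (for instance, the proof of Theorem~\ref{THM:crdiff}(a) states without elaboration that $CR_1(\theta)=0$ on $[0,c_1]$, which is exactly your branch-(i) computation, and Lemma~\ref{LEM:boundarypoints2} is essentially the monotonicity of $g_-$ on the relevant range).

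One small remark: you flag the $g_-$ monotonicity as the ``main obstacle,'' but note that $g_+$ monotonicity is automatic (since $\tilde\sigma'>0$), and $g_+$ alone already pins down the branch change at $c_1$; the $g_-$ bound is only needed to ensure branch (ii) persists all the way to $c_2$. Your AM--GM argument for $\tfrac{z_{\alpha/2}}{\sqrt{n}}\tilde\sigma'(\theta)\le \tfrac{3\sqrt{3}}{8}<1$ under (\ref{eqn:lambdacri}) is clean and correct, and in fact slightly sharper than what is strictly required here (it gives global monotonicity of $g_-$, whereas the paper's Lemma~\ref{LEM:boundarypoints2} only asserts the equivalence on a restricted range). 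The quotient-rule decomposition $D^2CR_1'=w'(u+v)+u'(v+1-w)-v'(w+u-1)$ together with $N(c_1)=0$, $N'>0$ is exactly the right way to finish, and the endpoint continuity at $c_1$ (where both branches give $0$) closes the argument.
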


\begin{lemma}
\label{LEM:monoJ2}
For any fixed parameter value $\nu >0$, the function
\begin{eqnarray*}
\frac{CR_b(\theta,\nu)}{P_s(\theta,\nu)}=\begin{cases}
\left(1-\frac{\alpha}{P_s(\theta, \nu)}\right)1_{\left\{\nu<z_{\alpha/2}\frac{\sigma}{\sqrt{n}}\right\}}
 & \mbox{if } \vert \theta \vert < \vert \nu - z_{\alpha/2}\frac{\sigma}{\sqrt{n}} \vert\\
\frac{1-\frac{\alpha}{2}-\Phi\left(\frac{\sqrt{n}}{\sigma}(\nu-\theta)\right)}{P_s(\theta,\nu)}
 & \mbox{if }\vert \nu - z_{\alpha/2}\frac{\sigma}{\sqrt{n}} \vert <\vert \theta \vert <  \nu + z_{\alpha/2}\frac{\sigma}{\sqrt{n}} \\
 \frac{1-\alpha}{P_s(\theta, \nu)} & \mbox{if } \vert \theta \vert >  \nu + z_{\alpha/2}\frac{\sigma}{\sqrt{n}},
\end{cases}
\end{eqnarray*} is
\begin{enumerate}[$(a)$]
\item non-decreasing, when $\vert \theta \vert \leq \vert \nu - z_{\alpha/2}\frac{\sigma}{\sqrt{n}}\vert;$
\item  increasing, when $\vert \nu - z_{\alpha/2}\frac{\sigma}{\sqrt{n}}\vert < \vert \theta \vert <  \nu + z_{\alpha/2}\frac{\sigma}{\sqrt{n}};$
\item  decreasing, when $\vert \theta \vert \geq \nu + z_{\alpha/2}\frac{\sigma}{\sqrt{n}}.$
\end{enumerate}
\end{lemma}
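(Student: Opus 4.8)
The plan is to reduce to $\theta\ge 0$, pass to standardized variables, and then dispatch the three pieces, with part $(b)$ carrying essentially all the work. First I would note that both $CR_b(\theta,\nu)$ and $P_s(\theta,\nu)$ are even in $\theta$: from the probabilistic description $CR_b(\theta,\nu)=P(|\hat\theta_{LS}-\theta|<z_{\alpha/2}\sigma/\sqrt n,\ |\hat\theta_{LS}|>\nu)$ with $\hat\theta_{LS}\sim N(\theta,\sigma^2/n)$, the change of variable $\hat\theta_{LS}\mapsto-\hat\theta_{LS}$ turns the $\theta$-problem into the $(-\theta)$-problem, and the same is obvious for $P_s$; hence it suffices to treat $\theta\ge0$, and monotonicity in $\theta$ is equivalent to monotonicity in $t:=\sqrt n\,\theta/\sigma\ge0$. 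I would then set $m=\sqrt n\,\nu/\sigma>0$, $z=z_{\alpha/2}$, record the identities $\alpha=2\Phi(-z)$ and $1-\alpha/2=\Phi(z)$, and note that $P_s=\Phi(t-m)+\Phi(-t-m)$ has derivative $P_s'(t)=\phi(t-m)-\phi(t+m)$, where $\phi$ is the standard normal density; since $|t-m|\le t+m$ for $t,m\ge0$ and $\phi$ is strictly decreasing in $|\cdot|$, $P_s$ is non-decreasing in $t$, and strictly increasing for $t>0$.

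Parts $(a)$ and $(c)$ are then immediate. On the region of $(c)$, $CR_b\equiv 1-\alpha$, so $CR_b/P_s=(1-\alpha)/P_s$ is strictly decreasing because $P_s$ is strictly increasing. On the region of $(a)$, either $\nu\ge z_{\alpha/2}\sigma/\sqrt n$ and $CR_b/P_s\equiv0$, or $\nu<z_{\alpha/2}\sigma/\sqrt n$ and $CR_b/P_s=1-\alpha/P_s$, which is non-decreasing for the same reason; a one-line check ($t+m<z$ there, so $\Phi(-t-m)>\alpha/2$ and $\Phi(t-m)>\alpha/2$) confirms $P_s>\alpha$, so $CR_b\ge0$.

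For part $(b)$ I would first verify the geometric fact that on its region one has $\theta-z_{\alpha/2}\sigma/\sqrt n<\nu<\theta+z_{\alpha/2}\sigma/\sqrt n$ and $-\nu<\theta-z_{\alpha/2}\sigma/\sqrt n$, so that intersecting the naive interval with $\{|x|>\nu\}$ only removes a left piece and leaves $(\nu,\theta+z_{\alpha/2}\sigma/\sqrt n)$; this gives $CR_b=\Phi(z)-\Phi(m-t)$, which I rewrite via $\Phi(m-t)=1-\Phi(t-m)$ as $CR_b=\Phi(t-m)-\alpha/2$. Writing $u=\Phi(t-m)$, $v=\Phi(-t-m)$, so that $CR_b/P_s=(u-\alpha/2)/(u+v)$ with $u'=\phi(t-m)$ and $v'=-\phi(t+m)$, differentiating produces the numerator
\begin{equation*}
u'v-uv'+\tfrac{\alpha}{2}(u'+v')=\phi(t-m)\,\Phi(-t-m)+\Phi(t-m)\,\phi(t+m)+\tfrac{\alpha}{2}\bigl(\phi(t-m)-\phi(t+m)\bigr),
\end{equation*}
a sum of three non-negative terms whose middle term is strictly positive; hence $CR_b/P_s$ is strictly increasing on this region.

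The main obstacle is the bookkeeping in part $(b)$: one must get the geometry of the selection event right so that the clean formula $CR_b=\Phi(z)-\Phi(m-t)$ is valid, and then spot the algebraic simplification $1-\alpha/2-\Phi(m-t)=\Phi(t-m)-\alpha/2$, without which the numerator of the derivative does not visibly collapse into non-negative pieces. Once that identity and the elementary inequality $\phi(|t-m|)\ge\phi(t+m)$ are in place, the remaining computations are routine.
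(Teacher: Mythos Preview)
The paper itself does not contain a proof of this lemma; it is relegated to the supplementary materials, so a line-by-line comparison is not possible from the text you were given. That said, your argument is correct and is the natural one: reduce by symmetry to $\theta\ge 0$, standardize, observe that $P_s$ is strictly increasing in $t>0$ so pieces $(a)$ and $(c)$ are immediate, and for piece $(b)$ rewrite $1-\tfrac{\alpha}{2}-\Phi(m-t)=\Phi(t-m)-\tfrac{\alpha}{2}$ and differentiate the quotient to obtain a numerator that splits into three non-negative summands, with $\Phi(t-m)\phi(t+m)>0$ forcing strict positivity. The geometric check that on region $(b)$ the covered set is exactly $(\nu,\theta+z_{\alpha/2}\sigma/\sqrt n)$, i.e.\ that $-\nu\le \theta-z_{\alpha/2}\sigma/\sqrt n$, is handled correctly in both sub-cases $\nu\gtrless z_{\alpha/2}\sigma/\sqrt n$. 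One tiny nit: in the aside for part $(a)$ you write $t+m<z$, but the region is $t\le z-m$, so $t+m\le z$; the conclusion $P_s>\alpha$ still holds since $\Phi(t-m)>\alpha/2$ strictly there, and in any case this positivity remark is not needed for the monotonicity claim.
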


\begin{lemma}
\label{LEM:J1-J4}
The formulas for $CR_1(\theta)$ and $CR(\theta)$ in Lemma \ref{LEM:CR} can also be expressed as:
\begin{eqnarray}
\label{eq:cr1v2}
CR_1(\theta)&=&\frac{CR_a(\theta,\nu_0)}{P_s(\theta,\nu_0)},\\
\label{eq:crv2}
CR(\theta)& =& \begin{cases}
\frac{CR_b(\theta,\nu_1)}{P_s(\theta,\nu_1)} &\mbox{if } \vert \theta \vert  < \nu_0\\
\frac{CR_b(\theta,\nu_1)}{P_s(\theta,\nu_1)}-\frac{CR_b(\theta,\nu_2)}{P_s(\theta,\nu_1)} & \mbox{if } \nu_0 \leq \vert \theta \vert \leq c_3\\
\frac{CR_b(\theta,\nu_1)}{P_s(\theta,\nu_1)}+\frac{CR_a(\theta,\nu_2)}{P_s(\theta,\nu_1)}-\frac{CR_b(\theta,\nu_2)}{P_s(\theta,\nu_1)} & \mbox{if } \vert \theta \vert > c_3.
\end{cases}
\end{eqnarray}
Then $CR_1(\theta) = J_1(\theta), CR(\theta) = J_2(\theta) - J_3(\theta) +J_4 (\theta)$,
where the four functions $J_1(\theta), J_2(\theta), J_3(\theta)$ and $J_4(\theta)$ are defined as:
\begin{eqnarray*}
J_1(\theta)=\frac{CR_a(\theta,\nu_0)}{P_s(\theta,\nu_0)},
J_2(\theta) = \frac{CR_b(\theta,\nu_1)}{P_s(\theta,\nu_1)}, \\
J_3(\theta) = \frac{CR_b(\theta,\nu_2)}{P_s(\theta,\nu_1)},
J_4(\theta) = \frac{CR_a(\theta,\nu_2)}{P_s(\theta,\nu_1)}.
\end{eqnarray*}
\end{lemma}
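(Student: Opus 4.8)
The plan is to read off the piecewise formula (\ref{eq:crv2}) for $CR(\theta)$ directly from the single closed-form expression already proved in Lemma~\ref{LEM:CR}, namely $CR(\theta)=\{CR_b(\theta,\nu_1)+CR_a(\theta,\nu_2)-CR_b(\theta,\nu_2)\}/P_s(\theta,\nu_1)$, by identifying in which $|\theta|$-ranges the terms $CR_a(\theta,\nu_2)$ and $CR_b(\theta,\nu_2)$ vanish. Once that split is in hand, the identities $CR_1(\theta)=J_1(\theta)$ and $CR(\theta)=J_2(\theta)-J_3(\theta)+J_4(\theta)$ are immediate: $J_1(\theta)$ is by definition the right-hand side of (\ref{eq:cr1v1}), and $J_2(\theta)-J_3(\theta)+J_4(\theta)$ is, term by term, $\{CR_b(\theta,\nu_1)-CR_b(\theta,\nu_2)+CR_a(\theta,\nu_2)\}/P_s(\theta,\nu_1)$, which is precisely the $CR(\theta)$ of Lemma~\ref{LEM:CR}. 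Likewise (\ref{eq:cr1v2}) is just a verbatim restatement of (\ref{eq:cr1v1}). So all the genuine content sits in the three-branch verification for $CR(\theta)$.

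For that verification I would use the explicit case formulas for $CR_a(\theta,\nu)$ and $CR_b(\theta,\nu)$ derived in the proof of Lemma~\ref{LEM:CR}. First, for $|\theta|<\nu_0=\sqrt{\lambda}$: since $\nu_2=\sqrt{\lambda}+z_{\alpha/2}\sigma/\sqrt{n}$ gives $\nu_2-z_{\alpha/2}\sigma/\sqrt{n}=\nu_0$, and $\tilde{\sigma}(\theta)=(1+\lambda/\theta^2)^{-1}\sigma\le\sigma$ gives $\nu_2-z_{\alpha/2}\tilde{\sigma}(\theta)/\sqrt{n}\ge\nu_0$, the point $|\theta|<\nu_0$ lands in the ``first case'' of both $CR_a(\cdot,\nu_2)$ and $CR_b(\cdot,\nu_2)$; in those first cases the values carry the indicators $I_{\{\nu_2<z_{\alpha/2}\tilde{\sigma}(\theta)/\sqrt{n}\}}$ and $I_{\{\nu_2<z_{\alpha/2}\sigma/\sqrt{n}\}}$, both of which are zero because $\nu_2>z_{\alpha/2}\sigma/\sqrt{n}\ge z_{\alpha/2}\tilde{\sigma}(\theta)/\sqrt{n}$. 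Hence $CR_a(\theta,\nu_2)=CR_b(\theta,\nu_2)=0$ and $CR(\theta)=CR_b(\theta,\nu_1)/P_s(\theta,\nu_1)$. Next, for $\nu_0\le|\theta|\le c_3$ I only need $CR_a(\theta,\nu_2)=0$: by definition $c_3$ solves $\theta=\nu_2-z_{\alpha/2}\tilde{\sigma}(\theta)/\sqrt{n}$, so $|\theta|\le c_3$ is the regime where $CR_a(\cdot,\nu_2)$ is still in its first case, whose indicator $I_{\{\nu_2<z_{\alpha/2}\tilde{\sigma}(\theta)/\sqrt{n}\}}$ vanishes as above; thus $CR(\theta)=\{CR_b(\theta,\nu_1)-CR_b(\theta,\nu_2)\}/P_s(\theta,\nu_1)$. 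Finally, for $|\theta|>c_3$ no term drops, so $CR(\theta)$ equals the full three-term expression of Lemma~\ref{LEM:CR}, giving the third branch.

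The one step that needs real care, and which I expect to be the main obstacle, is the equivalence ``$|\theta|\le c_3 \iff |\theta|\le \nu_2-z_{\alpha/2}\tilde{\sigma}(\theta)/\sqrt{n}$'': the boundary $c_3$ is defined implicitly through the $\theta$-dependent quantity $\tilde{\sigma}(\theta)$, so I cannot simply substitute it. I would handle this by noting that $\theta\mapsto \theta-\nu_2+z_{\alpha/2}\tilde{\sigma}(\theta)/\sqrt{n}$ is strictly increasing on $(0,\infty)$ (because $\tilde{\sigma}(\theta)$ is increasing in $|\theta|$), so its unique positive zero $c_3$ separates the two regions as claimed; alternatively, this is exactly the type of equivalence recorded in Lemma~\ref{LEM:boundarypoints2} for $c_2$ and $c_4$, and the same reasoning applies to $c_3$. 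Everything else reduces to reading off the piecewise expressions for $CR_a$ and $CR_b$ and checking the elementary inequalities among $\nu_0$, $\nu_2$, $z_{\alpha/2}\sigma/\sqrt{n}$ and $z_{\alpha/2}\tilde{\sigma}(\theta)/\sqrt{n}$ used above.
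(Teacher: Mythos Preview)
Your proposal is correct and follows the natural route: starting from the single expression in Lemma~\ref{LEM:CR} and reading off, via the piecewise definitions of $CR_a$ and $CR_b$, which terms vanish on each $|\theta|$-interval; the monotonicity argument you give for the implicit boundary $c_3$ is exactly the right way to handle that step (and is parallel to Lemma~\ref{LEM:boundarypoints2}). The paper does not include an explicit proof of this lemma in the main text (it is relegated to the supplement), but since the lemma is essentially a bookkeeping reformulation of Lemma~\ref{LEM:CR}, your argument is precisely what one would expect the supplementary proof to contain.
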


\subsection{Proof of Theorem \ref{THM:crdiff}}
\begin{proof}
(a). When $\theta \in \left[0 , c_1 \right]$, we have $\Delta(\theta) \geq 1-\frac{\alpha}{\tau} >0.$ First, it is obvious that $CR_1(\theta)=0$ when $\theta \in \left[0 , c_1 \right]$. By Lemma \ref{LEM:monoJ2}, $CR(\theta)$ is increasing on $\left[0, \nu_0 \right]$, and $CR(\theta)=1-\frac{\alpha}{\tau}$ when $\theta=0$. Thus $CR(\theta) - CR_1(\theta) \geq 1-\frac{\alpha}{\tau}$ for $\theta \in \left[0 , c_1 \right]$, which provides the first lower bound in Theorem \ref{THM:crdiff}. Note that here we also use $c_1 < \nu_0$ by Lemma \ref{LEM:boundarypoints}.

(b). When $ \theta \in \left[c_1, \nu_0 \right],$ we have $\Delta(\theta) \geq \frac{2}{1+\alpha}-2\Phi(\frac{1}{2}z_{\alpha/2})>0$.
By definition
\begin{eqnarray*}
CR_1(\theta)&=&
\frac{\Phi\left(z_{\alpha/2}\frac{\tilde{\sigma}(\theta)}{\sigma}\right)-\Phi\left(\frac{\sqrt{n}}{\sigma}(\nu_0-\theta)\right)}{P_s(\theta,\nu_0)},\\
CR(\theta)&=&
\frac{1-\frac{\alpha}{2}-\Phi\left(\frac{\sqrt{n}}{\sigma}(\nu_1-\theta)\right)}{P_s(\theta,\nu_1)}.
\end{eqnarray*}
In the following,  we show that
$\frac{\partial{CR_1(\theta)}}{\partial{\theta}} > \frac{\partial{CR(\theta)}}{\partial{\theta}}$, so $CR(\theta) - CR_1(\theta)$ is decreasing when $\theta \in \left[c_1, \nu_0 \right]$.
The first order derivatives of  $CR_1(\theta)$ and $CR(\theta)$ are:
\begin{eqnarray}
\label{eq:thm1eq1}
\frac{\partial{CR_1(\theta)}}{\partial{\theta}}&=&\left[ \frac{z_{\alpha/2}}{\sigma}\phi\left(z_{\alpha/2}\frac{\tilde{\sigma}(\theta)}{\sigma}\right) \tilde{\sigma}(\theta)^{\prime} +\frac{\sqrt{n}}{\sigma}\phi\left(\frac{\sqrt{n}}{\sigma}(\nu_0-\theta)\right)\right]P_s(\theta,\nu_0)^{-1}\\
\label{eq:thm1eq2}
&& -\left[\Phi\left(z_{\alpha/2}\frac{\tilde{\sigma}(\theta)}{\sigma}\right) -\Phi\left(\frac{\sqrt{n}}{\sigma}(\nu_0-\theta)\right)\right]P_s(\theta,\nu_0)^{-2}\\
\label{eq:thm1eq3}
\frac{\partial{CR(\theta)}}{\partial{\theta}}&=&\frac{\sqrt{n}}{\sigma}\phi\left(\frac{\sqrt{n}}{\sigma}(\nu_1-\theta)\right)P_s(\theta,\nu_1)^{-1}\\
\label{eq:thm1eq4}
&& -\left[1-\frac{\alpha}{2}- \Phi\left(\frac{\sqrt{n}}{\sigma}(\nu_1-\theta)\right)\right]P_s(\theta,\nu_1)^{-2} ,
\end{eqnarray}
where each first-order derivative is composed of two parts. We show the inequality of each part separately.
First (\ref{eq:thm1eq1}) $>$ (\ref{eq:thm1eq3}), which is sufficient by showing
\begin{eqnarray*}
\phi\left(\frac{\sqrt{n}}{\sigma}(\nu_0-\theta)\right)P_s(\theta,\nu_0)^{-1} > \phi\left(\frac{\sqrt{n}}{\sigma}(\nu_1-\theta)\right)P_s(\theta,\nu_1)^{-1}.
\end{eqnarray*}
This is equivalent to show
\begin{eqnarray}
\label{eqn:temp1}
\frac{\Phi\left(\frac{\sqrt{n}}{\sigma}(\theta-\nu_1)\right)+\Phi\left(-\frac{\sqrt{n}}{\sigma}(\theta+\nu_1)\right)}{\phi\left(\frac{\sqrt{n}}{\sigma}(\theta-\nu_1)\right)} > \frac{\Phi\left(\frac{\sqrt{n}}{\sigma}(\theta-\nu_0)\right)+\Phi\left(-\frac{\sqrt{n}}{\sigma}(\theta+\nu_0)\right)}{\phi\left(\frac{\sqrt{n}}{\sigma}(\theta-\nu_0)\right)}.
\end{eqnarray}
The inequality in (\ref{eqn:temp1}) can be proved based on monotinicity of two functions $\frac{\Phi(x)}{\phi(x)}$ and $\frac{\Phi(-x-y)}{\phi(x-y)}$. Specifically, it can be shown that
$\frac{\Phi(x)}{\phi(x)}$ is an increasing function of $x \in \mathbbm{R}$, and $\frac{\Phi(-x-y)}{\phi(x-y)}$ is a decreasing function of $y \in \mathbbm{R^+}$,  for any fixed value of $x > 0$. More specifically,
since $\nu_1 < \nu_0,$ we have
\begin{eqnarray*}
\frac{\Phi\left(\frac{\sqrt{n}}{\sigma}(\theta-\nu_1)\right)}{\phi\left(\frac{\sqrt{n}}{\sigma}(\theta-\nu_1)\right)} > \frac{\Phi\left(\frac{\sqrt{n}}{\sigma}(\theta-\nu_0)\right)}{\phi\left(\frac{\sqrt{n}}{\sigma}(\theta-\nu_0)\right)} \mbox{ and } \frac{\Phi\left(-\frac{\sqrt{n}}{\sigma}(\theta+\nu_1)\right)}{\phi\left(\frac{\sqrt{n}}{\sigma}(\theta-\nu_1)\right)} > \frac{\Phi\left(-\frac{\sqrt{n}}{\sigma}(\theta+\nu_0)\right)}{\phi\left(\frac{\sqrt{n}}{\sigma}(\theta-\nu_0)\right)},
\end{eqnarray*}
based on which the inequality in (\ref{eqn:temp1}) holds.

Next we show that (\ref{eq:thm1eq3}) $<$ (\ref{eq:thm1eq4}), which is equivalent with
\begin{eqnarray*}
\left[1-\frac{\alpha}{2}-\Phi\left(\frac{\sqrt{n}}{\sigma}(\nu_1-\theta)\right)\right]P_s(\theta,\nu_1)^{-2}
> \left[\Phi\left(z_{\alpha/2}\frac{\tilde{\sigma}(\theta)}{\sigma}\right)-\Phi\left(\frac{\sqrt{n}}{\sigma}(\nu_1-\theta)\right)\right]P_s(\theta,\nu_0)^{-2}.
\end{eqnarray*}
It can be shown by
\begin{eqnarray*}
\frac{1-\frac{\alpha}{2}-\Phi\left(\frac{\sqrt{n}}{\sigma}(\nu_1-\theta)\right)}{{P_s(\theta,\nu_1)}^2}
>\frac{1-\frac{\alpha}{2}-\Phi\left(\frac{\sqrt{n}}{\sigma}(\nu_0-\theta)\right)}{{P_s(\theta,\nu_0)}^2}
>\frac{\Phi\left(z_{\alpha/2}\frac{\tilde{\sigma}(\theta)}{\sigma}\right)-\Phi\left(\frac{\sqrt{n}}{\sigma}(\nu_1-\theta)\right)}{{P_s(\theta,\nu_0)}^2}.
\end{eqnarray*}
Based on the above arguments, we can conclude that for $\theta \in \left[c_1, \nu_0\right]$:
\begin{eqnarray*}
\frac{\partial{CR_1(\theta)}}{\partial{\theta}} > \frac{\partial{CR(\theta)}}{\partial{\theta}}.
\end{eqnarray*}
Therefore, $\min_{\theta \in \left[c_1, \nu_0\right]} \Delta(\theta) = CR(\nu_0)-CR_1(\nu_0).$
 More specifically,
\begin{eqnarray*}
CR(\nu_0) &= &\frac{\Phi(\frac{\nu_0-\nu_1}{\sigma/\sqrt{n}})-\frac{\alpha}{2}}{\Phi(\frac{\nu_0-\nu_1}{\sigma/\sqrt{n}})+\Phi(\frac{-\nu_0-\nu_1}{\sigma/\sqrt{n}})} > \frac{1-\alpha}{1+\alpha},\\
CR_1(\nu_0)&=&\frac{\Phi(\frac{1}{2}z_{\alpha/2})-\frac{\alpha}{2}}{\frac{1}{2}+\Phi(-\frac{2\nu_0}{\sigma/\sqrt{n}})}< 2\Phi(\frac{1}{2}z_{\alpha/2})-1,
\end{eqnarray*}
thus
\begin{eqnarray*}
CR(\nu_0) &-& CR_1(\nu_0) > \frac{2}{1+\alpha}-2\Phi(\frac{1}{2}z_{\alpha/2}),
\end{eqnarray*}
which provides the second lower bound in Theorem \ref{THM:crdiff}.
 
(c). When $\theta \in \left[\nu_0, +\infty \right),$ we have
$\Delta(\theta)$ satisfies either
$\Delta(\theta) \geq 0$ or $-\frac{\alpha}{2} < \Delta(\theta)  < 0$.
The proof of case 1 is provided here, and proof of the other two cases are similar and are provided in supplementary materials.
In case 1, it satisfies $c_3 < \nu_3 < c_2$. We conduct derivations for sub-intervals $\left[\nu_0,  c_3 \right], \left[c_3,  \nu_3 \right], \left[\nu_3,  c_2 \right], \left[c_2,  c_4 \right], \left[c_4,  \nu_4 \right]$ and $\left[\nu_4,  +\infty \right)$,  separately.

When $\theta \in \left[\nu_0,  c_3 \right]$, we have
\begin{eqnarray*}
J_1(\theta)&=&\frac{\Phi(z_{\alpha/2}\frac{\tilde{\sigma}(\theta)}{\sigma})-\Phi(\frac{\nu_0-\theta}{\sigma/\sqrt{n}})}{\Phi(\frac{\theta-\nu_0}{\sigma/\sqrt{n}})+\Phi(\frac{-\theta-\nu_0}{\sigma/\sqrt{n}})},
J_2(\theta)=\frac{1-\frac{\alpha}{2}-\Phi(\frac{\nu_1-\theta}{\sigma/\sqrt{n}})}{\Phi(\frac{\theta-\nu_1}{\sigma/\sqrt{n}})+\Phi(\frac{-\theta-\nu_1}{\sigma/\sqrt{n}})},\\
\mbox{ and }
J_3(\theta)&=&\frac{1-\frac{\alpha}{2}-\Phi(\frac{\nu_2-\theta}{\sigma/\sqrt{n}})}{\Phi(\frac{\theta-\nu_1}{\sigma/\sqrt{n}})+\Phi(\frac{-\theta-\nu_1}{\sigma/\sqrt{n}})},
\end{eqnarray*}
thus
\begin{eqnarray*}
\Delta(\theta)&=&J_2(\theta)-J_1(\theta)-J_3(\theta)\\
&=&\frac{\Phi(\frac{\nu_2-\theta}{\sigma/\sqrt{n}})-\Phi(\frac{\nu_1-\theta}{\sigma/\sqrt{n}})}{\Phi(\frac{\theta-\nu_1}{\sigma/\sqrt{n}})+\Phi(\frac{-\theta-\nu_1}{\sigma/\sqrt{n}})} - \frac{\Phi(z_{\alpha/2}\frac{\tilde{\sigma}(\theta)}{\sigma})-\Phi(\frac{\nu_0-\theta}{\sigma/\sqrt{n}})}{\Phi(\frac{\theta-\nu_0}{\sigma/\sqrt{n}})+\Phi(\frac{-\theta-\nu_0}{\sigma/\sqrt{n}})}.
\end{eqnarray*}
Further,
\begin{eqnarray*}
\Delta(\theta)
&=&\frac{\Phi(\frac{\nu_2-\theta}{\sigma/\sqrt{n}})-\Phi(\frac{\nu_1-\theta}{\sigma/\sqrt{n}})}{\Phi(\frac{\theta-\nu_1}{\sigma/\sqrt{n}})+\Phi(\frac{-\theta-\nu_1}{\sigma/\sqrt{n}})} - \frac{\Phi(z_{\alpha/2}\frac{\tilde{\sigma}(\theta)}{\sigma})-\Phi(\frac{\nu_0-\theta}{\sigma/\sqrt{n}})}{\Phi(\frac{\theta-\nu_0}{\sigma/\sqrt{n}})+\Phi(\frac{-\theta-\nu_0}{\sigma/\sqrt{n}})}\\
& > & \frac{\Phi(\frac{\nu_2-\theta}{\sigma/\sqrt{n}})-\Phi(\frac{\nu_1-\theta}{\sigma/\sqrt{n}})}{\Phi(\frac{\theta-\nu_1}{\sigma/\sqrt{n}})+\Phi(\frac{-\theta-\nu_1}{\sigma/\sqrt{n}})} - \frac{\Phi(z_{\alpha/2}\frac{\tilde{\sigma}(\theta)}{\sigma})-\Phi(\frac{\nu_0-\theta}{\sigma/\sqrt{n}})}{\Phi(\frac{\theta-\nu_0}{\sigma/\sqrt{n}})}\\
& > & \frac{\Phi(\frac{\nu_2-\theta}{\sigma/\sqrt{n}})-\Phi(\frac{\nu_1-\theta}{\sigma/\sqrt{n}})}{\Phi(\frac{\theta-\nu_1}{\sigma/\sqrt{n}})+\Phi(\frac{-\theta-\nu_1}{\sigma/\sqrt{n}})} - \frac{\Phi(\frac{\nu_2-\theta}{\sigma/\sqrt{n}})-\Phi(\frac{\nu_0-\theta}{\sigma/\sqrt{n}})}{\Phi(\frac{\theta-\nu_0}{\sigma/\sqrt{n}})}\\
&=& \frac{\left[ \Phi(\frac{\nu_2-\theta}{\sigma/\sqrt{n}})-\Phi(\frac{\nu_1-\theta}{\sigma/\sqrt{n}}) \right]\Phi(\frac{\theta-\nu_0}{\sigma/\sqrt{n}})-\left[\Phi(\frac{\nu_2-\theta}{\sigma/\sqrt{n}})-\Phi(\frac{\nu_0-\theta}{\sigma/\sqrt{n}})\right]\Phi(\frac{\theta-\nu_1}{\sigma/\sqrt{n}})}{\left[\Phi(\frac{\theta-\nu_1}{\sigma/\sqrt{n}})+\Phi(\frac{-\theta-\nu_1}{\sigma/\sqrt{n}})\right]\Phi(\frac{\theta-\nu_0}{\sigma/\sqrt{n}})}\\
& & - \frac{\left[\Phi(\frac{\nu_2-\theta}{\sigma/\sqrt{n}})-\Phi(\frac{\nu_0-\theta}{\sigma/\sqrt{n}})\right]\Phi(\frac{-\theta-\nu_1}{\sigma/\sqrt{n}})}{\left[\Phi(\frac{\theta-\nu_1}{\sigma/\sqrt{n}})+\Phi(\frac{-\theta-\nu_1}{\sigma/\sqrt{n}})\right]\Phi(\frac{\theta-\nu_0}{\sigma/\sqrt{n}})}\\
&=&\Delta_1(\theta)-\Delta_2(\theta),
\end{eqnarray*}
where the second inequality uses that $z_{\alpha/2}\frac{\tilde{\sigma}(\theta)}{\sqrt{n}}  \leq \nu_2-\theta$ when $\theta \leq c_3$ by Lemma \ref{LEM:boundarypoints2}. Here $\Delta_1(\theta)$ and $\Delta_2(\theta)$ are defined as:
\begin{eqnarray*}
\Delta_1(\theta) &=&\frac{\left[ \Phi(\frac{\nu_2-\theta}{\sigma/\sqrt{n}})-\Phi(\frac{\nu_1-\theta}{\sigma/\sqrt{n}}) \right]\Phi(\frac{\theta-\nu_0}{\sigma/\sqrt{n}})-\left[\Phi(\frac{\nu_2-\theta}{\sigma/\sqrt{n}})-\Phi(\frac{\nu_0-\theta}{\sigma/\sqrt{n}})\right]\Phi(\frac{\theta-\nu_1}{\sigma/\sqrt{n}})}{\left[\Phi(\frac{\theta-\nu_1}{\sigma/\sqrt{n}})+\Phi(\frac{-\theta-\nu_1}{\sigma/\sqrt{n}})\right]\Phi(\frac{\theta-\nu_0}{\sigma/\sqrt{n}})},\\
\Delta_2(\theta) &=& \frac{\left[\Phi(\frac{\nu_2-\theta}{\sigma/\sqrt{n}})-\Phi(\frac{\nu_0-\theta}{\sigma/\sqrt{n}})\right]}{\left[\Phi(\frac{\theta-\nu_1}{\sigma/\sqrt{n}})+\Phi(\frac{-\theta-\nu_1}{\sigma/\sqrt{n}})\right]\Phi(\frac{\theta-\nu_0}{\sigma/\sqrt{n}})}\Phi(\frac{-\theta-\nu_1}{\sigma/\sqrt{n}}).
\end{eqnarray*}
First, it is straightforward to show that $\Delta_1(\theta) > 0$.
Second, $\Delta_2(\theta)$ can be bounded from above by some small value.
In fact,
\begin{eqnarray*}
\Delta_2(\theta) &<& \frac{\left[\Phi(\frac{\nu_2-\theta}{\sigma/\sqrt{n}})-\Phi(\frac{\nu_0-\theta}{\sigma/\sqrt{n}})\right]}{\Phi(\frac{\theta-\nu_1}{\sigma/\sqrt{n}})\Phi(\frac{\theta-\nu_0}{\sigma/\sqrt{n}})} \Phi(\frac{-\theta-\nu_1}{\sigma/\sqrt{n}}) \\
&<& 4\left[\Phi(\frac{\nu_2-\theta}{\sigma/\sqrt{n}})-\Phi(\frac{\nu_0-\theta}{\sigma/\sqrt{n}})\right]\Phi(\frac{-\theta-\nu_1}{\sigma/\sqrt{n}})\\
&<&4\left[1-\frac{\alpha}{2}-\Phi(-\frac{1}{2}z_{\alpha/2})\right]\Phi(-\frac{3}{2}z_{\alpha/2}),
\end{eqnarray*}
where we use that
$ \nu_2-\theta < z_{\alpha/2}\frac{\sigma}{\sqrt{n}}$, $-\frac{1}{2}z_{\alpha/2}\frac{\sigma}{\sqrt{n}}<\nu_0-c_3<\nu_0-\theta$ and $\Phi(\frac{-\theta-\nu_1}{\sigma/\sqrt{n}}) < \Phi(\frac{-\nu_0-\nu_1}{\sigma/\sqrt{n}}) < \Phi(-\frac{3}{2}z_{\alpha/2})$ when $\nu_0 < \theta< c_3$.
Combining the lower bounds for $\Delta_1(\theta)$ and $\Delta_2(\theta)$, we have:
\begin{eqnarray*}
\Delta(\theta) >  -4\left[1-\frac{\alpha}{2}-\Phi(-\frac{1}{2}z_{\alpha/2})\right]\Phi(-\frac{3}{2}z_{\alpha/2}).
\end{eqnarray*}
In fact, the lower bound on the right hand side is quite close to zero.

When $\theta \in \left[c_3, \nu_3 \right]$,
\begin{eqnarray*}
\Delta(\theta)=J_2(\theta)-J_1(\theta)-J_3(\theta)+J_4(\theta),
\end{eqnarray*}
where
\begin{eqnarray*}
J_1(\theta)&=&\frac{\Phi\left(z_{\alpha/2}\frac{\tilde{\sigma}(\theta)}{\sigma}\right)-\Phi\left(\frac{\sqrt{n}}{\sigma}(\nu_0-\theta)\right)}{\Phi\left(\frac{\sqrt{n}}{\sigma}(\theta-\nu_0)\right)+\Phi\left(-\frac{\sqrt{n}}{\sigma}(\theta+\nu_0)\right)},
J_2(\theta)  =\frac{1-\frac{\alpha}{2}-\Phi\left(\frac{\sqrt{n}}{\sigma}(\nu_1-\theta)\right)}{\Phi\left(\frac{\sqrt{n}}{\sigma}(\theta-\nu_1)\right)+\Phi\left(-\frac{\sqrt{n}}{\sigma}(\theta+\nu_1)\right)},\\
J_3(\theta)&=& \frac{1-\frac{\alpha}{2}-\Phi\left(\frac{\sqrt{n}}{\sigma}(\nu_3-\theta)\right)}{\Phi\left(\frac{\sqrt{n}}{\sigma}(\theta-\nu_1)\right)+\Phi\left(-\frac{\sqrt{n}}{\sigma}(\theta+\nu_1)\right)},
J_4(\theta) =\frac{\Phi(z_{\alpha/2}\frac{\tilde{\sigma}(\theta)}{\sigma})-\Phi\left(\frac{\sqrt{n}}{\sigma}(\nu_3-\theta)\right)}{\Phi\left(\frac{\sqrt{n}}{\sigma}(\theta-\nu_1)\right)+\Phi\left(-\frac{\sqrt{n}}{\sigma}(\theta+\nu_1)\right)}.
\end{eqnarray*}
Therefore,
\begin{eqnarray*}
\Delta(\theta)&=&\frac{\Phi(z_{\alpha/2}\frac{\tilde{\sigma}(\theta)}{\sigma})-\Phi(\frac{\nu_1-\theta}{\sigma/\sqrt{n}})}{\Phi(\frac{\theta-\nu_1}{\sigma/\sqrt{n}})+\Phi(\frac{-\theta-\nu_1}{\sigma/\sqrt{n}})}-\frac{\Phi(z_{\alpha/2}\frac{\tilde{\sigma}(\theta)}{\sigma})-\Phi(\frac{\nu_0-\theta}{\sigma/\sqrt{n}})}{\Phi(\frac{\theta-\nu_0}{\sigma/\sqrt{n}})+\Phi(\frac{-\theta-\nu_0}{\sigma/\sqrt{n}})} \\
&=&\frac{\Phi(\frac{\theta-\nu_1}{\sigma/\sqrt{n}})-\Phi(-z_{\alpha/2}\frac{\tilde{\sigma}(\theta)}{\sigma})}{\Phi(\frac{\theta-\nu_1}{\sigma/\sqrt{n}})+\Phi(\frac{-\theta-\nu_1}{\sigma/\sqrt{n}})}-\frac{\Phi(\frac{\theta-\nu_0}{\sigma/\sqrt{n}})-\Phi(-z_{\alpha/2}\frac{\tilde{\sigma}(\theta)}{\sigma})}{\Phi(\frac{\theta-\nu_0}{\sigma/\sqrt{n}})+\Phi(\frac{-\theta-\nu_0}{\sigma/\sqrt{n}})} \\
& > & \frac{\Phi(\frac{\theta-\nu_1}{\sigma/\sqrt{n}})-\Phi(-z_{\alpha/2}\frac{\tilde{\sigma}(\theta)}{\sigma})}{\Phi(\frac{\theta-\nu_1}{\sigma/\sqrt{n}})+\Phi(\frac{-\theta-\nu_1}{\sigma/\sqrt{n}})} -\frac{\Phi(\frac{\theta-\nu_1}{\sigma/\sqrt{n}})-\Phi(-z_{\alpha/2}\frac{\tilde{\sigma}(\theta)}{\sigma})}{\Phi(\frac{\theta-\nu_1}{\sigma/\sqrt{n}})}\\
&&+\frac{\Phi(\frac{\theta-\nu_1}{\sigma/\sqrt{n}})-\Phi(-z_{\alpha/2}\frac{\tilde{\sigma}(\theta)}{\sigma})}{\Phi(\frac{\theta-\nu_1}{\sigma/\sqrt{n}})}-\frac{\Phi(\frac{\theta-\nu_0}{\sigma/\sqrt{n}})-\Phi(-z_{\alpha/2}\frac{\tilde{\sigma}(\theta)}{\sigma})}{\Phi(\frac{\theta-\nu_0}{\sigma/\sqrt{n}})} \\
&=& \frac{\Phi(\frac{\theta-\nu_1}{\sigma/\sqrt{n}})-\Phi(-z_{\alpha/2}\frac{\tilde{\sigma}(\theta)}{\sigma})}{\left[\Phi(\frac{\theta-\nu_1}{\sigma/\sqrt{n}})+\Phi(\frac{-\theta-\nu_1}{\sigma/\sqrt{n}})\right]\Phi(\frac{\theta-\nu_1}{\sigma/\sqrt{n}})} \cdot \left( - \Phi(\frac{-\theta-\nu_1}{\sigma/\sqrt{n}})\right)\\
&& + \Phi(-z_{\alpha/2}\frac{\tilde{\sigma}(\theta)}{\sigma}) \left[ \frac{1}{\Phi(\frac{\theta-\nu_0}{\sigma/\sqrt{n}})} - \frac{1}{\Phi(\frac{\theta-\nu_1}{\sigma/\sqrt{n}})}\right] \\
&>& \frac{\Phi(\frac{\theta-\nu_1}{\sigma/\sqrt{n}})-\Phi(-z_{\alpha/2}\frac{\tilde{\sigma}(\theta)}{\sigma})}{\left[\Phi(\frac{\theta-\nu_1}{\sigma/\sqrt{n}})+\Phi(\frac{-\theta-\nu_1}{\sigma/\sqrt{n}})\right]\Phi(\frac{\theta-\nu_1}{\sigma/\sqrt{n}})} \cdot \left( - \Phi(\frac{-\theta-\nu_1}{\sigma/\sqrt{n}})\right)\\
&>& - 2 \Phi(\frac{-\theta-\nu_1}{\sigma/\sqrt{n}}) >-2\Phi(-\frac{3}{2}z_{\alpha/2}),
\end{eqnarray*}
the second inequality holds since
\begin{eqnarray*}
\Phi(\frac{\theta-\nu_1}{\sigma/\sqrt{n}})> \frac{1}{2},   0 < \frac{\Phi(\frac{\theta-\nu_1}{\sigma/\sqrt{n}})-\Phi(-z_{\alpha/2}\frac{\tilde{\sigma}(\theta)}{\sigma})}{\left[\Phi(\frac{\theta-\nu_1}{\sigma/\sqrt{n}})+\Phi(\frac{-\theta-\nu_1}{\sigma/\sqrt{n}})\right]} <1,
\end{eqnarray*}
and the last inequality holds since $-\theta-\nu_1 < -(z_{\alpha/2} + z_{\tau/2}) \frac{\sigma}{\sqrt{n}} < -\frac{3}{2}z_{\alpha/2} \frac{\sigma}{\sqrt{n}}$, when $\theta \geq c_3 > \sqrt{\lambda} \geq z_{\alpha/2} \frac{\sigma}{\sqrt{n}}$.

When $\theta \in \left[\nu_3, c_2 \right],$ $\Delta(\theta)=J_2(\theta)-J_1(\theta)+J_4(\theta)-J_3(\theta),$ where
\begin{eqnarray*}
J_1(\theta)&=&\frac{\Phi\left(z_{\alpha/2}\frac{\tilde{\sigma}(\theta)}{\sigma}\right)-\Phi\left(\frac{\sqrt{n}}{\sigma}(\nu_0-\theta)\right)}{\Phi\left(\frac{\sqrt{n}}{\sigma}(\theta-\nu_0)\right)+\Phi\left(-\frac{\sqrt{n}}{\sigma}(\theta+\nu_0)\right)},
J_2(\theta) =\frac{1-\alpha}{\Phi\left(\frac{\sqrt{n}}{\sigma}(\theta-\nu_1)\right)+\Phi\left(-\frac{\sqrt{n}}{\sigma}(\theta+\nu_1)\right)},\\
J_3(\theta)&=& \frac{1-\frac{\alpha}{2}-\Phi\left(\frac{\sqrt{n}}{\sigma}(\nu_2-\theta)\right)}{\Phi\left(\frac{\sqrt{n}}{\sigma}(\theta-\nu_1)\right)+\Phi\left(-\frac{\sqrt{n}}{\sigma}(\theta+\nu_1)\right)},
J_4(\theta) =\frac{\Phi(z_{\alpha/2}\frac{\tilde{\sigma}(\theta)}{\sigma})-\Phi\left(\frac{\sqrt{n}}{\sigma}(\nu_2-\theta)\right)}{\Phi\left(\frac{\sqrt{n}}{\sigma}(\theta-\nu_1)\right)+\Phi\left(-\frac{\sqrt{n}}{\sigma}(\theta+\nu_1)\right)}.
\end{eqnarray*}
Therefore,
\begin{eqnarray*}
\Delta(\theta)&=&\frac{\Phi\left(z_{\alpha/2}\frac{\tilde{\sigma}(\theta)}{\sigma}\right)-\frac{\alpha}{2}}{P_s(\theta, \nu_1)}-\frac{\Phi\left(z_{\alpha/2}\frac{\tilde{\sigma}(\theta)}{\sigma}\right)-\Phi\left(\frac{\sqrt{n}}{\sigma}(\nu_0-\theta)\right)}{P_s(\theta,\nu_0)}.
\end{eqnarray*}
Further we have $\Delta(\theta)  >  \Delta_1(\theta) + \Delta_2(\theta)$, where
\begin{eqnarray*}
\Delta_1(\theta) &=& \frac{\Phi\left(z_{\alpha/2}\frac{\tilde{\sigma}(\theta)}{\sigma}\right)-\frac{\alpha}{2}}{P_s(\theta, \nu_1)} -\frac{\Phi\left(z_{\alpha/2}\frac{\tilde{\sigma}(\theta)}{\sigma}\right)-\frac{\alpha}{2}}{\Phi(\frac{\theta-\nu_1}{\sigma/\sqrt{n}})},\\
\mbox{ and }
\Delta_2(\theta) &=&\frac{\Phi\left(z_{\alpha/2}\frac{\tilde{\sigma}(\theta)}{\sigma}\right)-\frac{\alpha}{2}}{\Phi(\frac{\theta-\nu_1}{\sigma/\sqrt{n}})}-
\frac{\Phi\left(z_{\alpha/2}\frac{\tilde{\sigma}(\theta)}{\sigma}\right)-\Phi(\frac{\nu_0-\theta}{\sigma/\sqrt{n}})}{\Phi(\frac{\theta-\nu_0}{\sigma/\sqrt{n}})}.
\end{eqnarray*}
It is straightforward to get a bound for $\Delta_1(\theta).$ In fact,
\begin{eqnarray*}
\Delta_1(\theta) = \frac{\Phi\left(z_{\alpha/2}\frac{\tilde{\sigma}(\theta)}{\sigma}\right)-\frac{\alpha}{2}}{P_s(\theta, \nu_1)\Phi(\frac{\theta-\nu_1}{\sigma/\sqrt{n}})} \cdot \left[-\Phi(\frac{-\theta-\nu_1}{\sigma/\sqrt{n}}) \right], \end{eqnarray*}
here $\Phi\left(z_{\alpha/2}\frac{\tilde{\sigma}(\theta)}{\sigma}\right)-\frac{\alpha}{2} < 1-\alpha,  P_s(\theta, \nu_1)>\Phi(\frac{\theta-\nu_1}{\sigma/\sqrt{n}}) >1-\frac{\alpha}{2}$, and $ -\theta-\nu_1 < -\nu_3- \nu_1 <-2z_{\alpha/2} \sigma/\sqrt{n}$. Therefore,
$\Delta_1(\theta) < 0$ and $\vert \Delta_1(\theta) \vert < \frac{4(1-\alpha)}{\left(2-\alpha\right)^2}\Phi(-2z_{\alpha/2}).$

It takes a few more steps to bound $\Delta_2(\theta)$. In fact,
\begin{eqnarray*}
\Delta_2(\theta) &=& \frac{1-\frac{\alpha}{2}-\Phi\left(-z_{\alpha/2}\frac{\tilde{\sigma}(\theta)}{\sigma}\right)}{\Phi(\frac{\theta-\nu_1}{\sigma/\sqrt{n}})}-
\frac{\Phi(\frac{\theta-\nu_0}{\sigma/\sqrt{n}})-\Phi\left(-z_{\alpha/2}\frac{\tilde{\sigma}(\theta)}{\sigma}\right)}{\Phi(\frac{\theta-\nu_0}{\sigma/\sqrt{n}})} \\
&=& \left[\frac{1-\frac{\alpha}{2}}{\Phi(\frac{\theta-\nu_1}{\sigma/\sqrt{n}})} -1 \right] + \Phi\left(-z_{\alpha/2}\frac{\tilde{\sigma}(\theta)}{\sigma}\right) \left[ \frac{1}{\Phi(\frac{\theta-\nu_0}{\sigma/\sqrt{n}})} - \frac{1}{\Phi(\frac{\theta-\nu_1}{\sigma/\sqrt{n}})} \right]\\
&>& \left[\frac{1-\frac{\alpha}{2}}{\Phi(\frac{\theta-\nu_1}{\sigma/\sqrt{n}})} -1 \right] + \frac{\alpha}{2} \left[ \frac{1}{\Phi(\frac{\theta-\nu_0}{\sigma/\sqrt{n}})} - \frac{1}{\Phi(\frac{\theta-\nu_1}{\sigma/\sqrt{n}})} \right],
\end{eqnarray*}
the inequality holds since $  \Phi\left(-z_{\alpha/2}\frac{\tilde{\sigma}(\theta)}{\sigma}\right)  >\frac{\alpha}{2} $.
It can also be shown that both $ \frac{1}{\Phi(\frac{\theta-\nu_0}{\sigma/\sqrt{n}})} - \frac{1}{\Phi(\frac{\theta-\nu_1}{\sigma/\sqrt{n}})}$ and $\frac{1-\frac{\alpha}{2}}{\Phi(\frac{\theta-\nu_1}{\sigma/\sqrt{n}})} -1$ are decreasing functions of $\theta$, given $\theta > \nu_0 >\nu_1$. Therefore,
\begin{eqnarray*}
\Delta_2(\theta) &>& \left[\frac{1-\frac{\alpha}{2}}{\Phi(\frac{\nu_2-\nu_1}{\sigma/\sqrt{n}})} -1 \right] + \frac{\alpha}{2} \left[ \frac{1}{\Phi(\frac{\nu_2-\nu_0}{\sigma/\sqrt{n}})} - \frac{1}{\Phi(\frac{\nu_2-\nu_1}{\sigma/\sqrt{n}})} \right]\\
&=& \frac{1-\alpha}{\Phi(\frac{\nu_2-\nu_1}{\sigma/\sqrt{n}})} -\frac{1-\alpha}{1-\frac{\alpha}{2}} > -\frac{\alpha(1-\alpha)}{2-\alpha}.
\end{eqnarray*}
Combining the lower bounds of $\Delta_1(\theta)$ and $\Delta_2(\theta)$, the lower bound for $\Delta(\theta)$ is provided by
\begin{eqnarray*}
\Delta(\theta) > -\frac{4(1-\alpha)}{\left(2-\alpha \right)^2}\Phi(-2z_{\alpha/2})-\frac{\alpha(1-\alpha)}{2-\alpha}> -\frac{\alpha}{2}.
\end{eqnarray*}

When $\theta \in \left[c_2, c_4 \right]$,
\begin{eqnarray*}
J_1(\theta)&=&\frac{1-2\Phi\left(-z_{\alpha/2}\frac{\tilde{\sigma}(\theta)}{\sigma}\right)}{\Phi\left(\frac{\sqrt{n}}{\sigma}(\theta-\nu_0)\right)+\Phi\left(-\frac{\sqrt{n}}{\sigma}(\theta+\nu_0)\right)},
J_2(\theta) =\frac{1-\alpha}{\Phi\left(\frac{\sqrt{n}}{\sigma}(\theta-\nu_1)\right)+\Phi\left(-\frac{\sqrt{n}}{\sigma}(\theta+\nu_1)\right)},\\
J_3(\theta)&=& \frac{1-\frac{\alpha}{2}-\Phi\left(\frac{\sqrt{n}}{\sigma}(\nu_2-\theta)\right)}{\Phi\left(\frac{\sqrt{n}}{\sigma}(\theta-\nu_1)\right)+\Phi\left(-\frac{\sqrt{n}}{\sigma}(\theta+\nu_1)\right)},
J_4(\theta) =\frac{\Phi(z_{\alpha/2}\frac{\tilde{\sigma}(\theta)}{\sigma})-\Phi\left(\frac{\sqrt{n}}{\sigma}(\nu_2-\theta)\right)}{\Phi\left(\frac{\sqrt{n}}{\sigma}(\theta-\nu_1)\right)+\Phi\left(-\frac{\sqrt{n}}{\sigma}(\theta+\nu_1)\right)}.
\end{eqnarray*}
Therefore
\begin{eqnarray*}
\Delta(\theta)&=&J_2(\theta)-J_1(\theta)+J_4(\theta)-J_3(\theta)\\
&=&\frac{\Phi\left(z_{\alpha/2}\frac{\tilde{\sigma}(\theta)}{\sigma}\right)-\frac{\alpha}{2}}{P_s(\theta, \nu_1)}-\frac{\Phi\left(z_{\alpha/2}\frac{\tilde{\sigma}(\theta)}{\sigma}\right)-\Phi\left(-z_{\alpha/2}\frac{\tilde{\sigma}(\theta)}{\sigma}\right)}{P_s(\theta,\nu_0)}.
\end{eqnarray*}
Again, $\Delta(\theta) > \Delta_1(\theta) +\Delta_2(\theta)$, where
\begin{eqnarray*}
\Delta_1(\theta) &=& \frac{\Phi\left(z_{\alpha/2}\frac{\tilde{\sigma}(\theta)}{\sigma}\right)-\frac{\alpha}{2}}{P_s(\theta,\nu_1)\Phi\left(\frac{\sqrt{n}}{\sigma}(\theta-\nu_1)\right)}\cdot \left[-\Phi\left(-\frac{\sqrt{n}}{\sigma}(\theta+\nu_1)\right)\right],\\
\mbox{ and }
\Delta_2(\theta) &=& \frac{\Phi\left(z_{\alpha/2}\frac{\tilde{\sigma}(\theta)}{\sigma}\right)-\frac{\alpha}{2}}{\Phi\left(\frac{\sqrt{n}}{\sigma}(\theta-\nu_1)\right)}-\frac{2\Phi\left(z_{\alpha/2}\frac{\tilde{\sigma}(\theta)}{\sigma}\right)-1}{\Phi\left(\frac{\sqrt{n}}{\sigma}(\theta-\nu_0)\right)}.
\end{eqnarray*}
Firstly,
\begin{eqnarray*}
\Delta_1(\theta)< 0 \mbox{ and } \vert \Delta_1(\theta) \vert < \frac{4(1-\alpha)}{(2-\alpha)^2}\Phi(-2z_{\alpha/2}),
\end{eqnarray*}
which holds true  because $\Phi\left(z_{\alpha/2}\frac{\tilde{\sigma}(\theta)}{\sigma}\right)-\frac{\alpha}{2}< 1-\alpha,$
$ P_s(\theta,\nu_1) > \Phi\left(\frac{\sqrt{n}}{\sigma}(\theta-\nu_1)\right)> 1-\frac{\alpha}{2},$ $\frac{1}{2} z_{\alpha/2}< z_{\tau/2},$ and $-\nu_3-\nu_1 = -(z_{\alpha/2}+2z_{\tau/2})\frac{\sigma}{\sqrt{n}}< -2 z_{\alpha/2}\frac{\sigma}{\sqrt{n}}$.

Secondly, when $\theta >c_2,$ it holds that $\theta > \nu_0 + z_{\alpha/2}\frac{\tilde{\sigma}(\theta)}{\sqrt{n}}$ according to Lemma \ref{LEM:boundarypoints2}, and further $\Phi\left(\frac{\sqrt{n}}{\sigma}(\theta-\nu_0)\right)>\Phi\left(z_{\alpha/2}\frac{\tilde{\sigma}(\theta)}{\sigma}\right).$
Therefore,
\begin{eqnarray*}
\Delta_2(\theta) &>& \frac{\Phi\left(z_{\alpha/2}\frac{\tilde{\sigma}(\theta)}{\sigma}\right)-\frac{\alpha}{2}}{\Phi\left(\frac{\sqrt{n}}{\sigma}(\theta-\nu_1)\right)}-\frac{2\Phi\left(z_{\alpha/2}\frac{\tilde{\sigma}(\theta)}{\sigma}\right)-1}{\Phi\left(z_{\alpha/2}\frac{\tilde{\sigma}(\theta)}{\sigma}\right)}\\
&>& \Phi\left(z_{\alpha/2}\frac{\tilde{\sigma}(\theta)}{\sigma}\right)-\frac{\alpha}{2}-\frac{2\Phi\left(z_{\alpha/2}\frac{\tilde{\sigma}(\theta)}{\sigma}\right)-1}{\Phi\left(z_{\alpha/2}\frac{\tilde{\sigma}(\theta)}{\sigma}\right)}\\
& =& \Phi\left(z_{\alpha/2}\frac{\tilde{\sigma}(\theta)}{\sigma}\right)+\frac{1}{ \Phi\left(z_{\alpha/2}\frac{\tilde{\sigma}(\theta)}{\sigma}\right)}-\frac{\alpha}{2}-2.
\end{eqnarray*}
The function on the right hand side is a decreasing function of $\Phi\left(z_{\alpha/2}\frac{\tilde{\sigma}(\theta)}{\sigma}\right)$. Given that $\Phi\left(z_{\alpha/2}\frac{\tilde{\sigma}(\theta)}{\sigma}\right) < 1-\frac{\alpha}{2}$, we have
\begin{eqnarray*}
\Delta_2(\theta)&>& 1-\frac{\alpha}{2} - \frac{1}{1-\frac{\alpha}{2}} -\frac{\alpha}{2}-2 = -\frac{\alpha(1-\alpha)}{2-\alpha}.
\end{eqnarray*}
Combining the lower bounds for $\Delta_1(\theta)$ and $\Delta_2(\theta)$, we have
\begin{eqnarray}
\Delta(\theta) > -\frac{4(1-\alpha)}{(2-\alpha)^2}\Phi(-2z_{\alpha/2})-\frac{\alpha(1-\alpha)}{2-\alpha}.
\end{eqnarray}
This lower bound for $\Delta(\theta)$ is exactly the same with that in the interval $\left[ \nu_3,c_2\right].$

When $\theta \in \left[ c_4, \nu_4 \right]$,
\begin{eqnarray*}
J_1(\theta)&=&\frac{1-2\Phi\left(-z_{\alpha/2}\frac{\tilde{\sigma}(\theta)}{\sigma}\right)}{\Phi\left(\frac{\sqrt{n}}{\sigma}(\theta-\nu_0)\right)+\Phi\left(-\frac{\sqrt{n}}{\sigma}(\theta+\nu_0)\right)},
J_2(\theta)  =\frac{1-\alpha}{\Phi\left(\frac{\sqrt{n}}{\sigma}(\theta-\nu_1)\right)+\Phi\left(-\frac{\sqrt{n}}{\sigma}(\theta+\nu_1)\right)},\\
J_3(\theta)&=& \frac{1-\frac{\alpha}{2}-\Phi\left(\frac{\sqrt{n}}{\sigma}(\nu_2-\theta)\right)}{\Phi\left(\frac{\sqrt{n}}{\sigma}(\theta-\nu_1)\right)+\Phi\left(-\frac{\sqrt{n}}{\sigma}(\theta+\nu_1)\right)},
J_4(\theta) =\frac{1-2\Phi(-z_{\alpha/2}\frac{\tilde{\sigma}(\theta)}{\sigma})}{\Phi\left(\frac{\sqrt{n}}{\sigma}(\theta-\nu_1)\right)+\Phi\left(-\frac{\sqrt{n}}{\sigma}(\theta+\nu_1)\right)}.
\end{eqnarray*}
Therefore
\begin{eqnarray*}
\Delta(\theta)&=&\frac{\Phi\left(\frac{\sqrt{n}}{\sigma}(\nu_2-\theta)\right)-\frac{\alpha}{2}}{P_s(\theta, \nu_1)}+\left[ 1-2\Phi\left(-z_{\alpha/2}\frac{\tilde{\sigma}(\theta)}{\sigma}\right) \right] \left[\frac{1}{P_s(\theta,\nu_1)} -\frac{1}{P_s(\theta,\nu_0)} \right]\\
&> &  \left[ 1-2\Phi\left(-z_{\alpha/2}\frac{\tilde{\sigma}(\theta)}{\sigma}\right) \right]  \left[\frac{1}{P_s(\theta, \nu_1)} -\frac{1}{P_s(\theta, \nu_0)}\right],
\end{eqnarray*}
the inequality holds since
$\nu_2-\theta \geq \nu_2 - \nu_4 = -z_{\alpha/2}\frac{\sigma}{\sqrt{n}}$ when $\theta \leq \nu_4$.

Let  $\Delta_1(\theta) = \left[ 1-2\Phi\left(-z_{\alpha/2}\frac{\tilde{\sigma}(\theta)}{\sigma}\right) \right]  \left[\frac{1}{P_s(\theta, \nu_1)} -\frac{1}{P_s(\theta, \nu_0)}\right].$
We show that $\Delta_1(\theta)$ is negative but quite close to zero.  When $\theta> c_4,$ $P_s(\theta, \nu_1) > P_s(\theta,\nu_0) > \Phi(\frac{3}{2}z_{\alpha/2}),$ and further $P_s(\theta, \nu_1)-P_s(\theta, \nu_0) \in \left( 0,  \Phi(-\frac{3}{2}z_{\alpha/2}) \right).$ Therefore,
\begin{eqnarray*}
0< \frac{1}{P_s(\theta, \nu_0)} -\frac{1}{P_s(\theta, \nu_1)} = \frac{P_s(\theta, \nu_1)-P_s(\theta, \nu_0)}{P_s(\theta, \nu_1)P_s(\theta, \nu_0)} < \frac{\Phi(-\frac{3}{2}z_{\alpha/2})}{\Phi(\frac{3}{2}z_{\alpha/2})^2},
\end{eqnarray*}
together with
$\Phi(z_{\alpha/2}\frac{\tilde{\sigma}(\theta)}{\sigma})-\Phi(-z_{\alpha/2}\frac{\tilde{\sigma}(\theta)}{\sigma}) < 1-\alpha,$ we have
\begin{eqnarray*}
\Delta_1(\theta)<0 \mbox{ and } \vert \Delta_1(\theta) \vert < (1-\alpha)\frac{\Phi(-\frac{3}{2}z_{\alpha/2})}{\Phi(\frac{3}{2}z_{\alpha/2})^2}.
\end{eqnarray*}
Therefore,
\begin{eqnarray*}
\Delta(\theta) > -(1-\alpha)\frac{\Phi(-\frac{3}{2}z_{\alpha/2})}{\Phi(\frac{3}{2}z_{\alpha/2})^2}.
\end{eqnarray*}
 
When $\theta \in [ \nu_4, +\infty )$,
\begin{eqnarray*}
J_1(\theta)&=&\frac{1-2\Phi\left(-z_{\alpha/2}\frac{\tilde{\sigma}(\theta)}{\sigma}\right)}{\Phi\left(\frac{\sqrt{n}}{\sigma}(\theta-\nu_0)\right)+\Phi\left(-\frac{\sqrt{n}}{\sigma}(\theta+\nu_0)\right)},
J_2(\theta) =\frac{1-\alpha}{\Phi\left(\frac{\sqrt{n}}{\sigma}(\theta-\nu_1)\right)+\Phi\left(-\frac{\sqrt{n}}{\sigma}(\theta+\nu_1)\right)},\\
J_3(\theta)&=& \frac{1-\alpha}{\Phi\left(\frac{\sqrt{n}}{\sigma}(\theta-\nu_1)\right)+\Phi\left(-\frac{\sqrt{n}}{\sigma}(\theta+\nu_1)\right)},
J_4(\theta) =\frac{1-2\Phi(-z_{\alpha/2}\frac{\tilde{\sigma}(\theta)}{\sigma})}{\Phi\left(\frac{\sqrt{n}}{\sigma}(\theta-\nu_1)\right)+\Phi\left(-\frac{\sqrt{n}}{\sigma}(\theta+\nu_1)\right)}.
\end{eqnarray*}
Therefore,
\begin{eqnarray*}
\Delta(\theta) &=& \left[ 1-2\Phi\left(-z_{\alpha/2}\frac{\tilde{\sigma}(\theta)}{\sigma}\right)\right]\left[\frac{1}{P_s(\theta, \nu_1)} -\frac{1}{P_s(\theta, \nu_0)}\right].
\end{eqnarray*}
Here $\Delta(\theta)<0$, and
\begin{eqnarray*}
\Delta(\theta)  > \frac{1}{P_s(\theta, \nu_1)}-  \frac{1}{P_s(\theta, \nu_0)} > -\frac{\Phi(-2z_{\alpha/2})}{\Phi(2z_{\alpha/2})},
\end{eqnarray*}
where we use that $\theta-\nu_0 >2z_{\alpha}\frac{\sigma}{\sqrt{n}}$ when $\theta > \nu_4$. In fact, $P_s(\theta, \nu_1) \approx P_s(\theta, \nu_0)$ when $\theta$ gets quite large, thus $\Delta(\theta)   \approx 0.$
The proof of case 1 in Theorem \ref{THM:crdiff} is completed.
\end{proof}

\section*{Acknowledgements}
The authors thank the Co-Editor, an Associate Editor and three referees for their constructive comments that have substantially improved an earlier version of this paper. Part of the work has been done during Dr. Shi's postdoc training supported by Professor Yi Li at University of Michigan. The authors would also like to thank Jessica Minnier and Lu Tian for generously sharing their codes on the perturbation methods and the HIV dataset.  

\begin{supplement}
\sname{Supplement to ``Weak signal identification and inference in penalized model selection"}
\label{suppA} 
\slink[doi]{COMPLETED BY THE TYPESETTER} \sdescription{Due to space constraints, we relegate technical details of the remaining proofs to the supplement.
}
\end{supplement}


\end{document}